\definecolor{darkgreen}{rgb}{0,0.5,0}
\renewcommand*{\backref}[1]{}
\renewcommand*{\backrefalt}[4]{(cited on {%
		\ifcase #1 \relax Not cited.%
		\or p.~#2%
		\else pp. #2%
		\fi%
})}
\title{Hybrid Decision Trees: Longer Quantum Time is Strictly More Powerful}
\author[1]{Xiaoming Sun \thanks{Email: \href{mailto:sunxiaoming@ict.ac.cn}{\tt sunxiaoming@ict.ac.cn}}}
\author[1]{Yufan Zheng \thanks{Email: \href{mailto:lwins.lights@gmail.com}{\tt lwins.lights@gmail.com}}}
\affil[1]{Institute of Computing Technology, Chinese Academy of Sciences, China}
\date{}
\newtheorem{theorem}{Theorem}[section]
\newtheorem{lemma}[theorem]{Lemma}
\newtheorem{meta-theorem}[theorem]{Meta-Theorem}
\newtheorem{claim}[theorem]{Claim}
\newtheorem{corollary}[theorem]{Corollary}
\newtheorem{conjecture}[theorem]{Conjecture}
\newtheorem{hypothesis}[theorem]{Hypothesis}
\newcommand{\qo}[1]{O_{#1}}
\newcommand{\oi}{\{0,1\}}
\newcommand{\bra}[1]{\langle #1 |}
\newcommand{\ket}[1]{|#1 \rangle}
\newcommand{\braket}[2]{\langle #1 | #2 \rangle}
\newcommand{\algo}{\mathcal{A}}
\newcommand{\limit}[1]{${#1}$-limited}
\newcommand{\bad}[1]{${#1}$-bad}
\newcommand{\hyb}[2]{\mathrm{Q}({#1};{#2})}
\newcommand{\qua}[1]{\mathrm{Q}({#1})}
\newcommand{\adeg}[1]{\widetilde{\deg}({#1})}
\newcommand{\cla}[1]{\mathrm{R}({#1})}
\newcommand{\dcla}[1]{\mathrm{D}({#1})}
\newcommand{\bs}[1]{\mathrm{bs}({#1})}
\newcommand{\clad}[1]{\mathrm{R}_{\rm dist}({#1})}
\newcommand{\cer}[1]{\mathrm{C}({#1})}
\newcommand{\bigo}{\mathcal{O}}
\newcommand{\tbigo}{\widetilde{\bigo}}
\newcommand{\tomega}{\widetilde{\Omega}}
\newcommand{\func}[1]{\textsc{#1}}
\newcommand{\mo}[1]{\left| {#1} \right|}
\newcommand{\one}[1]{\mathbf{1}_{#1}}
\newcommand{\inv}[2]{\beta^{(#1)}_{#2}}
\newcommand{\qm}[3]{\pi^{#1}_{#2}(#3)}
\newcommand{\res}[2]{{#1}|_{#2}}
\newcommand{\zeee}{\zeta}
\newcommand{\zee}[1]{z_{#1}}
\newcommand{\ex}[2]{x_{{#1},{#2}}}
\newcommand{\exs}{x}
\newcommand{\map}{\sigma}
\newcommand{\pot}[3]{\Psi_{#1,#2}(#3)}
\newcommand{\diff}[2]{\Delta_{#1}(#2)}
\newcommand{\ketinit}{\ket{\text{\rm init}}}
\newcommand{\pott}[1]{\tilde{\Psi}(#1)}
\newcommand{\cc}[1]{\mathsf{#1}}
\DeclareMathOperator*{\expt}{\mathbb{E}}
\DeclareMathOperator*{\var}{Var}
\DeclareMathOperator{\dist}{dist}
\DeclareMathOperator{\poly}{poly}
\DeclareMathOperator{\polylog}{polylog}
\DeclareMathOperator{\Adv}{Adv^{\pm}}
\begin{document}

	\maketitle
	
	\begin{abstract}
		In this paper, we introduce the hybrid query complexity, denoted as $\hyb{f}{q}$, which is the minimal query number needed to compute $f$, when a classical decision tree is allowed to call $q'$-query quantum subroutines for any $q'\leq q$.
		We present the following results:
		\begin{itemize}
			\item There exists a total Boolean function $f$ such that $\hyb{f}{1} = \tbigo(\cla{f}^{4/5})$.
			\item $\hyb{f}{q} = \Omega(\bs{f}/q + \sqrt{\bs{f}})$ for any Boolean function $f$;
			the lower bound is tight when $f$ is the $\func{Or}$ function.
			\item $\hyb{g \circ \func{Xor}_{C \log n}}{1} = \tomega(\sqrt{n})$ for some sufficiently large constant $C$, where $g := \func{BoolSimon}_n$ is a variant of Simon's problem.
			Note that $\qua{g\circ \func{Xor}_{C \log n}} = \bigo(\polylog n)$.
			Therefore an exponential separation is established.
			Furthermore, this open the road to prove the conjecture $\forall k,\,\hyb{g \circ \func{Xor}_{C \log^{k+1} n}}{\log^{k} n} = \tomega(\sqrt{n})$, which would imply the oracle separation $\cc{HP}(\cc{QSIZE}(n^\alpha))^\mathfrak{O} \subsetneq \cc{BQP}^\mathfrak{O}$ for any $\alpha$, where $\cc{HP}(\cc{QSIZE}(n^\alpha))$ is a complexity class that contains $\cc{BQTIME}(n^\alpha)^{\cc{BPP}}$ and $\cc{BPP}^{\cc{BQTIME}(n^\alpha)}$ in any relativized world.

		\end{itemize}
	\end{abstract}

	\section{Introduction}
	
	Assume $\cc{BPP} \neq \cc{BQP}$, as most researchers in quantum computing community conjectured.
	Then a natural question arises: will we lose computational power compared to $\cc{BQP}$, if the \emph{scale} of quantum subcircuits, whose input and output is classical, is limited, even if we are allowed to insert any $\cc{P}$ circuits?\footnote{Throughout this paper, all circuits are $\cc{P}$-uniform.}
	To be more concrete, define $\cc{HP}(\mathcal{C})$ to be the complexity class consisting of language that can be computed by connecting polynomially many circuits in class $\mathcal{C} := \mathcal{C}(n)$ with wires and polynomially many AND, OR and NOT gates, where $n$ denotes the global input size.
	The following folklore conjecture, which is first proposed formally by Jozsa~\cite{jozsa2006introduction}, says NO to the question above considering \emph{scale} as the depth of circuits.
	
	\begin{conjecture}[Josza's conjecture] \label{conj:josza}
		$\cc{BQP} = \bigcup_i \cc{HP}(\cc{QNC}^i)$.\footnote{Here, $\cc{QNC}^i$ consists of those quantum circuits of size at most $n^i$ and depth at most $\log^i n$. Note that the output of a $\cc{QNC}^i$ circuit should be classical. }
	\end{conjecture}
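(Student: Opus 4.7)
The plan is to prove both directions of the equality. The containment $\bigcup_i \cc{HP}(\cc{QNC}^i) \subseteq \cc{BQP}$ is the easy direction and is essentially a simulation argument: a single $\cc{BQP}$ machine can execute the polynomially many shallow quantum subcircuits sequentially, interleaving them with the classical AND/OR/NOT glue, all within a polynomial overall time budget by absorbing the measurements at the boundaries of each subroutine and inlining the classical wiring.

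The substantive direction is $\cc{BQP} \subseteq \bigcup_i \cc{HP}(\cc{QNC}^i)$. Given a $\poly(n)$-depth, $\poly(n)$-size quantum circuit $U = U_T \cdots U_1$, the goal is to re-express it as a polynomial number of polylog-depth quantum subroutines glued together purely by classical wires. The natural starting point is measurement-based quantum computation (MBQC): prepare a graph/cluster state of suitable size in constant quantum depth, then execute a sequence of adaptive single-qubit measurements whose bases depend on earlier outcomes through a classical polynomial-time controller. Each measurement round is a $\cc{QNC}^0$ circuit with classical output, and the adaptive controller is in $\cc{P}$, so structurally this matches the shape demanded by $\cc{HP}(\cc{QNC}^i)$.

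The central obstacle is that textbook MBQC requires the entire prepared graph state to persist coherently across all $\poly(n)$ measurement rounds, whereas $\cc{HP}(\cc{QNC}^i)$ permits only \emph{classical} bits to flow between consecutive quantum subcircuits. To repair this I would attempt a ``just-in-time'' decomposition: prepare only the piece of the resource state needed for the current round, use gate teleportation to carry logical data into the next block, and track the resulting byproduct Pauli operators entirely on the classical side via Pauli frame tracking. This works cleanly for the Clifford fragment, but the non-Clifford gates (e.g.\ $T$) remain delicate, since their byproducts are not Clifford and feed back nontrivially into subsequent measurement bases.

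The step I expect to be the main obstacle is controlling the quantum depth of each block while simultaneously handling these non-Clifford corrections without introducing a hidden quantum wire between blocks. In effect, one must show that a universal gate set can be implemented in polylog-depth rounds with classical-only inter-round communication, and this is precisely the point at which every known decomposition either smuggles coherence across blocks or blows up the round count super-polynomially. A complete proof would require either a genuinely new compilation scheme beyond standard MBQC, or a structural result showing that Pauli frame propagation for a universal gate set can be absorbed into each $\cc{QNC}^i$ block; the absence of such an argument is precisely why Jozsa's conjecture remains open, and any honest attempt at a proof must confront this gap head-on rather than around it.
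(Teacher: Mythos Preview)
The statement you are attempting to prove is a \emph{conjecture}, not a theorem, and the paper does not prove it. On the contrary, the paper explicitly remarks that neither refuting Conjecture~\ref{conj:josza} nor proving Conjecture~\ref{conj:ours} appears to be within reach, because either would separate $\cc{P}$ from $\cc{BQP}$ and hence $\cc{P}$ from $\cc{PSPACE}$. There is therefore no ``paper's own proof'' to compare your proposal against.

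Your write-up is in fact consistent with this: you correctly isolate the obstruction (carrying quantum information across blocks when only classical wires are allowed, and the failure of Pauli-frame tracking once non-Clifford gates enter) and you end by conceding that ``the absence of such an argument is precisely why Jozsa's conjecture remains open.'' That is an accurate assessment of the state of affairs, but it means what you have written is not a proof proposal at all---it is an explanation of why no proof is currently known. The easy direction $\bigcup_i \cc{HP}(\cc{QNC}^i) \subseteq \cc{BQP}$ is fine, but for the substantive direction you have identified the gap without closing it, and the paper makes no claim to close it either.
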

	
	Conjecture~\ref{conj:josza} is supported by the fact that many nontrivial $\cc{BQP}$ algorithms can be realized by $\cc{HP}( \cc{QNC}^i)$ circuits alternatively~\cite{DBLP:conf/focs/CleveW00,bremner2010classical,DBLP:conf/coco/BoulandFK18}.
	
	One motivation of this paper is to consider \emph{scale} as the size of circuits.
	Unlike the Josza's conjecture above, we believe that:
	\begin{conjecture} \label{conj:ours}
		$\cc{HP}(\cc{QSIZE}(n^{\alpha})) \subsetneq \cc{HP}( \cc{QSIZE}(n^{\beta}))$ for any $1 \leq \alpha < \beta$, where $\cc{QSIZE}(n^{\alpha})$ consists of those quantum circuits of size at most $n^{\alpha}$.
		Specifically, $\cc{HP}( \cc{QSIZE}(n^{\alpha})) \subsetneq \cc{BQP}$.
	\end{conjecture}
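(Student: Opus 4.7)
}
This conjecture asserts an unconditional separation between complexity classes of the same total polynomial size, distinguished only by the length of coherent quantum computation allowed between classical measurements; a direct proof seems well beyond current techniques, so my plan is to identify two plausible routes and isolate the precise barrier each faces. The two routes are (i) a uniform hierarchy theorem for hybrid machines analogous to the quantum time hierarchy theorem, and (ii) derandomization of the relativized separation promised by the hybrid query-complexity lower bounds of this paper.

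For route (i), I would formalize a uniform hybrid machine model in which on input $x$ the machine produces an alternating sequence of classical Turing-machine steps and quantum subcircuit descriptions, subject to the global constraint that each quantum subcircuit has size at most $n^{\alpha}$. A universal simulator for such machines, using Solovay--Kitaev to compile to a fixed gate set, would let one diagonalize in the style of the quantum time hierarchy theorem: enumerate hybrid machines $M_1,M_2,\ldots$, and on input $\langle i,x\rangle$ flip the output of $M_i$ while respecting the $n^{\alpha}$ per-round bound. For the diagonal language to lie in $\cc{HP}(\cc{QSIZE}(n^{\beta}))$ for some $\beta>\alpha$, the universal simulator must incur only a multiplicative blow-up in \emph{per-round} quantum circuit size rather than in total runtime, since standard universal simulators for Turing machines give only time-overhead guarantees. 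Producing a size-preserving universal hybrid simulator is the first serious obstacle of this route, and it is unclear that per-round overhead independent of the simulated machine's scheduling pattern is achievable at all.

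For route (ii), I would take the paper's conjectural hybrid query separation $\hyb{g\circ\func{Xor}_{C\log^{k+1}n}}{\log^{k}n}=\tomega(\sqrt{n})$ versus $\qua{g\circ\func{Xor}_{C\log^{k+1}n}}=\polylog n$ as a black box and first derive the oracle separation $\cc{HP}(\cc{QSIZE}(n^{\alpha}))^{\mathfrak{O}}\subsetneq \cc{BQP}^{\mathfrak{O}}$ by a standard diagonalization over all hybrid machines. I would then attempt to derandomize $\mathfrak{O}$ by replacing it with a uniformly computable function $F:\oi^{*}\to\oi$ that (a) still embeds the Simon-style planted structure that $\cc{BQP}$ exploits and (b) is indistinguishable from a random oracle against every $\cc{HP}(\cc{QSIZE}(n^{\alpha}))$ algorithm that issues at most $n^{O(1)}$ hybrid queries.

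The dominant obstacle, and in my view the actual heart of the conjecture, is the derandomization in step (b): producing an unconditional such $F$ would itself imply non-uniform circuit lower bounds of the form $\cc{BQP}\not\subseteq \cc{HP}(\cc{QSIZE}(n^{\alpha}))/\poly$, which is strictly stronger than the uniform separation we are trying to establish. A realistic intermediate target is therefore the conditional statement: assuming a quantum-secure pseudorandom function computable in $\cc{BQP}$ that fools all $\cc{HP}(\cc{QSIZE}(n^{\alpha}))$-size adversaries, Conjecture~\ref{conj:ours} follows from the paper's hybrid query lower bound together with its conjectural extension to every $k$. In this decomposition the paper supplies the combinatorial core, the cryptographic reduction supplies uniformity, and removing the cryptographic assumption remains the genuine long-term barrier; I see no route that bypasses a breakthrough of comparable strength to existing open problems about quantum non-uniform lower bounds.
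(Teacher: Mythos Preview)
The statement you were asked to prove is labeled a \emph{conjecture} in the paper, and the paper supplies no proof of it. Immediately after stating it, the authors write that proving Conjecture~\ref{conj:ours} ``would separate $\cc{P}$ and $\cc{BQP}$, which would separate $\cc{P}$ and $\cc{PSPACE}$,'' and conclude that it is ``seemingly [not] in our reach at present.'' So there is no paper proof to compare against, and any unconditional proof would be a major breakthrough.

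Your proposal correctly recognizes this and is, appropriately, a research plan with explicit barriers rather than a proof. The paper's own partial contribution toward the conjecture is essentially the first half of your route~(ii): it proves hybrid query lower bounds (Theorems~\ref{thm:por} and~\ref{thm:simon}), formulates Conjecture~\ref{conj:main}, and explains in Section~\ref{subsec:res} how that conjecture would yield the \emph{relativized} version $\cc{HP}(\cc{QSIZE}(n^{\alpha}))^{\mathfrak{O}}\subsetneq \cc{BQP}^{\mathfrak{O}}$ via scaling up and a padding argument. The paper does not attempt anything like your derandomization step~(b) or your route~(i); your identification of the derandomization barrier as requiring non-uniform lower bounds is a sound reason the paper stops at an oracle separation. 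One small correction: you say derandomizing would yield $\cc{BQP}\not\subseteq \cc{HP}(\cc{QSIZE}(n^{\alpha}))/\poly$, but the PRF assumption you state would only give the uniform separation directly; the non-uniform statement would require the PRF to fool non-uniform adversaries, which is a stronger hypothesis than you wrote.
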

	
	Alas, neither \emph{refuting} Conjecture~\ref{conj:josza} nor \emph{proving} Conjecture~\ref{conj:ours} are seemingly in our reach at present, because these would separate $\cc{P}$ and $\cc{BQP}$, which would separate $\cc{P}$ and $\cc{PSPACE}$, which is a notoriously hard problem in the complexity theory.
	On the other side, the oracle separation $\bigcup_i \cc{HP}( \cc{QNC}^i)^{\mathfrak{O}} \subsetneq \cc{BQP}^{\mathfrak{O}}$ is proved very recently, independently by Chia, Chung and Lai~\cite{chia2019need} and Coudron and Menda~\cite{coudron2019computations}, which implies that any attempt to prove Conjecture~\ref{conj:josza} that relativizes is doomed to fail.
	
	While stumbling in computational complexity, the quantum computing community has made huge progress in query complexity and communication complexity, because in corresponding models one is able to show lower bounds by information-theoretic methods.
	In this paper we consider the black-box query model, or the decision tree model.
	We refer to the survey~\cite{buhrman2002complexity} for background of the classical and quantum decision tree model.

	\paragraph{From black-box to oracle separation.}
	There is a well-known technique that converts a separation in the black-box query model to an oracle separation result, which is usually called \emph{scaling everything up by an exponential}.
	This technique is attributed to Baker, Gill and Solovay~\cite{DBLP:journals/siamcomp/BakerGS75}, although they do not point it out explicitly. 
	In~\cite{DBLP:journals/siamcomp/BakerGS75}, they prove
	the seminal result $\exists \mathfrak{O},\, \cc{P}^{\mathfrak{O}} \subsetneq \cc{NP}^{\mathfrak{O}}$ by observing that $\dcla{\func{Or}} = n = \log^{\omega(1)} n$\footnote{$\dcla{f}$ denotes the deterministic query complexity of $f$.} but the query complexity of $\func{Or}$ can be decreased to $1 \leq \polylog n$ if only we introduce nondeterminism.
	Later, Furst, Saxe and Sipser~\cite{DBLP:journals/mst/FurstSS84} and Yao~\cite{DBLP:conf/focs/Yao85} formulated the idea for the polynomial hierarchy $\cc{PH}$:
	they pointed out the close relation between $\cc{PH}$ and $\cc{AC}^0$ circuits.
	For example, a $2^{\log^{\omega(1)} n}$ size lower bound of $\cc{AC}^0$ circuits computing $\func{Xor}$ (resp. $\func{Majority}$) can be scaled up to an oracle separation $ \oplus \cc{P}^{\mathfrak{O}} \not \subseteq \cc{PH}^{\mathfrak{O}}$ (resp. $ \cc{PP}^{\mathfrak{O}} \not \subseteq \cc{PH}^{\mathfrak{O}}$).
	For more details about the technique itself, see~\cite[Section 5.2]{DBLP:journals/corr/abs-quant-ph-0412143} and~\cite[Section 1.2]{DBLP:conf/stoc/Aaronson10}.
	The close relation between the black-box query model and the oracle Turing machine  model (or the circuit model with oracle gates) also motivates us to study the decision tree analog of the computation model related to $\cc{HP}(\cc{QSIZE}(q))$, besides that the analog is interesting in its own right.
	Roughly, the corresponding model is just deterministic decision trees that are allowed to call $q'$-query quantum subroutines for any $q' \leq q$, where $q$ is a parameter that may depend on the input size $n$.
	We give its formal definition in the next section.
	
	\subsection{Computation model}
	For a (possibly partial) function $f$, denote as $\cla{f}$ and $\qua{f}$ its bounded-error ($\epsilon=1/3$) classical and quantum query complexity, respectively.
	For quantum decision trees, in this paper, the query operator $\qo{x} $ associated with the ``queried'' input $x \in \oi^n$ is defined by
	$$
		\qo{x} \ket{i} \ket{b} \ket{\zee{}} = \ket{i} \ket{b \oplus x_i} \ket{\zee{}},
	$$
	where $x_i$ denotes the $i$th bit of $x$.
	We call three register query register, answer register and work register from left to right.
	We only consider real amplitudes, vectors and matrices throughout, which will not affect the power of quantum computing.
	
	\paragraph{Hybrid decision tree.}
	A \emph{hybrid decision tree} $T$ is a rooted tree equipped with some additional information.
	Each internal node of $T$, $v$, has its own small quantum decision tree $\algo_{v}$.
	Recall that a $t$-query quantum decision tree $\algo$ can be specified by unitaries $U_0, U_1, \dots, U_t$.
	The output of $\algo$ under input $x$, denoted as $\algo(x)$, is obtained by measuring the state
	$$
		U_t \qo{x} U_{t-1} \qo{x} \cdots \qo{x} U_1 \qo{x} U_0 \ketinit
	$$
	with the computational basis, where $\ketinit := \ket{0} \ket{0} \ket{0}$.
	Each possible output of $\algo_v$ is associated with one of $v$'s child \emph{exclusively}, i.e., there is a $1$--$1$ correspondence between children and possible outputs.
	Each leaf of $T$ is labeled with a value.
	Given an input $x \in \oi^n$, $T$ is evaluated as follows.
	First, evaluate $\algo_r(x)$, where $r$ is the root of $T$.
	Then, transition to the child of $r$ which is associated with the value of $\algo_r(x)$ we just get, and recursively evaluate the subtree induced by the child.
	The final output of $T$ is the value of the leaf that is reached eventually by the recursive process above.
	The cost of a path is defined as the sum of query complexity of $\algo_v$ for all node $v$ in the path.
	The \emph{cost} of $T$ is defined as the maximal cost of a path that connects the root and a leaf.
	Intuitively, it is the number of queries we need to evaluate $T$ in the worst case.
	
	\paragraph{Hybrid query complexity.}
	Given an integer $q \geq 1$ (which may vary if the input size $n$ changes), we are mainly interested in those hybrid decision trees of which each node has a quantum decision tree that makes no more than $q$ queries.
	We call these trees \limit{q}.
	For any partial Boolean function $f:D \to \oi$ where $D \subseteq \oi^n$, we say a hybrid decision tree $T$ \emph{computes} $f$, if for every input $x \in D$, $T$ outputs $f(x)$ with probability at least $2/3$.
	The \emph{hybrid query complexity}, denote as $\hyb{f}{q}$, is the minimal cost of a \limit{q}\ hybrid decision tree that computes $f$. 
	By definition it is easy to see that $\cla{f} \geq \hyb{f}{q} \geq \qua{f}$ and $\hyb{f}{\qua{f}} = \qua{f}$ for any $f$. 
	
	\subsection{Results} \label{subsec:res}
	
	Now that we have defined the hybrid decision tree model, one may ask: is the computing power of this model \emph{strictly} stronger than the classical one, or weaker than the quantum one?
	The positive answer to the first question comes immediately, because of the Forrelation function $\func{For}$, which is a partial function satisfying $\qua{\func{For}} = 1$ and $\cla{\func{For}} = \Omega({\sqrt{n}/\log n})$~\cite{aaronson2018forrelation}, which implies $\hyb{\func{For}}{1} = 1$.
	However, what if we require the function to be total?
	Theorem~\ref{thm:sep} says that the separation still exists.
	\begin{theorem} \label{thm:sep}
		There exists a total Boolean function $f$ such that $\hyb{f}{1} = \tbigo(\cla{f}^{4/5})$.
	\end{theorem}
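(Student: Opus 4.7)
The goal is to construct a total Boolean function $f$ admitting a polynomial hybrid-versus-classical gap, $\hyb{f}{1} = \tbigo(\cla{f}^{4/5})$. A first natural attempt via pure composition with a small gadget yields only a constant-factor speedup: if $h$ is a total function with $\hyb{h}{1}=1$, the Beals--Buhrman--Cleve--Mosca--de Wolf degree bound forces $\cla{h}=O(1)$, and hence $\cla{g\circ h}/\hyb{g\circ h}{1}=O(1)$ for any outer $g$. Consequently, any polynomial speedup must come from a structure where the savings produced by 1-query quantum subroutines are \emph{amortized non-locally across the hybrid tree}, rather than obtained gadget by gadget.

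My plan is to build $f$ as a ``select-and-verify'' object starting from a \emph{partial} function $g$ for which $\hyb{g}{1}=1$ but $\cla{g}$ is polynomially large; the canonical choice is Forrelation, with $\cla{g}=\tomega(\sqrt{n})$. The input of $f$ then consists of $c$ independent copies of $g$'s input together with a cheat-sheet region, in the style of Aaronson--Ben-David--Kothari. A hybrid decision tree evaluates each $g$-copy via one 1-query quantum subroutine (cost $c$), uses the $c$ output bits as an address into the cheat-sheet, and classically verifies a short certificate of length $m$ at that address (cost $m$), giving total cost $c+m$. Any classical algorithm must either solve $c$ copies of $g$ (cost $\Omega(c\cdot\cla{g})$) or search exponentially many cheat-sheet cells, yielding $\cla{f}=\tomega(c\cdot\cla{g})$.

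Concretely, the steps would be: (i) specify $f$ together with a hybrid decision tree $T$ achieving $\hyb{f}{1}\le\tbigo(N^{a})$ for an appropriate exponent $a$ in the global input size $N$; (ii) prove the classical lower bound $\cla{f}=\tomega(N^{5a/4})$ by lifting the Forrelation lower bound of Aaronson--Ambainis through the cheat-sheet encoding via an adversary argument; and (iii) balance the parameters $c$, $m$, and the Forrelation block size against $N$ to land precisely on the exponent $4/5$.

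The principal obstacle I anticipate is that the verify phase is classical and threatens to dominate $\hyb{f}{1}$: for a generic partial gadget like Forrelation the natural certificate is essentially the transcript of a classical $g$-solver, so $m=\tilde\Theta(\cla{g})$, and the hybrid--classical gap collapses at the cheat-sheet level. Sidestepping this requires either (a) a specialized gadget $g$ with classical certificates substantially shorter than $\cla{g}$, or (b) a bespoke, non-cheat-sheet totalization in which each ``classical-looking'' verification step can itself be spot-checked by 1-query quantum probes. Getting the exponent $4/5$ precisely, rather than a different constant, will boil down to the arithmetic of this balance once the right construction is fixed; I expect this design step and the matching classical lower bound to be the technical heart of the proof.
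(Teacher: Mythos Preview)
Your framework is correct and matches the paper's: start from Forrelation (a partial function with $\hyb{\cdot}{1}=1$ and $\cla{\cdot}=\tomega(\sqrt{n})$), then totalize via the cheat-sheet construction of Aaronson--Ben-David--Kothari. You also correctly isolate the crux: in the vanilla cheat-sheet scheme, the certificate stored in the sheet must witness that each copy lies in the domain, and for Forrelation such a certificate has length $\Theta(n)$, wiping out the gap.

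What your proposal lacks is the concrete fix, which the paper supplies. The idea is not to cheat-sheet Forrelation directly, but to first compose it with an outer total function that has a large gap between certificate complexity and randomized complexity, namely $\func{And-Or}_{m^2}$ (with $\cer{\func{And-Or}_{m^2}}=\bigo(m)$ but $\cla{\func{And-Or}_{m^2}}=\Omega(m^2)$). Crucially, the composition is the relaxed operator $\odot$ rather than $\circ$: $f := \func{And-Or}_{m^2} \odot \func{For}_m$ is declared to be in the domain as soon as the Forrelation instances indexed by \emph{some certificate} of the outer $\func{And-Or}$ are valid. A witness that $x\in\mathrm{dom}(f)$ therefore needs to point to only $\cer{\func{And-Or}_{m^2}}\cdot m = \bigo(m^2)$ input positions, while the hybrid tree still evaluates every inner Forrelation with a single quantum query, giving $\hyb{f}{1}=\tbigo(m^2)$. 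On the other side $\cla{f}=\tomega(m^{2.5})$ (the paper inserts an $\func{Ind}$ layer and invokes the Ben-David--Kothari partial composition theorem rather than the unproven general $\cla{g\circ h}=\Omega(\cla{g}\cla{h})$). Passing to $f_{\rm CS}$ preserves both bounds up to logs, and $m^2 = (m^{2.5})^{4/5}$ gives the stated exponent. So option (a) in your outline is the right one, but the ``specialized gadget'' is not a replacement for Forrelation; it is Forrelation wrapped in $\func{And-Or}$ via $\odot$, which is exactly what makes the domain-membership certificate short without shrinking the classical cost of computing the value.
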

	
	The main tool used in the proof of Theorem~\ref{thm:sep} is the \emph{cheat sheet method} introduced by Aaronson, Ben-David and Kothari~\cite{aaronson2016separations}.
	While the original cheat sheet method does not fit, we are able to slightly modify it for our need.
	We will prove the theorem in Section~\ref{sec:sep}.
	
	On the other hand, Theorem~\ref{thm:por} characterizes the hybrid query complexity of function $\func{PartialOr}$, which is function $\func{Or}$ restricted on input having at most ``$1$''.
	It gives the positive answer to the second question we asked above because $\qua{\func{PartialOr}} = \bigo(\sqrt{n})$ by Grover's search~\cite{DBLP:conf/stoc/Grover96}.
	
	\begin{theorem} \label{thm:por}
		$\hyb{\func{PartialOr}_n}{q} = \Theta(n/q + \sqrt{n})$.
	\end{theorem}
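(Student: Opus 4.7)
The plan is to prove the upper and lower bounds separately, with the upper bound coming from a block-Grover algorithm and the lower bound combining the standard quantum $\sqrt n$ bound with a hybrid BBBV-style coverage argument.

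For the upper bound, I would describe an explicit \limit{q}\ tree of cost $\bigo(n/q+\sqrt n)$. If $q\ge\sqrt n$, a single $\bigo(\sqrt n)$-query Grover subroutine on the whole input suffices. If $q<\sqrt n$, partition $[n]$ into $\lceil n/q^2\rceil$ blocks of size at most $q^2$; for each block run a Grover subroutine of $\bigo(q)$ queries that returns a candidate index, which a subsequent single classical query verifies. A block reports ``marked'' only if the verification confirms, so on $0^n$ every block reports ``not marked'' deterministically, while on $e_i$ the block containing $i$ reports ``marked'' with constant probability. Outputting $1$ iff some block reports ``marked'' yields a correct tree of cost $\bigo((n/q^2)\cdot q) = \bigo(n/q)$.

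For the lower bound, the $\Omega(\sqrt n)$ half is immediate from $\hyb{f}{q}\ge\qua{f}$ and the BBBV bound $\qua{\func{PartialOr}_n}=\Omega(\sqrt n)$. For the $\Omega(n/q)$ half, fix a \limit{q}\ tree of cost $C$ computing $\func{PartialOr}_n$ and examine its execution on $0^n$: it traces a (random) path of subroutines $v_1,\dots,v_k$ with query budgets $q_j := q_{v_j}\le q$ satisfying $\sum_j q_j\le C$. For each $v_j$ and each $i\in[n]$ set $p^j_i := \|\algo_{v_j}(0^n) - \algo_{v_j}(e_i)\|_{TV}$; BBBV's hybrid theorem applied \emph{inside} $\algo_{v_j}$ gives $\sum_i (p^j_i)^2 \le \bigo(q_j^2)$, so the set $S_j := \{i : p^j_i \ge 1/2\}$ of strongly distinguished coordinates satisfies $|S_j| \le \bigo(q_j^2)$. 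I would then argue that correctness of the tree on each $e_i$ forces $\bigcup_j S_j \supseteq [n]$, from which $n \le \sum_j \bigo(q_j^2) \le \bigo(q\sum_j q_j) = \bigo(qC)$, and hence $C = \Omega(n/q)$.

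The main obstacle is justifying the coverage step $\bigcup_j S_j \supseteq [n]$. A naive coupling across the path yields only the weaker conclusion $\sum_j p^j_i \gtrsim 1$, which in principle allows many subroutines with small $p^j_i$ to accumulate enough distinguishing power for $e_i$ without any single $p^j_i$ reaching $1/2$. Ruling out this weak-aggregation regime crucially uses that the intermediate classical measurements between subroutines destroy phase coherence, so weak per-subroutine distinguishing cannot be coherently amplified; I expect to formalize this by converting any purported weak-aggregation algorithm into a coherent $\bigo(\sqrt{qC})$-query quantum algorithm for $\func{PartialOr}_n$ and deriving a contradiction with $\qua{\func{PartialOr}_n} = \Omega(\sqrt n)$, which forces $qC = \Omega(n)$ and thus $C = \Omega(n/q)$ in this regime as well.
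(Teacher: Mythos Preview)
Your upper bound and the $\Omega(\sqrt n)$ half of the lower bound are fine and match the paper. The gap is exactly where you flag it: the coverage claim $\bigcup_j S_j \supseteq [n]$ is not just hard to justify, it is false. Take $q<\sqrt{n}/2$ and let every internal node run the \emph{same} $q$-query subroutine: do $q-1$ Grover iterations over all of $[n]$, measure a candidate, spend one query to verify it, and output only ``found'' or ``not found''. On $0^n$ the output is always ``not found'', so the execution path is the deterministic chain $v_1,\dots,v_d$ with $d=\Theta(n/q^2)$. For every $i$ and every $j$ the TV distance is $p^j_i=\Theta(q^2/n)<1/2$, hence $S_j=\varnothing$ for all $j$; yet the tree computes $\func{PartialOr}_n$ with constant error and cost $\Theta(n/q)$. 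So the strong-threshold argument cannot work, and the weaker conclusion $\sum_j p^j_i=\Omega(1)$ that coupling actually gives, combined with $\sum_i (p^j_i)^2=\bigo(q_j^2)$, yields via Cauchy--Schwarz only $\sum_j q_j\sqrt n=\Omega(n)$, i.e.\ $C=\Omega(\sqrt n)$ again---no improvement.

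Your proposed rescue, converting a ``weak-aggregation'' hybrid tree into a coherent $\bigo(\sqrt{qC})$-query algorithm, is not an argument as stated. Running the tree coherently costs $C$ queries, not $\sqrt{qC}$, and there is no generic amplitude-amplification primitive that compresses an arbitrary sequence of measured $q$-query subroutines by a factor $\sqrt{C/q}$; the intermediate measurements you invoke to forbid coherent amplification are precisely what prevent you from building the coherent object you need. In the Grover-chain example above the conversion happens to exist (it \emph{is} Grover), but that is an accident of that particular tree, not something you have shown for arbitrary trees.

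The paper takes a genuinely different route that sidesteps this obstacle. It passes to the symmetric variant $\func{Find}$, works with a variance-type potential $\Psi(S)=\sum_{v\in S}\var_i[\alpha^{(v)}_i]$ over cuts $S$ of the tree (where $\alpha^{(v)}_i$ is the square root of the arrival probability on $e_i$), and proves $\Psi$ grows by at most $\bigo(q^2/n)$ when the cut is pushed down one level---using the same Boyer et al.\ inequality $\sum_i\mo{\ket{\psi_i}-\ket{\psi_0}}^2\le 4q^2$ you invoke, but applied to the \emph{amplitudes} rather than to thresholded TV distances. The subtlety is that this per-level bound only holds when no single $\alpha^{(v)}_j$ dominates the average; the paper enforces this by a tree-surgery step that splices in auxiliary nodes whenever $\alpha^{(v)}_j\ge C\cdot\expt_i[\alpha^{(v)}_i]$, and adds a second potential term to pay for those insertions. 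This is the missing idea in your approach: a mechanism that prevents the ``spread-thin'' Grover-chain behaviour from evading the accounting.
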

	
	Theorem~\ref{thm:por} is proved by adapting the adversary method~\cite{DBLP:journals/jcss/Ambainis02,hoyer2007negative,lee2011quantum} to our hybrid model.
	We exploit the fact that the upper bound of growth of the potential function  given by the adversary method is not tight \emph{at all steps} for $\func{PartialOr}$ in the proof.\footnote{Let $\Phi(t)$ be the potential function at step $t$. The general adversary method always give a bound that looks like $\Phi(t) = \bigo(\delta t)$ for some $\delta$, while the tight bound may be $\bigo(t^2)$ for $t<\delta$.}
	It is possible that our technique can be further adapted to show nontrivial lower bounds for other functions such as $\func{And} \circ \func{Or}$.
	The theorem will be proved in Section~\ref{sec:or}.
	
	Let $\bs{f}$ be the \emph{block sensitivity} of (possibly partial) function $f$~\cite{buhrman2002complexity}.
	By a well-known reduction that reduces $\func{PartialOr}_{\bs{f}}$ to $f$ (see e.g.~\cite[Section 3.4]{DBLP:journals/cc/NisanS94}), Theorem~\ref{thm:por} implies the following corollary, which gives good bounds for all Boolean function.
	
	\begin{corollary}
		$\hyb{f}{q} = \Omega(\bs{f}/q + \sqrt{\bs{f}})$ for any Boolean function $f$.
	\end{corollary}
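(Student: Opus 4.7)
The plan is to reduce $\func{PartialOr}_{\bs{f}}$ to $f$ in the hybrid query model and then directly invoke Theorem~\ref{thm:por}. Writing $b=\bs{f}$, I would first fix a witness for the block sensitivity: an input $x^\star \in \oi^n$ together with pairwise disjoint blocks $B_1,\dots,B_b \subseteq [n]$ for which $f(x^\star \oplus \one{B_j}) \neq f(x^\star)$ for every $j \in [b]$.

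Next, given any $y \in \oi^b$ promised to have Hamming weight at most one, I would define the encoded input $x^y \in \oi^n$ by setting $(x^y)_i = x^\star_i \oplus y_j$ whenever $i \in B_j$ and $(x^y)_i = x^\star_i$ otherwise, so that $f(x^y) \oplus f(x^\star) = \func{PartialOr}_b(y)$. The key step is then to observe that one query to $x^y$ can be simulated by at most one query to $y$: positions outside $\bigcup_j B_j$ are hard-wired to the known bit $x^\star_i$, while a position $i \in B_j$ is realised by querying $y_j$ and XOR-ing the fixed bit $x^\star_i$ into the answer register. Since this replacement is a fixed permutation on basis states, it lifts coherently through any quantum subroutine, so a $q'$-query subroutine on $x^y$ becomes a $q'$-query subroutine on $y$.

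Putting the pieces together, I would take an optimal \limit{q}\ hybrid decision tree $T$ for $f$ of cost $\hyb{f}{q}$ and replace each node's quantum subroutine by its simulated counterpart while keeping the tree structure, while relabeling each leaf with value $\ell$ by $\ell \oplus f(x^\star)$. The resulting tree remains \limit{q}, has cost at most $\hyb{f}{q}$, and computes $\func{PartialOr}_b$ with error at most $1/3$, so $\hyb{\func{PartialOr}_{\bs{f}}}{q} \leq \hyb{f}{q}$, and Theorem~\ref{thm:por} immediately yields the $\Omega(\bs{f}/q + \sqrt{\bs{f}})$ bound. No serious obstacle is expected here: the only subtlety beyond the textbook classical block-sensitivity reduction is checking that the query-by-query simulation stays coherent within a single quantum subroutine, but since it merely XORs in known fixed bits, this is immediate.
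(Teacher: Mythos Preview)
Your proposal is correct and is precisely the standard Nisan--Szegedy block-sensitivity reduction that the paper invokes (it cites \cite[Section~3.4]{DBLP:journals/cc/NisanS94} rather than spelling it out). The only additional content you supply---verifying that the index-remapping and the XOR of the fixed bits $x^\star_i$ are query-free unitaries, so each $q'$-query subroutine on $x^y$ becomes a $q'$-query subroutine on $y$ and the tree stays \limit{q}---is exactly the routine check the paper leaves implicit.
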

	
	Theorem~\ref{thm:por} is not enough if we want to prove Conjecture~\ref{conj:ours} \emph{relativized to some oracle}.
	For that, we need to find a function $f$ for every integer $k \geq 1$, such that $\hyb{f}{\log^k n} = \log^{\omega(1)} n$, whereas $\qua{f} \leq \polylog n$.
	If there is such an $f$, scaling every up we would get $\cc{HP}(\cc{QSIZE}(n^k))^{\mathfrak{O}_k} \subsetneq \cc{BQP}^{\mathfrak{O}_k} $ for some oracle ${\mathfrak{O}_k}$.
	By merging oracles one get $\cc{HP}(\cc{QSIZE}(n^k))^{\mathfrak{O}} \subsetneq \cc{BQP}^{\mathfrak{O}} $ for some oracle ${\mathfrak{O}}$ and every integer $k \geq 1$.
	Therefore, Conjecture~\ref{conj:ours} follows relativized to ${\mathfrak{O}}$, by a padding argument (see e.g.~\cite[Proof of Theorem 1]{DBLP:journals/jcss/Cook73}).
	We conjecture that one can let $f_n := \func{For}_m \circ \func{Xor}_{C \log^{k+1}m}$, where $\func{Xor}$ denotes the parity function and $C$ is some sufficiently large constant.
	Obviously, $\qua{f} \leq \polylog n$.
	We hope $\hyb{f}{\log^k} = \log^{\omega(1)} n$, which is a corollary of the following conjecture that is more general.
	
	\begin{conjecture} \label{conj:main}
		There exists a universal constant $C$ such that
		$$
		\hyb{f_n \circ \func{Xor}_{C q \log n}}{q} = \Omega(\cla{f_n \circ \func{Xor}_{C q \log n}}) = \Omega(\cla{f_n} \cdot q \log n),
		$$
		where $f$ is any partial Boolean function.\footnote{The second equality holds because $\cla{f \circ \func{Xor}_m} = \Omega(m \cla{f})$ by a proof similar to that of $\cla{f \circ \func{Or}_m} = \Omega(m \cla{f})$ in~\cite{aaronson2016separations}.}
	\end{conjecture}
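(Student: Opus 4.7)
The plan is to extend the adversary potential-function framework developed for Theorem~\ref{thm:por} to handle composition with Xor of sufficient width. The guiding intuition is that $\qua{\func{Xor}_m} = \lceil m/2\rceil$ exactly, so a single $q$-query quantum subroutine applied to an $m = Cq\log n$-bit Xor block, with $q \ll m/2$, has essentially zero advantage in distinguishing the two parities, regardless of the classical output it produces. Consequently, the only way a $q$-limited hybrid tree can pin down a block's parity is to accumulate $\Omega(m) = \Omega(Cq\log n)$ queries on that block across many subroutines, and to resolve $f_n$ it must do this for $\Omega(\cla{f_n})$ blocks, yielding the desired $\Omega(\cla{f_n}\cdot q\log n)$ bound.

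To carry this out, I would first fix a hard distribution for $f_n$: a distribution $\mu$ on pairs $(x,y)\in f_n^{-1}(0)\times f_n^{-1}(1)$ that attains the classical distributional adversary bound $\clad{f_n}$. Each such pair lifts to a coupled distribution on composed inputs $(X,Y)\in\oi^{n\cdot Cq\log n}$ by drawing uniformly random Xor preimages for each block so that block $i$ has parity $x_i$ in $X$ and $y_i$ in $Y$. A hybrid potential $\pott{t}$ at step $t$ would measure the $\mu$-weighted inner product between the algorithm's state under $X$ and under $Y$, mimicking the construction to appear in Section~\ref{sec:or}.

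The core technical lemma is a blockwise increment bound: for any node $v$ of the hybrid tree, the decrease in the potential induced by the $q$-query subroutine $\algo_v$ is at most $O(1/(C\log n))$ times one step of a classical distributional adversary for $f_n$. One would split the $q$ quantum queries by their target block; on any block receiving only $q_i \le q$ queries (with $\sum_i q_i = q$), the observable output of $\algo_v$ is statistically close to that of a classical sampler of the same $q_i$ positions, because $q_i < (Cq\log n)/2$ keeps the parity decoupled from the output. Summing blockwise increments along any root-to-leaf path and invoking the classical adversary bound for $f_n$ then gives the theorem.

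The main obstacle is precisely this blockwise increment bound. A single $q$-query subroutine can spread its queries adaptively across many blocks and output multiple classical bits that non-trivially correlate the parities of different blocks, so naive per-block reasoning loses a factor of $n$. Circumventing this requires a hybrid-style coupling argument that relates the actual quantum state to an ``uninformed'' state representing the tree's prior belief about the block parities, and then invokes the exact quantum parity lower bound on each block in turn. This is essentially a restricted composition theorem for the general adversary method inside the hybrid model; once established, it upgrades Theorems~\ref{thm:sep} and~\ref{thm:por} into the oracle separation $\cc{HP}(\cc{QSIZE}(n^\alpha))^\mathfrak{O} \subsetneq \cc{BQP}^\mathfrak{O}$ announced in Section~\ref{subsec:res}.
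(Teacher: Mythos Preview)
This statement is labeled a \emph{conjecture} in the paper, and the paper explicitly says ``Unfortunately, we are unable to prove this conjecture.'' There is no paper proof to compare against; the paper only establishes the special case $f=\func{BoolSimon}$, $q=1$ (Theorem~\ref{thm:simon}), and Section~\ref{subsec:simon.disc} discusses concrete obstructions to extending that argument. So your proposal should be read as an attack on an open problem, not as an alternative to an existing proof.

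On the approach itself: you build on the \emph{additive} adversary potential from Section~\ref{sec:or}, whereas the paper's partial progress (Section~\ref{sec:simon}) goes through a completely different route---the \emph{multiplicative} adversary method combined with a ``degeneration'' argument showing that a \limit{1} tree can be simulated by a classical tree for $f$. The reason the paper abandons the additive potential here is precisely the phenomenon you flag: a $q$-query subroutine can spread mass over many blocks and return many classical bits correlated with several parities at once, so the per-block increment is not additive in any obvious way. The paper's multiplicative potential $\pot{I}{\zeee}{v}$ is designed to handle exactly this, because the growth factor in Lemma~\ref{lem:grow} depends only on the query magnitude landing in $I$, not on the classical fan-out.

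There is also a concrete barrier your sketch does not touch. Section~\ref{subsec:simon.disc} exhibits a \limit{q} tree in which each node uses $q-1$ queries to read an \emph{address} and one query to probe a bit of the addressed block; after $m$ levels the tree learns a full block parity, yet any ``reasonable'' assignment of per-block query magnitude gives each block only $\bigo(m/2^q)$ accumulated mass. This defeats the natural version of your blockwise increment bound for non-constant $q$: the subroutine's \emph{output} carries information about which block was probed, so the coupling between the actual state and your ``uninformed'' prior state breaks down by a factor exponential in $q$. Any proof of Conjecture~\ref{conj:main} for $q=\omega(1)$ must handle this adaptivity-through-output example, and neither your proposal nor the paper's current techniques do.
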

	
	Unfortunately, we are unable to prove this conjecture.
	Nevertheless, we can handle the case $f = \func{BoolSimon}$ and $q = 1$ in Conjecture~\ref{conj:main}.
	
	\paragraph{Simon's problem.}
	The celebrated Simon's problem~\cite{DBLP:journals/siamcomp/Simon97a} can be abstracted as a partial function, as follows.
	Let $k$ be an integer and $K = 2^k$.
	For each string $x = x_0 x_1 \dots x_{K-1}$ with $x_i \in \{0,\dots,K-1\}$ such that $\forall i \neq j,\ x_i = x_j \iff i \oplus j = a$ holds for some $a$, define $\func{Simon}(x) = a$.
	It is implicit in~\cite{DBLP:journals/siamcomp/Simon97a} that $\qua{\func{Simon}} = \bigo(k)$ and $\cla{\func{Simon}} = \Omega(\sqrt{K})$.
	Define $\func{LSBSimon}(x) = \func{Simon}(x) \bmod 2$.
	Let $\func{BoolSimon}_n$ be the Boolean version of $\func{LSBSimon}$.
	Its input should be a binary string of length $n = Kk$.
	We will prove that $\qua{\func{BoolSimon}} = \bigo(\log^2 n)$ and $\cla{\func{BoolSimon}} = \tomega(\sqrt{n})$ in Section~\ref{subsec:simon.proof}.
	
	\begin{theorem} \label{thm:simon}
		$\hyb{\func{BoolSimon}_n \circ \func{Xor}_{C \log n}}{1} = \tomega(\sqrt{n})$ for some sufficiently large constant $C$.
	\end{theorem}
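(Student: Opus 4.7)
The plan is to deduce the lower bound from the classical lower bound $\cla{\func{BoolSimon}_n} = \tomega(\sqrt{n})$ (stated above and to be established in Section~\ref{subsec:simon.proof}) by extending the adversary-style analysis used to prove Theorem~\ref{thm:por} to the composed function $g \circ \func{Xor}_{C \log n}$, where $g = \func{BoolSimon}_n$.

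The structural fact I would exploit is that the acceptance probability of any $1$-query quantum algorithm is a polynomial of degree at most $2$ in the $N := n \cdot C \log n$ input bits. Expand this polynomial in the block-wise basis: every degree-$\leq 2$ monomial is either contained in a single block of size $m = C \log n$ or crosses exactly two blocks. When the input bits are drawn uniformly subject to each block's parity being fixed to $x_j$ (and $m \geq 3$), each such monomial's conditional expectation equals a constant independent of every $x_j$, since within any pair of bits in one block the marginal is uniform on $\{0,1\}^2$ once $m \geq 3$. Consequently, a single $1$-query subroutine run on the block-wise-uniform distribution consistent with any outer input $x$ has an output distribution that is exactly independent of $x$.

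I would then combine this observation with a hybrid argument across the $T$ subroutines. Compare the real execution on $z$ drawn uniformly subject to the block-XORs matching $x$ against a ``resampled'' execution in which each subroutine receives a fresh independent sample from the same uniform-consistent distribution. By the structural fact, every subroutine of the resampled execution outputs an $x$-independent distribution, so the overall resampled output cannot distinguish YES from NO instances of $g$. It therefore suffices to show that the real and resampled executions are $o(1)$-close in total variation whenever $T = o(\sqrt{n})$, which by telescoping reduces to a per-subroutine bound on how conditioning on the previous subroutine outputs perturbs the block-wise low-order moments of $z$ away from the uniform-consistent baseline.

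The main obstacle will be exactly this per-subroutine bound, because of adaptivity: the $(t{+}1)$-st subroutine is chosen based on the prior outputs $y_{1:t}$, and the posterior distribution of $z$ given $y_{1:t}$ is no longer block-wise uniform-consistent, so the naive Pinsker bound gives only $\tbigo(\sqrt{t})$ per step and is too weak. The crux is to exploit the degree-$2$ constraint so that only the block-wise degree-$\leq 2$ posterior moments matter for the next subroutine's output, and to show, via an $\ell_2$ potential summed over blocks that plays the role of the standard adversary weight scheme, that the aggregate deviation of these moments grows by only $\tbigo(1/\sqrt{n})$ per subroutine, yielding the desired $T = \tomega(\sqrt{n})$ lower bound.
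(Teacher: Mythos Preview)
Your degree-$2$ observation is correct: for $m \geq 3$, the output distribution of any single $1$-query subroutine, averaged over the full input $z$ drawn uniform-consistent with a fixed outer input $x$, is indeed independent of $x$. However, the hybrid argument you build on it does not go through. The step ``real $\approx$ resampled in total variation'' is false, and already fails after $m = C\log n$ steps: take the tree whose $t$-th node classically reads bit $z_{1,t}$ and outputs it, for $t = 1, \ldots, m$. The real transcript is then uniform on $\{s \in \{0,1\}^m : \bigoplus_j s_j = x_1\}$ while the resampled transcript is uniform on all of $\{0,1\}^m$, so their TV distance is $1/2$. More generally, a \limit{1}\ hybrid tree \emph{can} learn individual block parities using $m$ queries each; your proposed $\ell_2$ potential over degree-$\leq 2$ posterior moments jumps by $\Theta(1)$ (not $\tbigo(1/\sqrt n)$) whenever a single bit is read classically, so it cannot deliver the per-step bound you state.

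The flaw is conceptual rather than technical: you are trying to show that the tree learns \emph{nothing} about $x$, whereas the truth is only that it learns \emph{few} coordinates of $x$. The paper's proof makes exactly this weaker claim. It assigns to each node $v$ a per-block query magnitude $\inv{v}{i}$ and declares block $i$ ``revealed'' at $v$ once $\sum_{u\,\text{ancestor of }v} \inv{u}{i} \geq \log n$; a multiplicative-adversary potential (the XOR-lemma machinery applied blockwise) then shows that the parities of unrevealed blocks remain hidden to within $1/\poly(n)$, so the probability of reaching $v$ depends essentially only on the restriction of $x$ to the revealed set $J(v)$. This lets the hybrid tree be simulated by a classical algorithm that queries only the $|J(v)| = \bigo(h)$ revealed coordinates of $x$, after which the bound $\clad{\func{BoolSimon}} = \tomega(\sqrt n)$ finishes. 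Your resampling baseline is the degenerate case $J(v) = \varnothing$ for all $v$, which is why it cannot survive the counterexample above.
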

	
	Note that $\qua{\func{BoolSimon}_n \circ \func{Xor}_{C \log n}} = \bigo(\log^3 n)$.
	Therefore it is a function that \emph{exponentially} separates the quantum query complexity from the hybrid query complexity.
	
	Roughly speaking, the proof strategy for Theorem~\ref{thm:simon} is to force the hybrid decision tree computing the function to degenerate to a classical decision tree.
	The idea is detailed in Section~\ref{subsec:simon.idea} along with a warm-up case.
	The formal proof of Theorem~\ref{thm:simon} lies in Section~\ref{subsec:simon.proof}.
	Finally, we discuss the possibility of proving Conjecture~\ref{conj:main} by generalizing the proof, in Section~\ref{subsec:simon.disc}.
	
	\subsection{Notation and convention}
	A (partial) function $f:D \to \oi$ with $D \subseteq \oi^n$ is called \emph{Boolean}.
	If $D = \oi^n$ we call $f$ a \emph{total} function.
	Sometimes when we refer to \emph{a function} $f$ we actually mean a function family $\{f_n\}_{n \geq 1}$.
	As shown, we will always use subscript ``$n$'' to emphasize the input size of a function $f$ if needed.
	For any two (possibly partial) Boolean functions $f_n$ and $g_m$, define $f \circ g$ by
	$$
	f \circ g(x) = f \circ g(x_{1,1},\dots,x_{n,m}) = f(y_1,\dots,y_n),\quad \forall i,\, y_i = g(x_{i,1},\dots,x_{i,m})
	$$
	for every $x$ that lets the above definition make sense.
	
	The subscript $C$ in $\bigo_C(\cdot)$ means that the constant factor hidden by $\bigo$ depends on $C$.
	We also use subscript this way in other asymptotic notations.
	Notation $\poly(n)$ and $\polylog(n)$ means $\bigo(n^{\bigo(1)})$ and $\bigo(\log^{\bigo(1)} n)$, respectively.
	$\tbigo(g)$ is the shorthand for $\bigo(g \polylog g)$ and $\tomega(g)$ is for $\Omega(g /\polylog g)$.
	Functions $\func{And}$, $\func{Or}$ and $\func{Xor}$ are defined by $\func{And}(x) = \bigwedge_{1 \leq i \leq n} x_i$, $\func{Or}(x) = \bigvee_{1 \leq i \leq n} x_i$ and $\func{Xor}(x) = \bigoplus_{1 \leq i \leq n} x_i$.
	Define $\one{P} = 1$ if proposition $P$ holds and $\one{P} = 0$ if not.
	Define $\expt_{P}[X] = \sum_{P}[X]/\sum_{P}[1]$, $\Pr_{P}[Q] = \expt_{P}[\one{Q}]$ and $\var_{P}[X] = \expt_{P}[X^2] - \expt_{P}[X]^2$.
	For matrices $A$ and $B$, $A \succeq B$ means that $A-B$ is positive semidefinite.
	Relation $\preceq$ is defined similarly.
	
	The terms decision trees and algorithms are used interchangeably.
	We ignore problems caused by non-integer throughout (e.g. some integer parameter is $\log n$, which is not guaranteed to be an integer), because they can be easily handled by truncating or other trivial techniques.
	
	\section{Separation between hybrid and classical query complexity} \label{sec:sep}
	
	In Section~\ref{subsec:sep.idea} we explain the main idea of the proof of Theorem~\ref{thm:sep}. 
	The idea is generally the same as in~\cite{aaronson2016separations}, with some modifications.
	We prove the theorem in Section~\ref{subsec:sep.proof}.
	
	\subsection{Idea} \label{subsec:sep.idea}
	
	We first introduce some functions to be used later.
	Function $\func{And-Or}_n$ is defined as $\func{And}_{\sqrt{n}} \circ \func{Or}_{\sqrt{n}}$.
	The reason we look at this function is that it achieves the best separation between the certificate complexity and the classical query complexity, i.e., $\cer{\func{And-Or}_n} = \bigo(\sqrt{n})$ and $\cla{\func{And-Or}_n} = \Omega({n})$, where $\cer{f}$ denotes the certificate complexity of $f$~\cite{buhrman2002complexity}.
	The Forrelation function, denoted as $\func{For}$, is a partial Boolean function that satisfies $\qua{\func{For}_n} = 1$ and $\cla{\func{For}_n} = \Omega({\sqrt{n}/\log n})$~\cite{aaronson2018forrelation}.
	
	Next we describe an operator that slightly generalizes the composition operator ``$\circ$''.
	Let $f:\oi^n \to \oi$ be a total function and $g$ be a partial Boolean function with input size $m$.
	The function $h = f \odot g$, with input size $nm$, is defined as follows.
	For every $b$-certificate $c \in \{0,1,\ast\}^n$ of $f$, let $f(x_{11} x_{12} \dots x_{nm}) = b$ for every $nm$-bit input $x$ that satisfies $g(x_{i1}x_{i2} \dots x_{im}) = c_i$ for every $c_i \in \oi$.
	It is easy to see that $f \odot g$ contains $f \circ g$ as a subfunction.
	
	Let $g := \func{And-Or}_{m^2}$.
	Now consider the partial function $f := g \odot \func{For}_m$.
	It is straightforward that $\hyb{g \circ \func{For}_m}{1} = \bigo(m^2 \log m)$: For each input of $g$, we run the algorithm that computes the corresponding $\func{For}_m$ (with probability at least $2/3$) $10 \log m$ times to amplify the success probability to at least $1 - 1/m^3$.
	Then, by a union bound we are able to compute $g \circ \func{For}_m$ classically with probability at least $2/3$.
	The same algorithm can be also perfectly applied to $f$.
	Therefore $\hyb{f}{1} = \bigo(m^2 \log m)$.
	As for $\cla{f}$, intuitively it should be the case that 
	$$
		\cla{f} \geq \cla{g \circ \func{For}_m} = \Omega(\cla{g} \cdot \cla{\func{For}_m}) = \Omega(m^{2.5}/\log m).
	$$ 
	However, the first equality relies on a general composition theorem that claims $\cla{p \circ q} = \Omega(\cla{p} \cdot \cla{q})$, which is unknown to be true at present.
	Nevertheless, we will fix this issue later by applying a slightly weaker version of composition theorem proved by Ben-David and Kothari~\cite{ben2016randomized}.
	
	Our construction using ``$\odot$'' gives $f$ a nice property: An input $x$ is in the domain of $f$ if and only if there exist $k = \bigo(m^2)$ positions $i_1, \dots, i_k$,  such that one can verify that $x$ is indeed in the domain, given only $x_{i_1}, \dots, x_{i_k}$.
	We call $(i_1, \dots, i_k)$ a \emph{certificate} of the fact that $x$ is in the domain.
	
	Set $n = m^3$ to be the input length of $f$.
	Let $f_{\rm CS}$ be the cheat sheet version of $f$ that consists of $10 \log n$ copies of $f$, taking input $(x^{(1)}, \dots, x^{(10\log n)}, Y_0, \dots, Y_{n^{10}-1})$, where $x^{(i)}$ is of length $n$ and $Y_i$ is of length $\bigo(m^2 \log^2 n)$.
	Recall from~\cite{aaronson2016separations} that $f_{\rm CS}(x^{(1)}, \dots, x^{(10\log n)}, Y_0, \dots, Y_{n^{10}-1}) = 1$ if and only if the following condition holds:
	\begin{enumerate}[label=(\roman*)]
		\item $x^{(i)}$ is in the domain of $f$ for $i = 1,\dots,10 \log n$.
		\item Let $l_i := f(x^{(i)})$ and $l = \sum_{1 \leq i \leq 10 \log n} l_i \cdot 2^{i-1}$. 
		$Y_l$ consists of certificates of the fact that $x^{(i)}$ is in the domain of $f$, for $i = 1,\dots,10 \log n$.\footnote{Condition (ii) is seemingly slightly different to the one in~\cite{aaronson2016separations}. However, two definitions are the same for our specific $f$.}
	\end{enumerate}
	Now let us examine the query complexity of $f_{\rm CS}$.
	For the hybrid one we have the following algorithm.
	First, feed $x^{(i)}$ into the algorithm that computes $f$ and get the output $l_i$, for each $i$, regardless of the fact that $x^{(i)}$ may not be in the domain of $f$.
	Then, classically query entire $Y_l$, where $l = \sum_{1 \leq i \leq 10 \log n} l_i \cdot 2^{i-1}$ and $l_1,\dots,l_{10 \log n}$.
	The algorithm returns $1$ if and only if it verifies that the condition (ii) above indeed holds.
	Therefore, $\hyb{f_{\rm CS}}{1} = \tbigo(m^2)$.
	On the other hand, intuitively the only efficient way to find the correct $Y_l$ is to compute $f(x^{(i)})$ for every $i$.
	So we expect $\cla{f_{\rm CS}} = \tomega({m^{{2.5}}})$.
	Hence we get the separation desired.
	
	\subsection{Proof of Theorem~\ref{thm:sep}}  \label{subsec:sep.proof}
	
	We plan to use the following theorem instead of the unproven composition theorem of classical query complexity.
	\begin{theorem}[Ben-David and Kothari~\cite{ben2016randomized}] \label{thm:BK}
		Let $f$ and $g$ be partial Boolean functions and let $\func{Ind}$ be the index function\footnote{In some literature, $\func{Ind}$ is called the address function. We omit the definition here since it is not relevant for our purpose.}.
		Let $m \in \Omega(\cla{g}^{{1.1}}) \cap \poly(\cla{g})$, which implies $\cla{\func{Ind}_m} = \Theta(\log \cla{g})$.
		Then,
		$$
			\cla{f \circ \func{Ind}_m \circ g} = \Omega(\cla{f} \cla{\func{Ind}_m} \cla{g}) = \Omega(\cla{f} \cla{g} \log \cla{g}).
		$$
	\end{theorem}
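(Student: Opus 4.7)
The plan is to follow the sabotage-complexity framework of Ben-David and Kothari. Define the sabotage complexity $\mathrm{RS}(h)$ of a (partial) Boolean function $h$ to be the bounded-error randomized query complexity of the following promise problem: given a partial assignment that is simultaneously completable to an input in $h^{-1}(0)$ and to an input in $h^{-1}(1)$, decide which side the conflict forces. The core of the argument is the one-sided composition inequality
\begin{equation} \label{eq:BKcore}
\cla{F \circ h} \;=\; \Omega\bigl(\cla{F}\cdot \mathrm{RS}(h)\bigr),
\end{equation}
valid for any outer $F$ and any inner $h$. I would prove~\eqref{eq:BKcore} by a minimax/hybrid argument: fix via Yao's minimax principle a hard distribution $\mu$ on $F$'s inputs witnessing $\cla{F}$; any randomized algorithm $A$ for $F\circ h$ on the corresponding pushforward distribution can be parsed, block by block, into an outer classical protocol for $F$ whose $i$-th ``outer query'' is answered by an inner sub-routine that, in expectation, solves a fresh sabotage instance of $h$. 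A potential/amortization argument then charges at least $\mathrm{RS}(h)$ inner queries to each outer query, yielding~\eqref{eq:BKcore}.

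Next I would apply~\eqref{eq:BKcore} with $F=f$ and inner function $\func{Ind}_m\circ g$, obtaining $\cla{f\circ \func{Ind}_m\circ g} = \Omega(\cla{f}\cdot \mathrm{RS}(\func{Ind}_m\circ g))$. The remaining target $\mathrm{RS}(\func{Ind}_m\circ g) = \tomega(\log m\cdot \cla{g})$ decomposes into three ingredients: (a) multiplicativity of sabotage complexity, $\mathrm{RS}(F\circ h)=\Omega(\mathrm{RS}(F)\mathrm{RS}(h))$, obtained by running a \eqref{eq:BKcore}-style hybrid on the sabotage promise problem itself; (b) the base bound $\mathrm{RS}(\func{Ind}_m)=\Omega(\log m)$, via a binary-search-style adversary forcing the algorithm to pin down an address bit; and (c) $\mathrm{RS}(g)=\tomega(\cla{g})$. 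Ingredient (c) is where partiality of $g$ is delicate, and where the hypothesis $m\in\Omega(\cla{g}^{1.1})\cap\poly(\cla{g})$ plays its role: the polynomial slack in $m$ lets one amplify the error of any hypothetical sabotage solver down to $1/\poly(m)$ by inner repetition, so that the $\func{Ind}_m$ gadget effectively forces the solver to recover $\cla{g}$-complexity worth of information at each of $\Theta(\log m)$ addressed $g$-blocks, robustly against partial ambiguities.

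Chaining the bounds gives $\cla{f\circ\func{Ind}_m\circ g} = \Omega(\cla{f}\cla{g}\log\cla{g})$, matching the claim. The main obstacle is~\eqref{eq:BKcore}: a randomized algorithm for $F\circ h$ may adaptively interleave queries across different inner blocks, so the intuition ``each outer query costs $\mathrm{RS}(h)$ inner queries'' has to be proved through an amortized potential that tracks, at each inner block, the algorithm's evolving posterior on that block's sabotage instance, rather than through a naive block-by-block reduction. Controlling this amortization alongside the error amplification required to make all $\cla{F}$ outer blocks succeed simultaneously with constant probability is the delicate technical core of the Ben-David--Kothari argument, and is the step I expect to be the hardest to carry out in full detail.
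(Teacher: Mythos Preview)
The paper does not prove this theorem at all: it is quoted as a black-box result from Ben-David and Kothari~\cite{ben2016randomized} and used without proof in Section~\ref{subsec:sep.proof}. So there is no ``paper's own proof'' to compare your proposal against.

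That said, your sketch is broadly the Ben-David--Kothari strategy (sabotage complexity plus the one-sided composition inequality $\cla{F\circ h}=\Omega(\cla{F}\cdot\mathrm{RS}(h))$), but ingredient~(c) as you state it is not correct and is not what Ben-David--Kothari do. For a \emph{partial} function $g$, the bound $\mathrm{RS}(g)=\tomega(\cla{g})$ can simply fail; sabotage complexity can be much smaller than randomized complexity in the partial setting, and no amount of error amplification inside $g$ repairs this, because the sabotage promise problem for $g$ is a genuinely different (and potentially easier) problem than computing $g$. The actual argument bypasses (c) entirely: one shows directly that $\mathrm{RS}(\func{Ind}_m\circ g)=\Omega(\cla{g}\cdot\log m)$, exploiting the structure of $\func{Ind}_m$ (any sabotaged input to $\func{Ind}_m\circ g$ forces the algorithm either to distinguish two distinct addresses, costing $\Omega(\log m)$ address bits each of which requires solving a full $g$-instance, or to solve a genuine $g$-instance at the addressed cell). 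The hypothesis $m\in\Omega(\cla{g}^{1.1})\cap\poly(\cla{g})$ is used so that $\log m=\Theta(\log\cla{g})$ and so that the $\poly\log$ losses from amplification are absorbed. If you want to reconstruct the proof, replace your (a)+(b)+(c) decomposition with a direct analysis of $\mathrm{RS}(\func{Ind}_m\circ g)$ against $\cla{g}$.
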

	Recall that $\cla{\func{Ind}_n} = \cer{\func{Ind}_n} = \Theta(\log n)$.
	Analogous to the idea in the previous section, we define the partial function $f = g \odot \func{For}_m$ with $g := \func{And-Or}_{m^2} \circ \func{Ind}_{m^2}$.
	We have $\cla{g} = \tbigo(m^2)$ and $\cer{g} = \tbigo(m)$.
	Set $n = m^5$ to be the input size of $f$.
	We still have $\hyb{f}{1} = \tbigo(m^2)$.
	Since $f$ contains $\func{And-Or}_{m^2} \circ \func{Ind}_{m^2} \circ \func{For}_m$ as a subfunction, applying Theorem~\ref{thm:BK} we get $\cla{f} = \tomega(m^{2.5})$.
	For the cheat sheet version of $f$, $\hyb{f_{\rm CS}}{1} = \tbigo(m^2)$ still holds via the algorithm we have developed.
	On the other hand, we can apply a general theorem on the query complexity of $f_{\rm CS}$ from Aaronson, Ben-David and Kothari~\cite{aaronson2016separations}, which states that $\cla{f_{\rm CS}} = 
	\Omega(\cla{f}/ \log n)$.
	Therefore, $\cla{f_{\rm CS}} = \tomega(m^{2.5})$.
	
	\section{Hybrid query complexity of the OR function} \label{sec:or}
	
	In Section~\ref{subsec:or.idea} we explain the main idea of the proof of Theorem~\ref{thm:por}. 
	Then we prove the theorem in Section~\ref{subsec:or.proof}.
	
	\subsection{Idea} \label{subsec:or.idea}
	
	The upper bound $\hyb{\func{PartialOr}}{q} = \bigo(n/q + \sqrt{n})$ is very easy to show:
	since $\qua{\func{PartialOr}} = \bigo({\sqrt{n}})$, we only need to consider the case $q \leq \sqrt{n}$.
	For that, one can divide input into $n/q^2$ blocks of $q^2$ bits and then resort to Grover's search~\cite{DBLP:conf/stoc/Grover96} to check whether there is an ``$1$'' in each block using $n/q^2 \cdot \bigo(\sqrt{q^2}) = \bigo(n/q)$ queries.
	
	For technical convenience\footnote{Function $\func{Find}$ allows us to write its potential function from the adversary method in a more ``symmetric'' form.}  we introduce a new partial Boolean function $\func{Find}$.
	Its valid inputs are strings of Hamming weight $1$, i.e., the string that contains exactly one ``$1$''.
	Its value is $1$ if and only if the ``$1$'' appears in the first half of the input. A simple reduction shows that:
	\begin{lemma} \label{lem:tri}
		$\hyb{\func{PartialOr}_n}{q} = \Omega(\hyb{\func{Find}_n}{q})$.
	\end{lemma}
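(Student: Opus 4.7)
My plan is to establish the equivalent inequality $\hyb{\func{Find}_n}{q} \leq \hyb{\func{PartialOr}_n}{q}$ by a direct simulation. Given an input $x \in \oi^n$ of Hamming weight exactly $1$ to $\func{Find}_n$, define the derived string $y \in \oi^n$ by $y_i = x_i$ for $i \leq n/2$ and $y_i = 0$ for $i > n/2$. Then $y$ has Hamming weight at most $1$, so it lies in the domain of $\func{PartialOr}_n$, and $\func{PartialOr}_n(y) = 1$ iff the unique ``$1$'' of $x$ lies in the first half, which is exactly $\func{Find}_n(x)$. Hence any hybrid tree for $\func{PartialOr}_n$ that queries $y$ can be turned into a hybrid tree for $\func{Find}_n$ that queries $x$, provided queries to $\qo{y}$ can be simulated by queries to $\qo{x}$ with no overhead.

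Given a \limit{q} hybrid decision tree $T$ computing $\func{PartialOr}_n$ of cost $C$, I would construct $T'$ on the same tree skeleton, replacing each internal node's quantum subroutine $\algo_v$ (which uses oracle $\qo{y}$) by a subroutine $\algo'_v$ that uses oracle $\qo{x}$ instead. Since $y_i$ is determined by $x_i$ together with the known classical index $i$ (namely $y_i = x_i \cdot \one{i \leq n/2}$), each call to $\qo{y}$ can be implemented by a single call to $\qo{x}$ via a standard redirection: introduce a dummy index $0$ with $x_0 := 0$, conditionally swap the query register from $i$ to $0$ whenever $i > n/2$, apply $\qo{x}$, and undo the swap. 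Thus $\algo'_v$ makes no more queries than $\algo_v$, so $T'$ is still \limit{q} with cost at most $C$. Since the simulation exactly reproduces the output distribution of $\algo_v$ on the virtual input $y$, the leaf reached by $T'$ on $x$ is the same as the leaf reached by $T$ on $y$, and correctness of $T$ on $y$ yields correctness of $T'$ on $x$. This gives $\hyb{\func{Find}_n}{q} \leq \hyb{\func{PartialOr}_n}{q}$, which is the claim.

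There is no real obstacle here; the only point requiring a sentence of care is the oracle simulation $\qo{y} \leftarrow \qo{x}$, which is handled by the textbook controlled-oracle trick and costs no extra queries. All other steps are transparent bookkeeping on the tree structure, and in particular the bound $\Omega(\cdot)$ in the lemma statement can be replaced by $\geq$ with constant $1$.
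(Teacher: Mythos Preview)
Your reduction is correct and is precisely the ``simple reduction'' the paper alludes to without spelling out; the zero-out-the-second-half trick together with the standard null-index oracle redirection is exactly what is intended, and your observation that the $\Omega$ can be sharpened to an inequality with constant $1$ is valid under the usual convention that the query register admits a no-op index.
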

	Below we will focus on proving $\hyb{\func{Find}_n}{q} = \Omega(n/q + \sqrt{n})$.
	To ease the lower bound analysis, we say a \limit{q}\ hybrid decision tree $T$ \emph{regular} if 
		every internal node of $T$ has a $q$-query quantum decision tree.
	Note that a \limit{q}\ tree $T$ of cost $c$ can be simulated by a regular \limit{ q}\ tree $T'$ of depth at most $2\lceil c/q \rceil$, via a simple greedy argument.
	Therefore, to prove $\hyb{f}{q} \geq c$, it suffices to show a $2\lceil c/q \rceil$ depth lower bound for any regular \limit{q}\ tree that computes $f$.
	
	To better understand why $\hyb{\func{Find}}{q}$ is large when $q$ is small (i.e. $q \ll \sqrt{n}$), it is beneficial to consider what a $q$-query quantum algorithm can do to help computing $\func{Find}$.
	Recall how the adversary method~\cite{DBLP:journals/jcss/Ambainis02,hoyer2007negative,lee2011quantum} works to lower bound $\qua{\func{Find}}$:
	Denote $e_i$ as the input in which ``$1$'' is in the $i$th position.
	Suppose the quantum decision tree for $\func{Find}$ is specified by unitaries $U_0, U_1, \dots, U_k$.
	Define $\ket{\psi_i^{(t)}} = U_t \qo{e_i} U_{t-1} \qo{e_i} \cdots \qo{e_i} U_1 \qo{e_i} U_0 \ketinit$.
	Choosing an appropriate adversary matrix gives us the potential function
	$$
		\Phi(t) = \sum_{\substack{1 \leq i \leq n/2\\ n/2 < j \leq n}} \left(1 - \braket{\psi_i^{(t)}}{\psi_j^{(t)}} \right). 
	$$
	We modify it to be more symmetric to handle. Let
	$$
	\Phi(t) := \frac{1}{n^2} \sum_{\substack{1 \leq i,j \leq n}} \left(1 - \braket{\psi_i^{(t)}}{\psi_j^{(t)}} \right) = \frac{1}{2n^2}  \sum_{\substack{1 \leq i,j \leq n}} \mo{\ket{\psi_i^{(t)}} - \ket{\psi_j^{(t)}}}^2.
	$$
	Clearly $\Phi(0) = 0$ and $\Phi(k) = \Omega(1)$ since the algorithm computes $f$ with probability at least $2/3$.
	A standard argument then follows to show that $\Phi(t) - \Phi(t-1) = \bigo(1/\sqrt{n})$, which implies $k = \Omega(\sqrt{n})$.
	
	Now, for any $q$-query algorithm, the above argument gives $\Phi(q) = \bigo(q /\sqrt{n})$.
	However this bound is far from being tight.
	In fact, for this specific potential function, it is known that $\Phi(q) = \bigo(q^2/n)$~\cite{boyer1998tight}. 
	This enlighten us to design a potential function $\Psi(h)$ analogously for the hybrid decision tree.
	We expect $\Psi(h)$ to represent the ``progress'' the tree already made, when one reaches a depth-$h$ node while evaluating the tree.
	We hope that $\Psi(h) - \Psi(h-1) = \bigo(q^2/n)$, i.e., the quantum ``progress'' accumulates \emph{additively}, from which we would obtain $h = \Omega(n/q^2)$ if $\Psi(h) = \Omega(1)$.
	Thus, the regular \limit{q}\ hybrid decision tree computing $f$ should have depth at least $\Omega(n/q^2)$.
	
	Arguably, the most natural way to define $\Psi$ is as follows.
	Suppose we have a regular \limit{q}\ tree $T$.
	\newcommand{\pb}[2]{\alpha^{(#1)}_{#2}}
	Denote $\pb{v}{i}$ as \emph{square root of} the probability that node $v$ is reached if the input is $e_i$ when evaluating $T$.
	Let
	$$
		\Psi(h) := \frac{1}{2n^2} \sum_{\substack{v :\,\text{$v$ has depth $h$} \\ 1 \leq i,j \leq n}} \left( \pb{v}{i} - \pb{v}{j} \right)^2.
	$$
	Unfortunately, $\Psi(h) - \Psi(h-1)$ may as large as $\Theta(q/\sqrt{n})$ with this definition.
	However, if for every node $v$, $\pb{v}{1}, \dots, \pb{v}{n}$ are in $\bigo( \expt_{1 \leq i \leq n} [ \pb{v}{i} ] )$, then we do have $\Psi(h) - \Psi(h-1) = \bigo(q^2/n)$.
	While this condition may not hold for the original decision tree, we can modify the tree such that all nodes concerned satisfy the condition, in a way that will not decrease the success probability of computing $\func{Find}$.
	More concretely, if there is a node $v$ in the original tree and some $j$ such that $\pb{v}{j} \gg  \expt_{1 \leq i \leq n} [ \pb{v}{i} ] $, we insert a node $u$ between $v$ and its parent.
	When the input is $e_j$, $u$ transitions to a leaf that is labeled with value $\func{Find}(e_j)$.
	Otherwise, it transitions to $v$.
	The only remaining work is to upper bound the number of insertion so that we are guaranteed that most ``work'' (i.e. growth of $\Psi$) of computing $\func{Find}$ is done by the original nodes but not nodes we inserted.
	
	
	\subsection{Proof of Theorem~\ref{thm:por}} \label{subsec:or.proof}
	
	The upper bound is obtained by the algorithm discussed in the previous section.
	We prove $\hyb{\func{Find}}{q} = \Omega(n/q + \sqrt{n})$ for $q \leq \sqrt{n}$ below.
	This suffices due to Lemma~\ref{lem:tri} and $\hyb{\func{Find}}{q} \geq \qua{\func{Find}} = \Omega(\sqrt{n})$.
	
	Now suppose we have a regular \limit{q}\ hybrid decision tree $T$ of depth $h$ that computes $\func{Find}$.
	\newcommand{\con}{C}
	Let $\con>2$ be a constant.
	
	\paragraph{Tree surgery.}
	We repeatedly apply the following process to $T$ whenever there is an \emph{original internal} node $v$ and some $j$ such that $\pb{v}{j} \geq \con \cdot \expt_{1 \leq i \leq n} [ \pb{v}{i} ]$.
	Let $w$ be $v$'s parent.
	We modify $w$ such that it will transition to a new node $u$, if $\algo_w(x)$ returns a value that originally associated with $v$ when evaluating $T$.
	We define $\algo_u$ such that $u$ transitions to $v$ if $x \neq e_j$ and transitions to a new leaf node $\ell$ labeled with value $\func{Find}(e_j)$ if $x = e_j$.
	We call $u$ and $\ell$ inserted this way \emph{auxiliary} nodes.
	
	\paragraph{Potential function.}
	Let $S$ be a set of nodes in $T$ such that any path connecting the root and a leaf passes through exactly one node in $S$.
	We call $S$ a \emph{cut} of $T$.
	Define
	\begin{align*}
		\Psi(S) &= \Psi_1(S) + \Psi_2(S) \\
		\Psi_1(S) &= \sum_{v \in S} \var_{1 \leq i \leq n}[\pb{v}{i}] \\
		\Psi_2(S) &= \frac{2}{\con-2} \cdot \sum_{\substack{v \in S \\ \text{$v$ is not an auxiliary leaf}}} \var_{1 \leq i \leq n}[\pb{v}{i}]
	\end{align*}
	Let $S_{\rm leaf}$ be the collection of all leaves of $T$.
	Note that $$\Psi_1(S) = \dfrac{1}{2n^2} \sum_{\substack{v \in S \\ 1 \leq i,j \leq n}} \left( \pb{v}{i} - \pb{v}{j} \right)^2 \geq \dfrac{1}{2n^2} \sum_{\substack{v \in S \\ 1 \leq i \leq n/2 \\ n/2<j \leq n}} \left( \pb{v}{i} - \pb{v}{j} \right)^2.$$
	View $T$ as a pure quantum decision tree.
	Since $T$ computes $\func{Find}$ with probability at least $2/3$, we obtain $\Psi_1(S_{\rm leaf}) \geq \frac{1}{2n^2} \Omega(n^2) = \Omega(1)$ via the output condition for the adversary method~\cite{hoyer2007negative,lee2011quantum}.
	
	\paragraph{Growth of $\Psi$.}
	We dynamically maintain a cut $S$ and ``push'' it down step by step to $S_{\rm leaf}$.
	During this process we bound the growth of $\Psi(S)$.
	Initially, let $S \leftarrow \{r\}$ where $r$ is the root.
	Clearly $\Psi(S) = 0$.
	We repeatedly update $S$ using the following rules:
	\begin{itemize}
		\item {\bf Rule 1.} If there is an internal auxiliary node $v \in S$, delete $v$ from $S$ and add all its children into $S$.
		\item {\bf Rule 2.} If $S$ does not contain any internal auxiliary node. Replace $S$ by the set 
		$$
			\{v \in S : v \text{ is a leaf} \} \cup \{w : w \text{ is a child of an internal node $v \in S$} \}.
		$$
	\end{itemize}
	Since the original tree is of depth $h$,
	we will get $S = S_{\rm leaf}$ after applying Rule 1 several times and Rule 2 exactly $h$ times.
	Then, Lemma \ref{lem:rule1} and \ref{lem:rule2} below give $\Psi(S_{\rm leaf}) = \bigo_{\con}(q^2 h/n)$.
	Combine this with $\Psi(S_{\rm leaf}) \geq \Psi_1(S_{\rm leaf}) = \Omega(1)$ we obtain $h = \Omega_{\con}(n/q^2)$, which implies $\hyb{\func{Find}}{q} = \Omega_{\con}(n/q)$, as desired.
	
	\begin{lemma} \label{lem:rule1}
		Rule 1 does not increase $\Psi(S)$.
	\end{lemma}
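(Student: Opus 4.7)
The plan is a direct local variance calculation: show that each application of Rule 1 changes $\Psi$ by a nonpositive amount. Because Rule 1 replaces a single auxiliary internal node $u \in S$ by its two children, it suffices to understand the structure at $u$. Write $\alpha_i := \pb{u}{i}$ and $\mu := \expt_i[\alpha_i]$, and let $j$ be the coordinate for which the surgery step that created $u$ was triggered; by construction $\alpha_j \geq C\mu$. The two children of $u$ are the auxiliary leaf $\ell$ and one other node $v'$ which is \emph{not} an auxiliary leaf---it is either the original internal node $v$ or, if later surgeries were applied to $v$, another auxiliary internal node. Since $u$ transitions deterministically (to $\ell$ on input $e_j$ and to $v'$ on every other basis input), $\pb{\ell}{i} = \alpha_i \one{i = j}$ and $\pb{v'}{i} = \alpha_i \one{i \neq j}$. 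In passing from $S$ to $S' := (S \setminus \{u\}) \cup \{v', \ell\}$, the sum $\Psi_1$ loses $\var_i[\alpha_i]$ and gains $\var_i[\pb{v'}{i}] + \var_i[\pb{\ell}{i}]$, whereas $\Psi_2$ loses $\var_i[\alpha_i]$ and gains only $\var_i[\pb{v'}{i}]$, since the auxiliary leaf $\ell$ is excluded.

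The computation then proceeds by expanding the two variances of interest. A short direct calculation gives
$$
\var_i[\pb{\ell}{i}] = \frac{\alpha_j^2(n-1)}{n^2}, \qquad \var_i[\pb{v'}{i}] - \var_i[\alpha_i] = \frac{\alpha_j}{n}\left(2\mu - \alpha_j - \frac{\alpha_j}{n}\right).
$$
Assembling these with the coefficient $\frac{2}{C-2}$ in front of the $\Psi_2$ contribution and collecting terms yields the factorization
$$
\Psi(S') - \Psi(S) = \frac{2\alpha_j}{n^2(C-2)}\bigl[\, nC\mu - \alpha_j(n + C - 1)\,\bigr].
$$
Substituting the surgery threshold $\alpha_j \geq C\mu$ bounds the bracket by $nC\mu - C\mu(n + C - 1) = -C(C-1)\mu$, which is nonpositive since $C > 2$. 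Hence $\Psi(S') \leq \Psi(S)$.

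The main subtlety---and the reason the lemma is not immediate from the usual adversary-method bound on $\qua{\func{Find}}$---is that $\Psi_1$ alone strictly \emph{increases} under Rule 1, because the freshly exposed auxiliary leaf $\ell$ injects the positive variance $\alpha_j^2(n-1)/n^2$ that was not previously in $S$. The role of $\Psi_2$, weighted precisely by $\frac{2}{C-2}$, is to absorb this increase using the decrease in variance at $v'$ relative to $u$. The weight $\frac{2}{C-2}$ is calibrated so that the absorption holds right down to the boundary case $\alpha_j = C\mu$, which is also why the overall argument needs $C$ to be strictly greater than $2$.
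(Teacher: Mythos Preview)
Your proof is correct and follows the same approach as the paper: a direct local variance computation at the auxiliary node, using $\pb{\ell}{i} = \alpha_i\one{i=j}$, $\pb{v'}{i} = \alpha_i\one{i\neq j}$, and the surgery threshold $\alpha_j \ge C\mu$. The only cosmetic difference is organizational: the paper splits the increment as $\delta_1 + \delta_2$ (the changes in $\Psi_1$ and $\Psi_2$ respectively), sets $\lambda := \alpha_j/n$, and shows $\delta_1 = 2\lambda\mu - 2\lambda^2 \le 2\lambda\mu$ and $\delta_2 = \tfrac{2}{C-2}\bigl(2\lambda\mu - (n+1)\lambda^2\bigr) \le -2\lambda\mu$ separately, whereas you carry the full expression through to the single factorization $\tfrac{2\alpha_j}{n^2(C-2)}\bigl[nC\mu - \alpha_j(n+C-1)\bigr]$ and then apply the threshold once; the underlying calculation is identical.
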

	
	\begin{lemma} \label{lem:rule2}
		Rule 2 increases $\Psi(S)$ by at most $\left(1+\frac{2}{\con -2} \right) 4C^2 \cdot (q^2/n)$.
	\end{lemma}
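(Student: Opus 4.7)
The plan is to analyze how $\Psi(S)$ changes when Rule 2 simultaneously replaces each internal node of $S$ by its children. By the precondition of Rule 2, every internal $v \in S$ is an original node, so its $q$-query quantum subroutine $\algo_{v}$ is well-defined. Abbreviating $\alpha_{i} := \pb{v}{i}$ and writing $\beta_{k,i} := \sqrt{p^{(v)}_{k}(e_{i})}$ for the probability that $\algo_{v}$ outputs $k$ on input $e_{i}$, the child $w$ of $v$ corresponding to output $k$ satisfies $\pb{w}{i} = \alpha_{i}\beta_{k,i}$. A crucial structural observation is that an auxiliary leaf can only arise as a child of an auxiliary internal node (by construction of the surgery), so no child of $v$ is an auxiliary leaf, and $v$ and each of its children contribute to $\Psi$ with the same weight $1 + \tfrac{2}{\con-2}$. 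Consequently,
$$
\Delta\Psi \;=\; \Bigl(1 + \tfrac{2}{\con-2}\Bigr) \sum_{\substack{v \in S\\ v\text{ internal}}} \Bigl[\, \sum_{w\text{ child of }v}\var_{1\le i\le n}[\alpha_{i}\beta_{k,i}] \;-\; \var_{1\le i\le n}[\alpha_{i}] \,\Bigr].
$$

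A direct expansion, using $\sum_{k}\beta_{k,i}^{2} = 1$, reduces the bracket to $(\expt_{i}[\alpha_{i}])^{2} - \sum_{k}(\expt_{i}[\alpha_{i}\beta_{k,i}])^{2}$. I rewrite the second term as an expectation over ordered pairs:
$$
\sum_{k}(\expt_{i}[\alpha_{i}\beta_{k,i}])^{2} \;=\; \expt_{i,i'}\Bigl[\alpha_{i}\alpha_{i'}\sum_{k}\beta_{k,i}\beta_{k,i'}\Bigr] \;\ge\; \expt_{i,i'}\bigl[\alpha_{i}\alpha_{i'}\braket{\phi^{(v)}_{i}}{\phi^{(v)}_{i'}}\bigr],
$$
where $\ket{\phi^{(v)}_{i}}$ is the final state of $\algo_{v}$ on input $e_{i}$; the inequality uses $\sum_{k}|\phi_{i,k}||\phi_{i',k}| \ge \sum_{k}\phi_{i,k}\phi_{i',k}$, valid for real amplitudes. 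Combined with $(\expt_{i}[\alpha_{i}])^{2} = \expt_{i,i'}[\alpha_{i}\alpha_{i'}]$ and $1 - \braket{\phi^{(v)}_{i}}{\phi^{(v)}_{i'}} = \tfrac{1}{2}\mo{\ket{\phi^{(v)}_{i}} - \ket{\phi^{(v)}_{i'}}}^{2}$, the bracket is upper-bounded by $\tfrac{1}{2}\expt_{i,i'}\bigl[\alpha_{i}\alpha_{i'}\mo{\ket{\phi^{(v)}_{i}} - \ket{\phi^{(v)}_{i'}}}^{2}\bigr]$.

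The key quantitative step now combines two ingredients. First, since the surgery process has terminated, every original internal $v$ satisfies $\alpha_{i} \le \con\bar\alpha^{(v)}$ for all $i$, where $\bar\alpha^{(v)} := \expt_{i}[\alpha_{i}]$. Second, the standard hybrid argument applied to $\algo_{v}$, introducing the state $\ket{\phi^{(v)}_{0}}$ on the all-zero input, gives $\sum_{i=1}^{n}\mo{\ket{\phi^{(v)}_{i}} - \ket{\phi^{(v)}_{0}}}^{2} \le 2q^{2}$, and the triangle inequality then yields $\expt_{i,i'}\mo{\ket{\phi^{(v)}_{i}} - \ket{\phi^{(v)}_{i'}}}^{2} \le 8q^{2}/n$. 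Combining these, the bracket is at most $\tfrac{\con^{2}(\bar\alpha^{(v)})^{2}}{2}\cdot\tfrac{8q^{2}}{n} = 4\con^{2}(\bar\alpha^{(v)})^{2}\cdot q^{2}/n$.

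It remains to sum over all internal $v \in S$. Since $S$ is a cut, $\sum_{v\in S}(\pb{v}{i})^{2} = 1$ for every $i$, so Jensen's inequality yields $\sum_{v\in S}(\bar\alpha^{(v)})^{2} \le \expt_{i}\bigl[\sum_{v\in S}(\pb{v}{i})^{2}\bigr] = 1$. Therefore $\Delta\Psi \le \bigl(1 + \tfrac{2}{\con-2}\bigr)\cdot 4\con^{2}\cdot q^{2}/n$, as required. I expect the main subtleties to lie in the structural bookkeeping of the first step: verifying that no child of an original internal node is ever an auxiliary leaf (so the $\Psi_{2}$ weight is not lost when descending from $v$ to its children), and confirming that the surgery fixed point indeed leaves every original internal $v$ satisfying $\alpha_{i} \le \con\bar\alpha^{(v)}$ simultaneously, even though surgeries performed at ancestors of $v$ keep zeroing out individual $\pb{v}{j}$ and shrinking $\bar\alpha^{(v)}$ throughout the process.
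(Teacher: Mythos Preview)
Your proof is correct and follows essentially the same route as the paper: expand the variance increment at each original internal node $v$, bound it by $\tfrac{1}{2}\expt_{i,i'}[\alpha_i\alpha_{i'}\|\ket{\phi_i}-\ket{\phi_{i'}}\|^2]$, invoke the surgery condition $\alpha_i\le C\bar\alpha$, apply the BBBV/Boyer et al.\ hybrid bound, and sum over the cut using $\sum_{v\in S}(\pb{v}{i})^2=1$. Your explicit observation that auxiliary leaves only occur as children of auxiliary internal nodes (so every child of an original internal $v$ carries the full $1+\tfrac{2}{C-2}$ weight) is a point the paper leaves implicit in its one-line ``similar analysis for $\Psi_2$''; and your use of $2q^2$ for the hybrid bound together with the triangle-inequality factor $4n$ matches the paper's $4q^2$ together with the variance-identity factor $2n$, so the final constant $4C^2$ agrees.
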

	
	\begin{proof}[Proof of Lemma~\ref{lem:rule1}]
		Denote $\delta_1$ and $\delta_2$ as the increment of $\Psi_1(S)$ and $\Psi_2(S)$ after applying Rule 1, respectively.
		Denote as $v$ the auxiliary node that will be deleted.
		Let $j$ be the one that satisfies $\algo_v(e_j) \neq \algo_v(e_i)$ for any $i \neq j$.
		By definition of auxiliary nodes $\algo_v(e_j)$ is associated with a leaf $\ell$, while $\algo_v(e_i)$ for $i\neq j$ are associated with another node $u$.
		We have $\pb{\ell}{i} = \pb{v}{i} \cdot \one{i=j}$ and $\pb{u}{i} = \pb{v}{i} \cdot \one{i\neq j}$.
		Therefore,
		\begin{align*}
			\delta_1 &= \var_{1 \leq i \leq n} [\pb{u}{i}]+\var_{1 \leq i \leq n} [\pb{\ell}{i}]- \var_{1 \leq i \leq n} [\pb{v}{i}] \\
			&= \expt_{1 \leq i \leq n} [\pb{v}{i}] - \expt_{1 \leq i \leq n} [\pb{u}{i}]- \expt_{1 \leq i \leq n} [\pb{\ell}{i}] \\
			&= \mu^2 - (\mu-\lambda)^2 - \lambda^2 & \text{Let $\mu = \expt_{1 \leq i \leq n} [\pb{v}{i}]$ and $\lambda = \frac{\pb{v}{j}}{n}$} \\
			&= 2\lambda \mu - 2 \lambda^2 \\
			& \leq 2 \lambda \mu. \\
			\delta_2 &= \frac{2}{\con-2} \left( \var_{1 \leq i \leq n} [\pb{u}{i}] - \var_{1 \leq i \leq n} [\pb{v}{i}] \right) \\
			&= \frac{2}{\con-2} \left( 2\lambda \mu - 2 \lambda^2 - \var_{1 \leq i \leq n} [\pb{\ell}{i}]  \right) \\
			&= \frac{2}{\con-2} \left( 2\lambda \mu - 2 \lambda^2 - \lambda(n-1) \right) \\
			&\leq -2 \lambda \mu. & \lambda \geq \frac{\con}{n} \mu
		\end{align*}
	\end{proof}
	
	\begin{proof}[Proof of Lemma~\ref{lem:rule2}]
		Let $S_0$ be $S$ before applying the rule and $v$ be an internal node in $S_0$.
		The increment $v$ contributes to $\Psi_1(S)$ is
		\begin{align*}
			\sum_{u :\, \text{child of $v$}} \var_{1 \leq i \leq n} [\pb{u}{i}] - \var_{1 \leq i \leq n} [\pb{v}{i}] = \frac{1}{2n^2} \sum_{1\leq i,j \leq n} \left( \sum_u \left(\pb{u}{i}-\pb{u}{j}\right)^2 - \left(\pb{v}{i}-\pb{v}{j}\right)^2 \right).
		\end{align*}
		For the $q$-query quantum decision tree $v$ has, define $\ket{\psi_i}$ as the final state when the input is $e_i$.
		Note that the vector $\left( \pb{u}{i} \right)_{u :\, \text{child of $v$}}$ is exactly $\pb{v}{i} \ket{\psi_i}$ with each component taking the absolute value.
		So the right-hand side of above equality is no more than
		\begin{align*}
			& \frac{1}{2n^2} \sum_{1\leq i,j \leq n} \left( \mo{\pb{v}{i}\ket{\psi_i} - \pb{v}{j}\ket{\psi_j} }^2 - \left(\pb{v}{i}-\pb{v}{j}\right)^2 \right) \\
			=& \frac{1}{2n^2} \sum_{1\leq i,j \leq n} \pb{v}{i} \pb{v}{j} \mo{\ket{\psi_i} - \ket{\psi_j} }^2 \\
			\leq & \frac{1}{2n^2} \cdot \left( \con \expt_{1 \leq i \leq n}[\pb{v}{i}] \right)^2 \sum_{1\leq i,j \leq n} \mo{\ket{\psi_i} - \ket{\psi_j} }^2 \\
			\leq & \frac{C^2}{2n^2} \cdot  \expt_{1 \leq i \leq n}[{\pb{v}{i}}^2] \sum_{1\leq i,j \leq n} \mo{\ket{\psi_i} - \ket{\psi_j} }^2 \\ 
			\leq & \frac{C^2}{2n^2} \cdot  \expt_{1 \leq i \leq n}[{\pb{v}{i}}^2]  \cdot 8q^2 n & (\ast) \\
			= & \frac{4C^2q^2}{n} \expt_{1 \leq i \leq n}[{\pb{v}{i}}^2], &
		\end{align*}
		where $(\ast)$ holds because $\sum_{1\leq i,j \leq n} \mo{\ket{\psi_i} - \ket{\psi_j} }^2 \leq 2n\sum_{1\leq i \leq n} \mo{\ket{\psi_i} - \ket{\phi} }^2 \text{ for any $\ket{\phi}$}$ and we have the following claim.
		\begin{claim}[Boyer et al.~\cite{boyer1998tight}]
			For any $q$-query quantum algorithm, denote $\ket{\psi_0}$ as the final state when input $x = \boldsymbol{0} := 0^n$ and $\ket{\psi_i}$ as the final state when $x = e_i$, where $e_i$ is $\boldsymbol{0}$ with $i$th bit flipped.
			Then,
			$$
			\sum_{1 \leq i \leq n} \mo{\ket{\psi_i} - \ket{\psi_0}}^2 \leq 4q^2.
			$$
		\end{claim}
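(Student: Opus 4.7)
The plan is to adapt the standard hybrid argument that Boyer et al.~\cite{boyer1998tight} used in the original analysis of Grover's algorithm. Let $U_0, U_1, \dots, U_q$ be the unitaries specifying the algorithm and, for input $x$ and $0 \le t \le q$, write $\ket{\phi_t(x)} := U_t \qo{x} U_{t-1} \qo{x} \cdots U_1 \qo{x} U_0 \ketinit$, so that $\ket{\phi_q(e_i)} = \ket{\psi_i}$ and $\ket{\phi_q(\boldsymbol{0})} = \ket{\psi_0}$. Define the hybrid error $\ket{\epsilon_{t,i}} := \ket{\phi_t(e_i)} - \ket{\phi_t(\boldsymbol{0})}$, which satisfies $\ket{\epsilon_{0,i}} = 0$ and $\ket{\epsilon_{q,i}} = \ket{\psi_i} - \ket{\psi_0}$. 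First I would bound $\mo{\ket{\epsilon_{t,i}}}$ step by step, and then aggregate the resulting inequalities over $t$ and $i$.

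For the per-step bound, split $\ket{\phi_t(e_i)} - \ket{\phi_t(\boldsymbol{0})} = U_t \qo{e_i} \ket{\epsilon_{t-1,i}} + U_t (\qo{e_i} - \qo{\boldsymbol{0}}) \ket{\phi_{t-1}(\boldsymbol{0})}$. Because unitaries are norm-preserving, the triangle inequality yields $\mo{\ket{\epsilon_{t,i}}} \le \mo{\ket{\epsilon_{t-1,i}}} + \mo{(\qo{e_i} - \qo{\boldsymbol{0}}) \ket{\phi_{t-1}(\boldsymbol{0})}}$. Expanding $\ket{\phi_{t-1}(\boldsymbol{0})} = \sum_{j,b,z} \alpha^{(t)}_{j,b,z} \ket{j}\ket{b}\ket{z}$, the two oracles agree on every basis state with $j \ne i$, so the difference is supported entirely on the component where the query register equals $i$. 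A direct computation using $\mo{\ket{0} - \ket{1}}^2 = 2$ together with $|a-b|^2 \le 2(|a|^2 + |b|^2)$ gives $\mo{(\qo{e_i} - \qo{\boldsymbol{0}}) \ket{\phi_{t-1}(\boldsymbol{0})}}^2 \le 4\, q_{t,i}$, where $q_{t,i} := \sum_{b,z} |\alpha^{(t)}_{i,b,z}|^2$ is the total weight that the query register of $\ket{\phi_{t-1}(\boldsymbol{0})}$ places on index $i$. Hence $\mo{\ket{\epsilon_{t,i}}} \le \mo{\ket{\epsilon_{t-1,i}}} + 2\sqrt{q_{t,i}}$.

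Iterating over $t = 1, \dots, q$ gives $\mo{\ket{\epsilon_{q,i}}} \le 2 \sum_{t=1}^{q} \sqrt{q_{t,i}}$, and Cauchy-Schwarz then produces $\mo{\ket{\psi_i} - \ket{\psi_0}}^2 \le 4 q \sum_{t=1}^{q} q_{t,i}$. Since $\ket{\phi_{t-1}(\boldsymbol{0})}$ is a unit vector, $\sum_{i=1}^{n} q_{t,i} \le 1$ for each $t$, so summing over $i$ yields $\sum_{i=1}^{n} \mo{\ket{\psi_i} - \ket{\psi_0}}^2 \le 4 q \sum_{t=1}^{q} \sum_{i=1}^{n} q_{t,i} \le 4 q^2$, as required. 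No serious obstacle is anticipated here: the argument is essentially textbook, and the only point that warrants care is the constant-$4$ per-step bound, which reduces to a short algebraic inequality on the two amplitudes $\alpha^{(t)}_{i,0,z}$ and $\alpha^{(t)}_{i,1,z}$.
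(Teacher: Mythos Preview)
Your argument is correct and is precisely the standard hybrid argument of Boyer et al.\ that the paper invokes; the paper itself does not reproduce a proof but simply cites the claim, so there is nothing further to compare.
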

		
		Therefore, the total increment of $\Psi_1(S)$ must be at most 
		$$
			\sum_{v \in S_0} \frac{4C^2q^2}{n} \expt_{1 \leq i \leq n}[{\pb{v}{i}}^2] = \frac{4C^2q^2}{n} \expt_{1 \leq i \leq n}[\sum_{v \in S_0} {\pb{v}{i}}^2] = \frac{4C^2q^2}{n}.
		$$
		Doing the similar analysis for $\Psi_2(S)$, we can upper bound the increment of $\Psi(S)$ by $\left(1+\frac{2}{\con -2} \right) \cdot \frac{4C^2q^2}{n}$.
	\end{proof}
	
	\section{Separation between hybrid and quantum query complexity} \label{sec:simon}
	
	In Section~\ref{subsec:simon.idea} we explain the main idea of the proof of Theorem~\ref{thm:simon}. 
	Then we prove the theorem in Section~\ref{subsec:simon.proof}.
	Finally, we discuss the extendability of the proof, in Section~\ref{subsec:simon.disc}.
	
	\subsection{Idea} \label{subsec:simon.idea}
	
	Let us consider the more general Conjecture~\ref{conj:main} instead of Theorem~\ref{thm:simon}.
	Recall from Section~\ref{subsec:or.idea} that we can assume that the \limit{q}\ hybrid decision tree $T$ computing $f_n \circ \func{Xor}_{C q \log n}$ is regular, without loss of generality.
	We introduce some symbols and definitions for further discussion.
	Let $m := Cq \log n$.
	We usually use $\zee{} = \zee{1} \zee{2} \dots \zee{n}$ to denote an input (which may be illegal) of $f_n$.
	Similarly $\exs = \ex{1}{1} \dots \ex{1}{m} \ex{2}{1} \dots \ex{2}{m} \dots \ex{n}{m}$ is an input of $f_n \circ \func{Xor}_m$.
	We call $\ex{i}{1},\dots,\ex{i}{m}$ the \emph{$i$th block} of $\exs$.
	Define function $\map:\oi^{nm} \to \oi^n$ by $\map(\exs) = \zee{}$ with $\zee{i} = \bigoplus_{1 \leq j \leq m} \ex{i}{j}$ for $i=1,\dots,n$.
	Clearly $f_n \circ \func{Xor}_{m}(\exs) = f_n(\map(\exs))$.
	Denote $\pb{v}{\exs}$ as the probability that node $v$ is reached if the input is $\exs$ when evaluating $T$.\footnote{Note that this definition is different from the one in Section~\ref{subsec:or.idea}.}
	Let $\pb{v}{\zee{}} := \expt_{\exs : \map(\exs) = \zee{}}[\pb{v}{\exs}]$.
	
	\paragraph{Intuition.}
	Conjecture~\ref{conj:main} states that the ability to run quantum algorithms in $T$ essentially provides no additional benefit compared to just computing classically.
	Thus, $T$ somewhat ``degenerates'' to a classical decision tree.
	The following natural hypothesis will result in this degeneration.
	\begin{hypothesis} \label{hypo:main}
		One can assign values $\inv{v}{1},\dots,\inv{v}{n}$ to each internal node $v$ in $T$  such that $\sum_{1 \leq i \leq n} \inv{v}{i} = q$.
		There exists an assignment such that for any node $v$, $z$ and $z'$ being legal inputs of $f$, we have $\pb{v}{\zee{}} = \pb{v}{\zee{}'}$, if $\sum_{u  :\, \text{\normalfont $u$ is an ancestor of $v$}}\inv{u}{i} = o_{q}(m)$ for every $i$ satisfying $\zee{i} \neq \zee{i}'$.
	\end{hypothesis}
	Intuitively, one can understand $\inv{v}{i}$ as the ``amount'' of cost $\algo_v$ pays for querying bits in the $i$th block. Hypothesis~\ref{hypo:main} basically says that when reaching $v$ while evaluating $T$, one ``knows'' nothing about $\zee{i} = \bigoplus_{1 \leq j \leq m} \ex{i}{j}$ if the total amount of cost that has been paid for the $i$th block is not enough.
	If we assume the hypothesis holds, then $T$ naturally induces a classical decision tree that computes $f$, as follows.
	Suppose we want to compute $f(\zee{})$.
	The strategy is to simulate the evaluation process of $T$ while pretending the input is a random variable $\boldsymbol{\exs}$, which is chosen from $\{ \exs : \map(\exs)=\zee{} \}$ uniformly at random.
	Hence, the probability that node $v$ is reached in our simulation will be exactly $\pb{v}{\zee{}}$.
	Initially we do not have any information about $\zee{}$ except the fact that it is a legal input.
	However, we can still simulate the process because $\pb{u}{\zee{}}$ are equal for all possible $\zee{}$ by the hypothesis, where $u$ is any child of the root.
	Once we find ourself having paid enough for the $i$th block during the simulation, we (classically) query $\zee{i}$ to get its value.
	It is easy to see that this query strategy suffices for the entire simulation, i.e., we always know $\pb{u}{\zee{}}$ where $u$ is any child of the current node $v$, despite that we may not have full information of $z$.
	Finally, output the value of the leaf in $T$ we reached.
	The correctness is guaranteed by the fact that $T$ computes $f \circ \func{Xor}$. 
	
	\paragraph{Implementation.}
	Let us consider Hypothesis~\ref{hypo:main} more detailedly.
	First, note that the hypothesis itself does not hold\footnote{In fact, it is easy to construct a counterexample.} because its statement is too definite.
	Nevertheless, we only need it to hold \emph{approximately}.
	For example, the statement $\pb{v}{\zee{}} = \pb{v}{\zee{}'}$ may be relaxed to $\pb{v}{\zee{}} \in (1 \pm 1/\poly(n)) \pb{v}{\zee{}'}$.
	Or, we may allow {few} nodes in $T$ to violate the statement.
	Next, we need to determine $\inv{v}{i}$ for every $v$ and $i$.
	The intuition about $\inv{v}{i}$ makes one think of \emph{query magnitude}, a notion that is first introduced by Bennett et al.~\cite{DBLP:journals/siamcomp/BennettBBV97}.
	Suppose we have a classical decision tree $\algo$ and the input $y$ is determined. 
	It is easy to define $\qm{\algo}{i}{y}$ as the expected number of queries at the $i$th bit when evaluating $\algo(y)$.
	Query magnitude is a quantum analog of $\qm{\algo}{i}{y}$ when $\algo$ is a quantum decision tree.
	We use the same symbol $\qm{\algo}{i}{y}$ to denote the query magnitude at the $i$th bit of $\algo$ on input $y$ hereafter.
	Recall from~\cite{DBLP:journals/siamcomp/BennettBBV97} that $\sum_{i} \qm{\algo}{i}{y}$ is exactly the query complexity of $\algo$, for any $y$. 
	We would like to determine $\inv{v}{i}$ based on $\qm{\algo_v}{i,j}{\exs}$ for all legal input $\exs$.
	This task will be extremely easy when $q=1$ because  $\qm{\algo_v}{i,j}{\exs}$ remains the same for any $\exs$.
	Therefore we can make $\inv{v}{i} = \sum_{1 \leq j \leq m} \qm{\algo_v}{i,j}{\exs}$ for any $x$.
	The case $q>1$ will be discussed in Section~\ref{subsec:simon.disc}.
	Below we assume $q=1$.
	
	For a clearer illustration of the idea, we will assume the hybrid decision tree $T$ in Hypothesis~\ref{hypo:main} satisfies that $\sum_{u  :\, \text{\normalfont ancestor of $v$}}\inv{u}{i} = o(m)$ for any node $v$ and index $i$ in this section.
	Our goal is to prove an approximate version of Hypothesis~\ref{hypo:main}, that $\sum_{\text{$v$ leaf}} | \pb{v}{\zee{}} - \pb{v}{\zee{}'} | \leq 1/\poly(n)$ for any $v$, $\zee{}$ and $\zee{}'$.
	Intuitively, the inequality says that $T$ cannot differentiate between $\boldsymbol{\exs}$ and $\boldsymbol{\exs'}$ with probability $1/2+1/\poly(n)$, where $\boldsymbol{\exs}$ and $\boldsymbol{\exs'}$ are chosen uniformly at random from $\{\exs : \map(\exs)=\zee{}\}$ and $\{\exs' : \map(\exs')=\zee{}'\}$, respectively.
	Let $\pb{v}{\ast} := \expt_{\exs \in \oi^{nm}} [\pb{v}{\exs}]$.
	It suffices for us to consider $|\pb{v}{\zee{}} - \pb{v}{\ast}|$ for every $v$ and $\zee{}$, and then apply the triangle inequality.
	Denote $\oplus \zee{I} := \bigoplus_{i \in I} \zee{i}$.
	By Fourier analysis we have
	\begin{align*}
		|\pb{v}{\zee{}} - \pb{v}{\ast}| =& \left| \expt_{\exs:\, \map(\exs)=\zee{}} [\pb{v}{\exs}] - \expt_{\exs \in \oi^{nm}} [\pb{v}{\exs}] \right| \\
		\leq & \frac12 \sum_{\varnothing \neq I \subseteq \{1,\dots,n\}} \left| \expt_{\exs:\,\oplus(\map(\exs))_I=0}[\pb{v}{\exs}] - \expt_{\exs:\,\oplus(\map(\exs))_I=1}[\pb{v}{\exs}] \right|.
	\end{align*}
	Note that the right-hand side is irrelevant to $\zee{}$.
	For any fixed $I$, we are familiar with the expression $\delta_{I,v} := \left| \expt_{\exs:\,\oplus(\map(\exs))_I=0}[\pb{v}{\exs}] - \expt_{\exs:\,\oplus(\map(\exs))_I=1}[\pb{v}{\exs}] \right|$: it corresponds to the task of computing $\bigoplus_{\substack{i \in I\\1\leq j \leq m}} \ex{i}{j}$.
	More specifically, it can be shown that if we relabel leaves of $T$ appropriately, it would compute $\bigoplus_{\substack{i \in I\\1\leq j \leq m}} \ex{i}{j}$ with probability $\frac12 \left(1 + \sum_{v:\,\text{leaf}}\delta_{I,v} \right)$ when the input is chosen from $\oi^{nm}$ uniformly at random, and this is also the maximal success probability possible. 
	Therefore, if we can prove that the success probability of $T$ computing the parity function induced by $I$ is no more than one half plus an exponentially (in the number of variables involved, $m|I|$) small term, i.e., $\sum_{v:\,\text{leaf}}\delta_{I,v} \leq 2^{-\Omega(m|I|)}$, then,
	\begin{align*}
		\sum_{v:\,\text{leaf}} |\pb{v}{\zee{}} - \pb{v}{\ast}| 
		\leq & \sum_{\varnothing \neq I \subseteq \{1,\dots,n\}} 2^{-\Omega(m|I|)} = \left( 1+2^{-\Omega(m)} \right)^n -1 \\
		=& \left( 1+2^{-\Omega(C\cdot 1 \cdot \log n)} \right)^n -1 = n^{-\Omega(C)}.
	\end{align*}
	Choosing a sufficiently large $C$ we would be done.
	
	\paragraph{Multiplicative adversary method.}
	By the discussion above, we need to show $\sum_{v:\,\text{leaf}}\delta_{I,v} \leq 2^{-\Omega(m|I|)}$, under the condition that  $\sum_{\substack{u  :\, \text{\normalfont ancestor of $v$} \\ i \in I}}\inv{u}{i} = o(m|I|)$ for any node $v$.
	This task makes one reminiscent of the XOR lemma attributed to Lee and Roland~\cite{DBLP:journals/cc/LeeR13}, which is proved by the \emph{multiplicative adversary method}~\cite{DBLP:conf/coco/Spalek08}.
	\begin{theorem}[XOR lemma, weakened version] \label{thm:xorl}
		An $o(k \qua{f})$-query quantum algorithm that computes $\func{Xor}_k \circ f$ must have success probability no more than $1/2+2^{-\Omega(k)}$. 
	\end{theorem}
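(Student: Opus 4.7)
The plan is to apply the multiplicative adversary method of \v{S}palek~\cite{DBLP:conf/coco/Spalek08}, lifted to the XOR'd function via a tensor product, following the original proof of Lee and Roland~\cite{DBLP:journals/cc/LeeR13}. First, I would invoke the multiplicative adversary characterization of $\qua{f}$: for an appropriate parameter $\lambda = \Theta(1/\qua{f})$, there exists a positive semidefinite matrix $\Gamma_f$ indexed by the domain of $f$, with eigenvalues in $[1,\Lambda]$ for some constant $\Lambda>1$, such that a single query multiplies the potential $\bra{\psi_t}(\Gamma_f\otimes I)\ket{\psi_t}$ by at most a factor of $1+\lambda$, while the potential required to output $f$ with constant error is forced to exceed a constant factor above the initial value $1$.

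Next, I would lift $\Gamma_f$ to $\func{Xor}_k\circ f$ by setting $\Gamma := \Gamma_f^{\otimes k}$ on the space of $k$-tuples of inputs. Because a single query to the combined input addresses bits of only one of the $k$ copies, a block-diagonal argument preserves the per-query multiplicative growth bound of $1+\lambda$. On the output side, the eigenvalues of $\Gamma$ are products of eigenvalues of $\Gamma_f$, so the spectrum extends up to $\Lambda^{k}$, and the tensor structure can be shown to force any algorithm computing $\func{Xor}_k\circ f$ with bias $\epsilon$ over random guessing to grow the final potential by a factor of $2^{\Omega(k)}$ whenever $\epsilon = 2^{-o(k)}$. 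Combining this with the per-query bound yields
$$T \;\geq\; \Omega\!\left(\frac{k}{\log(1+\lambda)}\right) \;=\; \Omega(k\,\qua{f}),$$
which is exactly the contrapositive of the claimed statement.

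The main obstacle I expect is the output-condition analysis for the tensor-power adversary: one must show that any algorithm achieving bias $2^{-o(k)}$ on the XOR is forced to concentrate nontrivial weight on the genuinely high-eigenvalue subspace of $\Gamma_f^{\otimes k}$, rather than merely accumulating moderate contributions spread across many medium-eigenvalue subspaces. This amplification-style step couples the tensor product to the parity structure of the output and is the technical heart of the XOR lemma, where Lee and Roland's argument is most delicate. Once this tensor-product output condition is in hand, the remaining steps—verifying per-query growth and bounding the initial potential—follow routinely from the single-copy adversary analysis.
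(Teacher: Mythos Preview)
The paper does not prove Theorem~\ref{thm:xorl} at all; it is quoted as a known result of Lee and Roland~\cite{DBLP:journals/cc/LeeR13}. What the paper does provide, immediately after the statement, is a sketch of the \emph{special case} $f=\mathrm{id}$ (so $\func{Xor}_k\circ f=\func{Xor}_k$), because that toy case motivates the potential function used later for the hybrid tree. In that sketch the adversary matrix is the explicit $\left[\begin{smallmatrix}1&-1/2\\-1/2&1\end{smallmatrix}\right]^{\otimes k}$ acting on the Gram matrix, and the three-step multiplicative argument (initial value $2^{-k}$, constant-factor growth per query, output condition $1/2+\eta^{k}\Phi$) is exactly the $k$-fold tensor of the one-bit multiplicative adversary.

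Your proposal is the general Lee--Roland argument: pick a multiplicative adversary $\Gamma_f$ witnessing $\qua{f}$, tensor it $k$ times, and verify the per-query growth and the output condition. This is correct in outline and is precisely the route of the cited reference, so it subsumes the paper's special-case sketch rather than differing from it. You are also right that the nontrivial step is the output condition for the tensor-power adversary; the per-query bound and the initial potential are routine. One small point of caution: the existence of a multiplicative adversary matrix $\Gamma_f$ with $\lambda=\Theta(1/\qua{f})$ and a \emph{constant} spectral ratio $\Lambda>1$ is itself a theorem (essentially the tightness of the multiplicative adversary bound), not a definition, so in a self-contained write-up you should cite or reprove that characterization before invoking it.
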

	We are interested in the special case that $f$ is an $1$-bit identity function, i.e., $\func{Xor}_k \circ f$ becomes $\func{Xor}_k$, in Theorem~\ref{thm:xorl}, because it seems closely related to what we are going to prove.
	Proving Theorem~\ref{thm:xorl} for this special case is much easier.
	We sketch the idea below.
	
	Suppose the quantum decision tree to be analyzed is specified by unitaries $U_0, U_1,\dots$.
	Define $\ket{\psi_y^{(t)}} = U_t \qo{y} U_{t-1} \qo{y} \cdots \qo{y} U_1 \qo{y} U_0 \ketinit$.
	We sometimes omit the superscript ``$\cdot^{(t)}$'' if the value of $t$ is clear from the context.
	We refer to $\rho$ the Gram matrix of $\{ \ket{\psi_y} \}_{y \in \oi^k}$.
	That is, $\rho = \sum_{y,y' \in \oi^k} \braket{\psi_y}{\psi_{y'}} \ket{y} \bra{y'}$.
	Define $\rho^{(t)}$ analogously.
	Set the adversary matrix $\Gamma$ to be 
	$$
		\sum_{y,y' \in \oi^k} 2^{-\dist(y,y')} \ket{y} \bra{y'} = 
		\left[
			\begin{array}{cc}
				1 & -1/2\\
				-1/2 & 1
			\end{array}
		\right]^{\otimes k},
	$$
	where $\dist(y,y')$ denotes the Hamming distance between $y$ and $y'$.
	Let the unit vector $u := \frac{1}{2^{k/2}}\sum_{y \in \oi^k} \ket{y}$ correspond to a uniform input distribution over $\oi^k$.
	The potential function is defined as
	$
		\Phi(t) = u^{\dagger} (\Gamma \circ \rho^{(t)}) u,
	$
	where $\circ$ denotes the Hadamard product\footnote{a.k.a. the Schur product or the entrywise product. Note that it is basis-sensitive and we use basis states as basis here and hereafter.}.
	Note that $\Phi(t) \geq 0$ because $\Gamma \circ \rho^{(t)}\succeq 0$ due to the Schur product theorem.
	A standard argument then goes to show that
	\begin{enumerate}[label=(\roman*)]
		\item $\Phi(0) = 1/2^k$, and
		\item $\Phi(t)/\Phi(t-1) = \bigo(1)$ for any $t \geq 1$, and
		\item if the algorithm makes $t$ queries in total, the success probability of it computing $\func{Xor}_k$ is no more than $1/2+\eta^k\Phi(t)$ for some constant $\eta<2$.
	\end{enumerate}
	Clearly (i), (ii) and (iii) jointly imply that an $o(k)$-query quantum algorithm that computes $\func{Xor}_k$ must have success probability $1/2+2^{-\Omega(k)}$.
	
	Now we prove that $\sum_{v:\,\text{leaf}}\delta_{I,v} \leq 2^{-\Omega(m|I|)}$ in a way similar to the above.
	Fix $I$.
	Set
	$$
		\Gamma = \sum_{\substack{\exs,\exs' \in \oi^{nm} \\ \text{$\forall i \notin I$ $\forall j$, $\ex{i}{j}=\ex{i}{j}'$ }}} 2^{-| \{ (i,j):\, \ex{i}{j} \neq \ex{i}{j}' \} |} \ket{\exs} \bra{\exs'},
	$$
	which is a block-diagonal matrix with each block being 
	$
		\left[
			\begin{array}{cc}
				1 & -1/2\\
				-1/2 & 1
			\end{array}
		\right]^{\otimes m|I|}
	$.
	Let
	$u_v :=  \newline \frac{1}{2^{nm/2}}\sum_{\exs \in \oi^{nm}} \sqrt{\pb{v}{x}} \ket{\exs}$.
	The potential function for node $v$ is defined as $\Psi(v) = u_v^\dagger \Gamma u_v$.
	We are able to show that
	\begin{enumerate}[label=(\roman*)]
		\item $\Psi(r) = 1/2^{m|I|}$ where $r$ is the root of $T$, and
		\item $\sum_{w:\,\text{child of $v$}} \Psi(w)/\Psi(v) = 2^{\bigo(\sum_{i \in I} \inv{v}{i})}$ for any internal node $v$, and
		\item There exists a constant $\eta<2$ such that $\delta_{I,v} = \left| \expt_{\exs:\,\oplus(\map(\exs))_I=0}[\pb{v}{\exs}] - \expt_{\exs:\,\oplus(\map(\exs))_I=1}[\pb{v}{\exs}] \right| \leq \eta^{m|I|} \Psi(v)$ for any node $v$.
	\end{enumerate}
	Therefore (i), (ii) and (iii) jointly imply that $\sum_{v:\,\text{leaf}}\delta_{I,v} \leq 2^{-\Omega(m|I|)}$ since $\sum_{\substack{u  :\, \text{\normalfont ancestor of $v$} \\ i \in I}}\inv{u}{i} = o(m|I|)$ for any node $v$.
	
	\subsection{Proof of Theorem~\ref{thm:simon}} \label{subsec:simon.proof}
	
	First, we prove some basic facts about $\func{BoolSimon}$.
	Keep in mind that $n$ is the input size of $\func{BoolSimon}$ throughout.
	Let $k$ and $K=2^k$, as mentioned in Section~\ref{subsec:res}, be parameters associated with $\func{Simon}$, $\func{LSBSimon}$ and $\func{BoolSimon}$ in Lemma~\ref{lem:lsbs}, \ref{lem:bsimon} and \ref{lem:simonin}.
	Let $\clad{f}$ be the minimal query complexity of a classical algorithm that computes $f$ with probability at least $2/3$, when the input is sampled from all legal inputs uniformly at random.
	The $\cla{\func{Simon}} = \Omega(\sqrt{K})$ lower bound is proved by Yao's minimax principle~\cite{DBLP:conf/focs/Yao77,DBLP:journals/tcs/Vereshchagin98} in~\cite{DBLP:journals/siamcomp/Simon97a}, which in fact gives $\clad{\func{Simon}} = \Omega(\sqrt{K})$.
	
	\begin{lemma} \label{lem:lsbs}
		$\clad{\func{LSBSimon}} = \tomega(\sqrt{K})$.
	\end{lemma}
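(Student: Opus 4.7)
I will prove the stronger bound $\clad{\func{LSBSimon}} = \Omega(\sqrt{K})$ by directly mimicking, and slightly strengthening, the Yao--minimax argument used in~\cite{DBLP:journals/siamcomp/Simon97a} to prove $\clad{\func{Simon}} = \Omega(\sqrt{K})$. By Yao's principle it suffices to exhibit a distribution $\mu$ over legal inputs under which every deterministic $T$-query decision tree $A$ attains success probability $<2/3$ whenever $T = o(\sqrt{K})$. Take for $\mu$ the ``standard'' hard distribution used for $\func{Simon}$: first sample the secret $a$ uniformly from $\{0,1\}^k \setminus \{0\}$, then sample the input $x$ uniformly over legal inputs with secret $a$.

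Fix such an $A$ and WLOG assume its $T$ queries hit $T$ distinct (adaptively chosen) indices $i_1,\dots,i_T$. Let $E$ be the event that some pair of queried values collide, i.e.\ $x_{i_s} = x_{i_t}$ for some $s\neq t$. A routine adaptive estimate bounds $\Pr[E] = \bigo(T^2/K)$: conditioned on no prior collision the posterior of $a$ is nearly uniform on $\{0,1\}^k \setminus \{0\}$, so the probability that the $(t{+}1)$-th query collides with an earlier one is $\bigo(t/K)$, and summing gives $\bigo(T^2/K)$. The key second step is a conditional-independence claim: conditioned on $\lnot E$, the distribution of $A$'s entire transcript (and hence its output) is nearly independent of $a$. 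This is a symmetry argument: for every ``consistent'' secret $a_0 \in \{0,1\}^k \setminus \{0\} \setminus \{i_s \oplus i_t : s < t\}$, the number of legal inputs with secret $a_0$ producing the observed transcript is the same, because the coset structure of $\mathbb{F}_2^k$ under different nonzero $a_0$ is identical. Hence the posterior of $a$ given the transcript is uniform on this consistent set of size $\geq K-1-\binom{T}{2}$, and its LSB marginal differs from a fair coin by $\bigo(T^2/K)$.

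Combining the two bounds, the success probability of $A$ satisfies
$$
	\Pr[A(x) = \func{LSBSimon}(x)] \leq \Pr[E] + \Pr[\lnot E]\left(\tfrac{1}{2} + \bigo(T^2/K)\right) \leq \tfrac{1}{2} + \bigo(T^2/K),
$$
so $A$ achieves success probability $\geq 2/3$ only if $T = \Omega(\sqrt{K})$, which implies $\clad{\func{LSBSimon}} = \Omega(\sqrt{K}) = \tomega(\sqrt{K})$. The main obstacle will be making the conditional-independence step rigorous. One must verify, by counting, that for any two consistent secrets $a_0, a_0'$ the number of legal inputs with the prescribed secret producing the exact observed transcript is the same. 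This reduces to showing that the number of extensions of a fixed partial injective assignment (on $T$ distinct cosets of $\langle a_0 \rangle$) to a full coset-injective function is independent of $a_0$, which follows from the combinatorial equivalence of the coset structures of $\langle a_0 \rangle$ for varying nonzero $a_0$; nonetheless the bookkeeping (especially when some queried pair lies in the same coset under $a_0$ but not under $a_0'$, which only happens outside the consistent set) needs care.
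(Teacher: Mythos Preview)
Your argument is correct and in fact yields the sharper bound $\clad{\func{LSBSimon}} = \Omega(\sqrt{K})$, without any polylog loss. The two-stage sampling you describe coincides with the uniform distribution over legal inputs (the count of legal inputs with a fixed nonzero secret is $K!/(K/2)!$, independent of the secret), so it is indeed the distribution used in the definition of $\clad{\cdot}$. The counting that you flag as the ``main obstacle'' goes through exactly as you sketch: for any secret $a_0$ consistent with a collision-free transcript, the queried indices occupy $T$ distinct cosets of $\langle a_0\rangle$, and the number of extensions to a full legal input is $(K-T)!/(K/2)!$, independent of $a_0$; hence the posterior of $a$ is uniform on the consistent set and its LSB is $1/2 + O(T^2/K)$.

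The paper, however, takes a different and much shorter route: it observes that for every $i$ the function $g_i(x) := (\func{Simon}(x))_i$ is isomorphic to $\func{LSBSimon}$ via a linear permutation of the index space, so $\clad{g_i} = \clad{\func{LSBSimon}}$; and computing all $k$ bits (with $\bigo(\log k)$ repetitions each for amplification) recovers $\func{Simon}$, giving $\clad{\func{Simon}} = \bigo(k \log k \cdot \clad{\func{LSBSimon}})$ and hence $\clad{\func{LSBSimon}} = \tomega(\sqrt{K})$. The paper's reduction is a one-line black-box argument that reuses the known $\clad{\func{Simon}}$ bound at the cost of a $k\log k$ factor; your direct argument reopens the birthday analysis but eliminates that loss. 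Either is fine for the purposes of the paper, since only a $\tomega(\sqrt{K})$ bound is needed downstream.
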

	\begin{proof}
		Define partial functions $g_i$ by letting $g_i(x)$ be the $i$th least significant bit of $\func{Simon}(x)$.
		Then, $\clad{\func{Simon}} = \bigo( \sum_{1 \leq i \leq k} \clad{g_i} \log k)$ via a simple reduction.
		Note that every $g_i$ is isomorphic to $\func{LSBSimon}$ via a permutation acting on the input.
		Therefore $\clad{\func{LSBSimon}} = \tomega(\frac{\clad{\func{Simon}}}{k \log k}) = \tomega(\sqrt{K})$.
	\end{proof}
	
	\begin{lemma} \label{lem:bsimon} 
		$\qua{\func{BoolSimon}} = \bigo(\log^2 n)$ and $\clad{\func{BoolSimon}} = \tomega(\sqrt{n})$.
	\end{lemma}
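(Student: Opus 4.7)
The plan is to handle the two bounds separately; in both cases the bridge is the change of view between the bit oracle $\qo{\exs}$ and the symbol-level oracle that serves $k$-bit words.

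For the upper bound, I would invoke Simon's quantum algorithm, which computes $\func{Simon}(x)$ (and hence its least significant bit) with $\bigo(k)$ queries to the symbol-level oracle $\qo{x_{(\cdot)}}\ket{i}\ket{y}=\ket{i}\ket{y\oplus x_i}$, where $y \in \oi^k$. Since the input of $\func{BoolSimon}$ is binary, each such symbol query can be implemented reversibly by $k$ calls to the bit oracle $\qo{\exs}$, one per output-register bit, using simple index arithmetic to translate a pair $(i,j)$ into the global position $ik+j$. The resulting algorithm makes $\bigo(k)\cdot k=\bigo(k^2)=\bigo(\log^2 n)$ bit queries and, after constant-factor amplification, computes $\func{BoolSimon}$ with probability at least $2/3$, giving $\qua{\func{BoolSimon}}=\bigo(\log^2 n)$.

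For the lower bound, by Yao's minimax principle it suffices to show that every deterministic bit-query algorithm $A$ that computes $\func{BoolSimon}$ with probability $\geq 2/3$ under the uniform distribution over legal inputs makes at least $\tomega(\sqrt{n})$ bit queries. I would simulate such an $A$ by a symbol-query algorithm $A'$ for $\func{LSBSimon}$ as follows: whenever $A$ queries bit $(i,j)$, $A'$ queries symbol $i$ (caching the $k$-bit answer on first access) and feeds the $j$-th bit of the cached string back to $A$; finally $A'$ outputs whatever $A$ outputs. The bit encoding is a bijection between legal binary inputs of $\func{BoolSimon}$ and legal symbol inputs of $\func{LSBSimon}$, so the two uniform distributions correspond, and $A'$ has the same success probability as $A$. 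The number of symbol queries used by $A'$ equals the number of \emph{distinct} symbols touched by $A$, hence is at most the bit-query count $B$ of $A$. Lemma~\ref{lem:lsbs} then gives $B \geq \clad{\func{LSBSimon}} = \tomega(\sqrt{K})$. Since $n=Kk=K\log K$, we have $\sqrt{K}=\sqrt{n/\log n}$, and the extra $\sqrt{\log n}$ factor is absorbed by the $\polylog$ in $\tomega$, yielding $\tomega(\sqrt{K})=\tomega(\sqrt{n})$, as required.

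Neither direction is expected to present a serious obstacle: the conversions between bit- and symbol-oracles are essentially bookkeeping, and all the content lives in Lemma~\ref{lem:lsbs} (which in turn relies on Simon's classical lower bound). The two points that deserve careful verification are that the reversible simulation of one symbol query by $k$ bit queries introduces only a constant multiplicative overhead in the quantum circuit (and no hidden queries from the index arithmetic), and that the uniform-on-legal-inputs distribution is genuinely preserved under the bit-to-symbol encoding so that Lemma~\ref{lem:lsbs} can be applied as stated.
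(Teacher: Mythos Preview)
Your proposal is correct and follows essentially the same approach as the paper: the upper bound simulates Simon's $\bigo(k)$-symbol-query algorithm at a cost of $k$ bit queries per symbol, and the lower bound reduces $\func{LSBSimon}$ to $\func{BoolSimon}$ via the symbol-caching simulation together with the observation that the uniform distributions on legal inputs correspond under the bit encoding, so that Lemma~\ref{lem:lsbs} applies. The paper's proof is simply a terser version of what you wrote (it says ``via a simple reduction'' and ``simulating the quantum algorithm for $\func{Simon}$''), and your invocation of Yao's principle to pass to deterministic algorithms is harmless though not strictly needed, since the reduction works equally well for randomized algorithms.
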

	\begin{proof}
		$\clad{\func{BoolSimon}} \geq \clad{\func{LSBSimon}}$ via a simple reduction.
		So $\clad{\func{BoolSimon}} = \tomega(\sqrt{K})$ by Lemma~\ref{lem:lsbs}.
		On the other hand, simulating the quantum algorithm for $\func{Simon}$ gives $\qua{\func{BoolSimon}} \leq k \qua{\func{Simon}} = \bigo(\log^2 K)$.
		Finally, note that $n = Kk$.
	\end{proof}
	
	Define the \emph{(partial) assignment} over \emph{support} $I$ as a vector $\zeee = \left(\zeee_{i}\right)_{i \in I}$.
	Denote $\res{\zeee}{I} := \left(\zeee_i\right)_{i \in I}$, where $I$ is a subset of $\zeee$'s support, to be a partial assignment that is the restriction of $\zeee$ on $I$.
	We say $\zeee'$ \emph{extends} $\zeee$ if $\res{\zeee'}{I} = \zeee$, where $I$ is the support of $\zeee$.
	A string $y = y_1 y_2 \dots y_k$ can be interpreted as a partial assignment $(y_i)_{1 \leq i \leq k}$ when needed.
	An assignment $\zeee$ is called a \emph{certificate} for partial function $f$ if $f(y)$ is invariant for every legal input $y$ that extends $\zeee$.
	The following technical lemma will be used when proving Lemma~\ref{lem:afgood}.
	
	\begin{lemma} \label{lem:simonin}
		Let $\zeee$ be an assignment with support $I$, where $|I| \leq \sqrt{K}$.
		If $\zeee$ is not a certificate for $\func{LSBSimon}$, then
		$K^{|I|} \Pr_{y}[y \text{ \rm extends } \zeee] \leq \poly(K)$, where $y$ is sampled from all legal inputs of $\func{LSBSimon}$ uniformly at random.
	\end{lemma}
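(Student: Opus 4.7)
The plan is to reduce the question to a clean counting argument over the two kinds of legal inputs of $\func{Simon}$: those with Simon parameter $a = 0$ (bijections), and, for each $a \neq 0$, those given by a pairing $\{i, i \oplus a\}$ with distinct colors. The governing observation is that ``not a certificate'' forces $\zeee$ to assign pairwise distinct values on $I$: any collision $\zeee_{i^\ast} = \zeee_{j^\ast}$ with $i^\ast \neq j^\ast$ in $I$ rules out $a = 0$ (since then $y$ would be injective) and uniquely forces $a = i^\ast \oplus j^\ast$ on every legal extension, thereby determining $\func{LSBSimon}(y) = (i^\ast \oplus j^\ast) \bmod 2$ and contradicting the hypothesis. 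So from here on I may assume that $\zeee$ takes $|I|$ pairwise distinct values.

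With collision-freeness in hand I would stratify the probability by $a$. For $a = 0$ the legal inputs extending $\zeee$ are the bijections agreeing with $\zeee$ on $I$, giving $(K-|I|)!$ out of $K!$ total. For each $a \neq 0$ a legal input is specified by an injection from the $K/2$ pairs $\{i, i \oplus a\}$ into $\{0,\dots,K-1\}$, so there are $K!/(K/2)!$ legal inputs total; at most $(K-|I|)!/(K/2)!$ of them extend $\zeee$, and zero do if any pair lies entirely inside $I$ (by collision-freeness). Summing over $a$, the $a = 0$ and $a \neq 0$ contributions scale between numerator and denominator by the same common factor $1 + (K-1)/(K/2)!$, so the ratio collapses cleanly to
\[
\Pr_y[y \text{ extends } \zeee] \;\leq\; \frac{(K-|I|)!}{K!}.
\]

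For the final estimate I use $|I| \leq \sqrt{K}$:
\[
K^{|I|} \cdot \frac{(K-|I|)!}{K!} \;=\; \prod_{i=0}^{|I|-1} \frac{K}{K-i} \;\leq\; \left(\frac{K}{K-|I|}\right)^{|I|} \;\leq\; \left(1 + \frac{1}{\sqrt{K}-1}\right)^{\sqrt{K}} \;=\; \bigo(1),
\]
which is trivially $\poly(K)$ (and in fact much stronger than the lemma requires).

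The argument is essentially a direct combinatorial count, so I do not expect a real obstacle. The one stylistic subtlety is the clean cancellation in the ratio above, which relies on the per-$a$ upper bound taking the identical form $(K-|I|)!/(K/2)!$ independent of $a$; even a coarser bound would already give $\poly(K)$, so this elegance is not essential to the conclusion.
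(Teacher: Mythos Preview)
Your proof is correct and follows essentially the same approach as the paper: both observe that ``not a certificate'' forces $\zeee$ to have pairwise distinct values, then count legal extensions to obtain $\Pr_y[y \text{ extends } \zeee] \leq (K-|I|)!/K!$, and conclude via $|I| \leq \sqrt{K}$. Your stratification over $a$ (including the $a=0$ bijection case) is slightly more explicit than the paper's, which simply writes the total count as $(K-1)\cdot K!/(K/2)!$ and the extension count as $(K-1)\cdot (K-|I|)!/(K/2)!$, but the resulting ratio and final bound are identical.
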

	\begin{proof}
		The total number of legal inputs is $(K-1) \cdot \frac{K!}{(K/2)!}$.
		On the other hand, note that all components of $\zeee$ are distinct since otherwise it would be a certificate.
		Therefore, the number of legal inputs $y$ extending $\zeee$ is at most $(K-1) \cdot \frac{(K-|I|)!}{(K/2)!}$.
		Thus $K^{|I|} \Pr_{y}[y \text{ \rm extends } \zeee] \leq \poly(K)$ since $|I| \leq \sqrt{K}$.
	\end{proof}
	
	Assume without loss of generality that there exists an \limit{1} hybrid decision tree $T$ of depth $h \leq \sqrt{n}/\log^{10} n$ that computes $\func{BoolSimon}_n \circ \func{Xor}_{m}$, where $m:=C \log n$.
	We plan to show that $\cla{f} = \bigo(h)$.
	Recall that we use $\algo_v$ to denote the quantum decision tree of node $v$ in $T$.
	Regarding $f$ in Section~\ref{subsec:simon.idea} as $\func{BoolSimon}$, we will still use notations $\exs$, $\zee{}$, $\pb{v}{\exs}$, $\pb{v}{\zee{}}$ and $\sigma$ in the same way as before.
	
	\paragraph{Query magnitude.}
	Let $\algo$ be a quantum decision tree specified by unitaries $U_0, U_1, \dots, U_k$.
	Define \emph{query magnitude}~\cite{DBLP:journals/siamcomp/BennettBBV97} at the $i$th bit of $\algo$ on input $y$, $\qm{\algo}{i}{y} = \sum_{0 \leq t \leq k-1} \mo{P_i \ket{\psi_y^{(t)}}}^2$, where $P_i$ is the projector onto the query register containing index $i$, and $\ket{\psi_y^{(t)}} = U_t \qo{y} \ket{\psi_y^{(t-1)}}$ with $\ket{\psi_y^{(0)}} = U_0 \ketinit$.
	Set $\inv{v}{i} := \sum_{1 \leq j \leq m} \qm{\algo_v}{i,j}{\exs}$ for arbitrary $\exs$.
	
	\paragraph{Potential function.}
	For convenience we denote $\bar{I} = \{1,2,\dots,n\}-I$ for any set $I$ hereafter.
	Let $I$ be any non-empty subset of $\{1,2,\dots,n\}$, $\zeee$ be any partial assignment over support $\bar{I}$. 
	Define
	$
		\pot{I}{\zeee}{v} = (2/\sqrt{3})^{m|I|} u_v^\dagger \Gamma u_v\footnote{The factor $(2/\sqrt{3})^{m|I|}$ is added for Lemma~\ref{lem:leqpot} being in a cleaner form.},
	$
	where
	$$
		\Gamma = \sum_{\substack{\exs, \exs' \in \oi^{nm} \\ \map(\exs),\map(\exs') \text{ extend } \zeee \\ \text{$\forall i \in \bar{I}$ $\forall j$, $\ex{i}{j}=\ex{i}{j}'$ }}} 2^{-| \{ (i,j):\, \ex{i}{j} \neq \ex{i}{j}' \} |} \ket{\exs} \bra{\exs'}
		\text{\quad and \quad}
		u_v = \frac{1}{2^{(n(m-1)+|I|)/2}} \sum_{\substack{\exs \in \oi^{nm} \\ \map(\exs) \text{ extend } \zeee }} \sqrt{\pb{v}{x}} \ket{\exs}.
	$$
	Intuitively, this potential function corresponds to the task of computing $\bigoplus_{\substack{i \in I\\1\leq j \leq m}} \ex{i}{j}$, while the input is chosen from $\{ \exs \in \oi^{nm} : \map(\exs) \text{ extend } \zeee\}$ uniformly at random.
	Let $r$ be the root of $T$.
	By definition, $\pot{I}{\zeee}{r} = (2/\sqrt{3})^{m|I|} \cdot 2^{-m|I|} \leq 1/n^{\Omega(C|I|)}$.
	The following lemma upper bounds the growth of $\pot{I}{\zeee}{v}$.
	
	\begin{lemma} \label{lem:grow}
		For any internal node $v$, $\displaystyle \sum_{w:\,\text{\rm child of $v$}} \pot{I}{\zeee}{w} /  \pot{I}{\zeee}{v} \leq 1+3 \sum_{i \in I} \inv{v}{i}.$
	\end{lemma}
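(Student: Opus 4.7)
The plan is to unfold $\sum_{w} u_w^\dagger \Gamma u_w$ in terms of the $1$-query structure of $\algo_v$, bound the measurement-induced overlap matrix, and then reduce the whole inequality to a matrix inequality that can be read off the spectrum of $\Gamma$.

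First I would make $u_w$ explicit. Writing $\algo_v = U_1 \qo{\exs} U_0$ and $\ket{\phi_\exs} := U_1\qo{\exs}U_0\ketinit$, each child $w$ corresponds to a measurement outcome $o_w$ with $\pb{w}{\exs} = \pb{v}{\exs}\,\|\Pi_{o_w}\ket{\phi_\exs}\|^2$. Substituting and summing yields
\[
\sum_w u_w^\dagger \Gamma u_w \;=\; \mathcal{N}^{-2}\sum_{\exs,\exs'}\sqrt{\pb{v}{\exs}\pb{v}{\exs'}}\,c_{\exs,\exs'}\,\Gamma_{\exs,\exs'},\qquad c_{\exs,\exs'} := \sum_w \|\Pi_{o_w}\ket{\phi_\exs}\|\|\Pi_{o_w}\ket{\phi_{\exs'}}\|,
\]
with $c_{\exs,\exs}=1$ and $c_{\exs,\exs'}\in[0,1]$. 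The prefactor $(2/\sqrt{3})^{m|I|}$ cancels in the ratio, so it suffices to prove $\sum_w u_w^\dagger\Gamma u_w \leq (1+3\sum_{i\in I}\inv{v}{i})\,u_v^\dagger\Gamma u_v$.

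Next I would derive the quantitative bound $1 - c_{\exs,\exs'} \leq \Delta(\exs,\exs')$, where $\Delta(\exs,\exs') := \sum_{(i,j):\ex{i}{j}\neq \ex{i}{j}'} p_{(i,j)}$ and $p_{(i,j)} = \|P_{(i,j)}U_0\ketinit\|^2$ are the query magnitudes (so $\sum_{(i,j)}p_{(i,j)}=1$ and $\inv{v}{i}=\sum_j p_{(i,j)}$). This comes from the identity $1 - c_{\exs,\exs'} = \tfrac12\|\vec\alpha_\exs-\vec\alpha_{\exs'}\|^2$ (where $\vec\alpha_\exs := (\|\Pi_{o_w}\ket{\phi_\exs}\|)_w$), combined with the contraction $\|\vec\alpha_\exs - \vec\alpha_{\exs'}\|^2 \leq \|\ket{\phi_\exs}-\ket{\phi_{\exs'}}\|^2 = 2\Delta(\exs,\exs')$, the last equality being a one-line calculation from $U_1(\qo{\exs}-\qo{\exs'})U_0\ketinit$.

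The difference $\sum_w u_w^\dagger\Gamma u_w - u_v^\dagger\Gamma u_v$ equals $\mathcal{N}^{-2}\sum_{\exs,\exs'}\sqrt{\pb{v}{\exs}\pb{v}{\exs'}}(c_{\exs,\exs'}-1)\Gamma_{\exs,\exs'}$. I would split by the sign of $\Gamma_{\exs,\exs'}$: terms with $\Gamma_{\exs,\exs'}>0$ contribute non-positively (since $c\leq 1$) and are dropped, while for $\Gamma_{\exs,\exs'}<0$ I use $c_{\exs,\exs'}-1 \geq -\Delta(\exs,\exs')$. Because $\Gamma$ is supported on pairs with $\exs|_{\bar I}=\exs'|_{\bar I}$, only $(i,j)$ with $i \in I$ contribute to $\Delta$. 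After reorganizing by $(i,j)$, the upper bound rearranges into
\[
\sum_w u_w^\dagger \Gamma u_w - u_v^\dagger\Gamma u_v \;\leq\; \mathcal{N}^{-2}\sum_{i\in I}\sum_j p_{(i,j)} \cdot u_v^{(\mathrm{un})\,T}\bigl(\Gamma^- \odot X_{(i,j)}\bigr)u_v^{(\mathrm{un})},
\]
where $\Gamma^-$ denotes the negative part of $\Gamma$ and $X_{(i,j)}$ is the Pauli $X$ acting on coordinate $(i,j)$. The crucial final step invokes the simultaneous diagonalizability of $\Gamma$ and $X_{(i,j)}$: writing $M = I - X/2$ so that $M\odot X = -X/2$ and $M \odot I = I$, one checks that $\Gamma^- \odot X_{(i,j)}$ simplifies to $\tfrac12 X_{(i,j)}\otimes I\otimes \Pi^{\bar I}_\zeta$ for $i\in I$. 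Expanding both $u_v^{(\mathrm{un})\,T} X_{(i,j)} u_v^{(\mathrm{un})}$ and $u_v^{(\mathrm{un})\,T} \Gamma u_v^{(\mathrm{un})}$ in the common eigenbasis $\{\ket{\chi_S}\}$ (with $\lambda_S = (1/2)^{m|I|-|S|}(3/2)^{|S|}$) gives the claimed factor $3$ via the condition number $\lambda_{\max}(M)/\lambda_{\min}(M) = 3$.

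The main obstacle is precisely this last step. A naive application of $|u^\dagger X_{(i,j)} u| \leq \|u\|^2$ together with $u^\dagger\Gamma u \geq (1/2)^{m|I|}\|u\|^2$ inflates the bound by a factor $2^{m|I|}$, which is fatal since the $\chi_\emptyset$ eigenspace can carry almost all of $u_v^{(\mathrm{un})}$. To obtain the tight constant $3$, one must work directly in the common eigenbasis and exploit that the non-negativity of $u_v^{(\mathrm{un})}$ (inherited from $\sqrt{\pb{v}{\exs}}\geq 0$) severely constrains the distribution of its mass across the eigenspaces $\{\ket{\chi_S}\}$, rather than bound each mode independently.
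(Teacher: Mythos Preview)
Your overall strategy---unfold $\sum_w u_w^\dagger\Gamma u_w$ via the measurement-overlap coefficients $c_{\exs,\exs'}$, bound $1-c_{\exs,\exs'}\le\Delta(\exs,\exs')$, then split by the sign of $\Gamma_{\exs,\exs'}$---breaks down at the very step you flag as the obstacle, and the fix you suggest does not work.

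After the sign split and reorganization by $(i,j)$, what you actually need is
\[
u_v^{\dagger}\bigl(\Gamma^{-}\circ E_{(i,j)}\bigr)u_v \;\le\; 3\,u_v^{\dagger}\Gamma u_v
\qquad\text{for every entrywise non-negative }u_v,
\]
where $(E_{(i,j)})_{\exs,\exs'}=\one{x_{i,j}\neq x'_{i,j}}$ (note: this is \emph{not} the Pauli $X\otimes I$, whose $(\exs,\exs')$ entry is $\one{\exs=\exs'\oplus e_{(i,j)}}$; your identity $\Gamma^-\odot X_{(i,j)}=\tfrac12 X_{(i,j)}$ is correct for the latter but irrelevant, since the sum is over all $\exs,\exs'$ differing at $(i,j)$, not only those differing \emph{solely} there). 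This inequality is simply false. Take a single block so that $\Gamma=M^{\otimes k}$ with $k=m|I|$, and let $u_v$ be the all-ones vector (this is precisely $u_r$ at the root). Then $u_v^\dagger\Gamma u_v=(\text{sum of entries of }M)^k=1$, whereas
\[
u_v^{\dagger}\bigl(\Gamma^{-}\circ E_{(1)}\bigr)u_v
=\sum_{\exs}\sum_{d\text{ odd}}\binom{k-1}{d-1}2^{-d}
=2^{k-2}\Bigl((3/2)^{k-1}+(1/2)^{k-1}\Bigr),
\]
which equals $5$ already at $k=3$ and grows like $3^{k}$. So no constant (let alone $3$) works; the non-negativity of $u_v$ that you hope to exploit is satisfied by this very counterexample. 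The loss happens at the sign split: discarding the $\Gamma_{\exs,\exs'}>0$ terms with $c_{\exs,\exs'}<1$ throws away exactly the negative contributions that cancel the exponential blow-up.

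The paper's argument avoids this by never passing to absolute values. It keeps the genuine Gram matrix $\rho$ (so that $\sum_w u_w^\dagger\Gamma u_w\le u_v^\dagger(\Gamma\circ\rho)u_v$ as quadratic forms), then uses the computing/uncomputing decomposition of a $1$-query algorithm to write $\rho^{(2)}_{i,j}\circ\Delta_{i,j}=\rho^{(1)}_{i,j}=\rho^{(0)}_{i,j}\circ\Delta_{i,j}=\beta_{i,j}\,\rho^{(0)}\circ\Delta_{i,j}$, and finally applies the operator inequalities $\Gamma\preceq 2(\Gamma\circ\Delta_{i,j})$ and $\Gamma\circ\Delta_{i,j}\preceq 2\Gamma$ (which follow directly from the tensor form of $\Gamma$). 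This yields a factor $4$ for $i\in I$ and $1$ for $i\notin I$, and since $\sum_i\inv{v}{i}=1$ the bound collapses to $1+3\sum_{i\in I}\inv{v}{i}$.
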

	\begin{proof}
		Define $\ket{\psi_{\exs}}$ to be the final state when running $\algo_v$ on input $\exs$.
		The vector $\left( \sqrt{\pb{w}{\exs}} \right)_{w:\,\text{child of $v$}}$ is exactly $\sqrt{\pb{v}{\exs}} \ket{\psi_{\exs}}$ with each component taking the absolute value.
		Therefore,
		$$
			\frac{1}{(2/\sqrt{3})^{m|I|}} \sum_{w:\,\text{\rm child of $v$}} \pot{I}{\zeee}{w} = \sum_{w} u_w^\dagger \Gamma u_w \leq u_v^\dagger (\Gamma \circ \rho) u_v,
		$$
		where $\rho$ denotes the Gram matrix of $\{ \ket{\psi_{\exs}} \}_{\exs \in \oi^{nm}}$, i.e., $\rho = \sum_{\exs,\exs' \in \oi^{nm}} \braket{\psi_{\exs}}{\psi_{\exs'}} \ket{ \exs} \bra{\exs'}$.
		
		We use a clever observation from Ambainis et al.~\cite{DBLP:conf/coco/AmbainisMRR11} to simplify the proof: any $t$-query quantum algorithm is equivalent, in the sense of the same final Gram matrix, to a $2t$-query quantum algorithm that only makes \emph{computing} queries and \emph{uncomputing} queries.
		The computing and uncomputing queries still correspond to the same unitary $\qo{y}$, where $y$ is the input.
		The difference is we guarantee that the quantum state is a superposition of $\{ \ket{i}\ket{0}\ket{a} \}_{i,a}$ (resp. $\{ \ket{i}\ket{y_i}\ket{a} \}_{i,a}$) right before applying a computing (resp. uncomputing) query.
		
		Specifically, the $1$-query algorithm $\algo_v$ is equivalent to measuring the state $U_2 \qo{\exs} U_1 \qo{\exs} U_0 \ketinit$, where the first and the second $\qo{\exs}$ correspond to uncomputing and computing queries respectively, and $U_1$ corresponds to a CNOT operation: $U_1 \ket{i,j} \ket{b} \ket{w_1,w_2} = \ket{i,j} \ket{b} \ket{w_1 \oplus b,w_2}$.
		Let $\ket{\psi_{\exs}^{(0)}} = U_0 \ketinit$, $\ket{\psi_{\exs}^{(1)}} = U_1 \qo{{\exs}} \ket{\psi_{\exs}^{(0)}}$, $\ket{\psi_{\exs}^{(2)}} =  \qo{{\exs}} \ket{\psi_{\exs}^{(1)}}$, and denote as $\rho^{(t)}$ the corresponding Gram matrix.
		Note that $\rho^{(2)} = \rho$ because $U_2$ is unitary.
		Set $\ket{\psi_{\exs,i,j}^{(t)}} = P_{i,j} \ket{\psi_{\exs}^{(t)}}$, where
		$P_{i,j}$ is the projector onto the query register containing index $(i,j)$ and similarly denote as $\rho^{(t)}_{i,j}$ the corresponding Gram matrix.
		Note that although $\qm{\algo_v}{i,j}{\exs}$ is the query magnitude of $\algo_v$ itself, we still have $\qm{\algo_v}{i,j}{\exs} = \mo{P_{i,j} U_0 \ketinit}^2 =: \beta_{i,j}$ by the way we construct $U_0$, $U_1$ and $U_2$.
		Therefore $\inv{v}{i} = \sum_{1 \leq j \leq m} \beta_{i,j}$.
		 
		By definition it is obvious that $\rho^{(0)} = \sum_{\exs,\exs'} \ket{\exs} \bra{\exs}$ and $\rho^{(0)}_{i,j} = \beta_{i,j} \rho^{(0)}$.
		Since $\ket{\psi_{\exs}^{(1)}}$ is obtained by applying a computing query and then a CNOT gate on $\ket{\psi_{\exs}^{(0)}}$, we have $\rho^{(1)}_{i,j} = \rho^{(0)}_{i,j} \circ \Delta_{i,j}$, where $\Delta_{i,j} := \sum_{\exs,\exs':\, \ex{i}{j} = \ex{i}{j}'} \ket{\exs} \bra{\exs}$.
		Similarly, $\rho^{(1)}_{i,j} = \rho^{(2)}_{i,j} \circ \Delta_{i,j} $ because of the uncomputing query.
		Using the fact that $\Gamma$ is a block-diagonal matrix with each block being 
		$
			\left[
				\begin{array}{cc}
					1 & -1/2\\
					-1/2 & 1
				\end{array}
			\right]^{\otimes m|I|}
		$, it is easy to show that $\Gamma \preceq 2 \Gamma \circ \Delta_{i,j}$ and $\Gamma \circ \Delta_{i,j} \preceq 2 \Gamma$.
		And by the definition of $\Gamma$, $\Gamma = \Gamma \circ \Delta_{i,j}$ if $i \in \bar{I}$.
		Therefore,
		\begin{align*}
			 u_v^\dagger (\Gamma \circ \rho) u_v =& \sum_{\substack{i \in I \\ 1 \leq j \leq m}} u_v^\dagger (\Gamma \circ \rho^{(2)}_{i,j}) u_v + \sum_{\substack{i \in \bar{I} \\ 1 \leq j \leq m}} u_v^\dagger (\Gamma \circ \rho^{(2)}_{i,j}) u_v \\
			\leq & 2 \sum_{\substack{i \in I \\ 1 \leq j \leq m}} u_v^\dagger (\Gamma \circ \rho^{(2)}_{i,j} \circ \Delta_{i,j}) u_v  + \sum_{\substack{i \in \bar{I} \\ 1 \leq j \leq m}} u_v^\dagger (\Gamma \circ \rho^{(2)}_{i,j} \circ \Delta_{i,j}) u_v \\
			= &  2 \sum_{\substack{i \in I \\ 1 \leq j \leq m}} u_v^\dagger (\Gamma \circ \rho^{(0)}_{i,j} \circ \Delta_{i,j}) u_v + \sum_{\substack{i \in \bar{I} \\ 1 \leq j \leq m}} u_v^\dagger (\Gamma \circ \rho^{(0)}_{i,j} \circ \Delta_{i,j}) u_v \\
			\leq &  4 \sum_{\substack{i \in I \\ 1 \leq j \leq m}} u_v^\dagger (\Gamma \circ \rho^{(0)}_{i,j}) u_v + \sum_{\substack{i \in \bar{I} \\ 1 \leq j \leq m}} u_v^\dagger (\Gamma \circ \rho^{(0)}_{i,j}) u_v \\
			= &  4 \sum_{\substack{i \in I \\ 1 \leq j \leq m}} \beta_{i,j} \cdot u_v^\dagger \Gamma u_v + \sum_{\substack{i \in \bar{I} \\ 1 \leq j \leq m}} \beta_{i,j} \cdot u_v^\dagger \Gamma u_v \\
			= & (1+3 \sum_{i \in I} \inv{v}{i}) u_v^\dagger \Gamma u_v = \frac{1}{(2/\sqrt{3})^{m|I|}} (1+3 \sum_{i \in I} \inv{v}{i}) \pot{I}{\zeee}{v}.
		\end{align*}
	\end{proof}
	
	Next, we show that $\pot{I}{\zeee}{v}$ upper bounds $\frac{1}{2} \left| \expt_{\substack{\zee{}:\,\zee{} \text{ \rm extends } \zeee \\ \oplus \zee{I} = 0}}[ \pb{v}{\zee{}}] - \expt_{\substack{\zee{}:\,\zee{} \text{ \rm extends } \zeee \\ \oplus \zee{I} = 1}}[ \pb{v}{\zee{}}] \right| $, which is an important quantity we will use  later in Lemma~\ref{lem:diff}.
	
	\begin{lemma} \label{lem:leqpot}
		For any node $v$, $\left| \displaystyle \expt_{\zee{}:\,\zee{} \text{ \rm extends } \zeee}[(-1)^{\oplus \zee{I}} \pb{v}{\zee{}}] \right| \leq \pot{I}{\zeee}{v}$.
	\end{lemma}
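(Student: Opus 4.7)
The plan is to interpret the inequality as an operator bound on $\Gamma$ and reduce it to a per-block $2\times 2$ PSD check. First I would unfold the left-hand side using $\pb{v}{\zee{}} = \expt_{\exs:\map(\exs)=\zee{}}[\pb{v}{\exs}]$. Since $(-1)^{\oplus \zee{I}} = (-1)^{\sum_{i \in I, j} \ex{i}{j}}$ depends on $\exs$ only through $\map(\exs)$, the target expectation equals $\expt_{\exs:\map(\exs)\text{ extends } \zeee}[(-1)^{\sum_{i \in I, j} \ex{i}{j}} \pb{v}{\exs}]$. A direct count shows $|\{\exs : \map(\exs) \text{ extends } \zeee\}| = 2^{n(m-1)+|I|}$, matching the normalization of $u_v$ exactly, so this expectation equals $u_v^\dagger D\, u_v$, where $D$ is the diagonal operator on $u_v$'s support with $D_{\exs,\exs} = (-1)^{\sum_{i \in I, j} \ex{i}{j}}$. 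The lemma therefore reduces to the operator inequality $\pm D \preceq (2/\sqrt{3})^{m|I|}\, \Gamma$.

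Both $D$ and $\Gamma$ are block-diagonal with blocks indexed by the frozen coordinates $\exs_{\bar I}$ (restricted so that $\map(\exs_{\bar I}) = \zeee$). Within each block, indexed by $\exs_I \in \oi^{m|I|}$, $\Gamma$ restricts to $M^{\otimes m|I|}$ with $M = \bigl[\begin{smallmatrix} 1 & -1/2 \\ -1/2 & 1 \end{smallmatrix}\bigr]$ (matching the description already used in the proof of Lemma~\ref{lem:grow}), while $D$ restricts to $Z^{\otimes m|I|}$ with $Z = \mathrm{diag}(1,-1)$. It thus suffices to verify the single inequality $\pm Z^{\otimes k} \preceq (2/\sqrt{3})^k M^{\otimes k}$ for $k := m|I|$.

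This I would check in the Hadamard eigenbasis of $M$. With $|\pm\rangle := (|0\rangle \pm |1\rangle)/\sqrt{2}$, we have $M|+\rangle = \tfrac12 |+\rangle$, $M|-\rangle = \tfrac32 |-\rangle$, and $Z|+\rangle = |-\rangle$. Labeling this basis by $T \in \oi^k$ (with $T_i = 1$ encoding $|-\rangle$), $M^{\otimes k}$ becomes diagonal with eigenvalue $(1/2)^{k-|T|}(3/2)^{|T|}$, and $Z^{\otimes k}$ becomes the involution $|T\rangle \leftrightarrow |T \oplus \mathbf{1}\rangle$. The inequality decouples across the invariant $2$-dimensional subspaces $\mathrm{span}\{|T\rangle, |T \oplus \mathbf{1}\rangle\}$, and on each reduces to PSD-ness of
$$
\left[
\begin{array}{cc}
(2/\sqrt{3})^k (1/2)^{k-|T|}(3/2)^{|T|} & \mp 1 \\
\mp 1 & (2/\sqrt{3})^k (1/2)^{|T|}(3/2)^{k-|T|}
\end{array}
\right],
$$
whose diagonal is positive and whose determinant equals $(2/\sqrt{3})^{2k}(3/4)^k - 1 = (4/3)^k(3/4)^k - 1 = 0$. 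Both choices of sign are therefore PSD, completing the argument.

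The main subtle point is the choice of constant: the determinant identity $(4/3)^k(3/4)^k = 1$ is saturated precisely at $(2/\sqrt{3})^{m|I|}$, so this is the tight constant for this operator approach — any smaller factor would fail on the extremal subspace $\mathrm{span}\{|0^k\rangle, |1^k\rangle\}$. Once this observation is in hand, the remaining verifications (block decomposition of $\Gamma$ and $D$, counting to match the normalization of $u_v$) are routine.
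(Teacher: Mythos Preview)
Your proposal is correct and follows essentially the same approach as the paper: both rewrite the left-hand side as $u_v^\dagger D u_v$ for the diagonal sign matrix $D$ with blocks $Z^{\otimes m|I|}$, and then compare it to $(2/\sqrt{3})^{m|I|} u_v^\dagger \Gamma u_v$ via the common block structure. The paper simply asserts the resulting inequality ``by knowledge of quadratic form,'' whereas you spell it out by diagonalizing $M$ in the Hadamard basis and checking the $2\times 2$ determinant identity $(4/3)^k(3/4)^k=1$; this is exactly the content behind the paper's one-line appeal, and your observation that the constant $(2/\sqrt{3})^{m|I|}$ is tight is a nice bonus.
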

	\begin{proof}
		Note that $\displaystyle \expt_{\zee{}:\,\zee{} \text{ \rm extends } \zeee}[(-1)^{\oplus \zee{I}} \pb{v}{\zee{}}] = u_v^\dagger D u_v$, where $D = \sum_{\exs} (-1)^{\oplus \map(\exs)_I} \ket{\exs} \bra{\exs}$ is a block-diagonal matrix with each block being
		$
			\left[
				\begin{array}{cc}
					1 & 0\\
					0 & -1
				\end{array}
			\right]^{\otimes m|I|}
		$.
		Recall that $\Gamma$ is a block-diagonal matrix with each block being 
		$
			\left[
				\begin{array}{cc}
					1 & -1/2\\
					-1/2 & 1
				\end{array}
			\right]^{\otimes m|I|}
		$.
		By knowledge of quadratic form we have $|u_v^\dagger D u_v| \leq (2/\sqrt{3})^{m|I|} u_v^\dagger \Gamma u_v = \pot{I}{\zeee}{v}$.
	\end{proof}
	
	\paragraph{Approximating $\pb{v}{\zee{}}$.}
	Denote as $J_{\rm min}(v)$ the collection of indices $i$ such that $\sum_{u  :\, \text{\normalfont ancestor of $v$}}\inv{u}{i} \geq m/C = \log n$.
	Recall that $\func{BoolSimon}$ is the Boolean version of $\func{LSBSimon}$.
	So, one position of the input of $\func{LSBSimon}$ corresponds to $\bigo(\log n)$ input bits of $\func{BoolSimon}$.
	We let $J(v)$ consist of those index $i$ such that there exists index $j \in J_{\min}(v)$ corresponding to the same input position of $\func{LSBSimon}$ with $i$.
	To prove an approximate version of Hypothesis~\ref{hypo:main}, for any $v$ and $\zee{}$ we are going to approximate $\pb{v}{\zee{}}$ with $\pb{v}{\zeee{}} := \expt_{\zee{}':\text{ $\zee{}'$ extends $\zeee$}}[\pb{v}{\zee{}'}]$, where $\zeee := \res{\zee{}}{J}$ and $J := J(v)$\footnote{Following the idea in Section~\ref{subsec:simon.idea} one should let $J := J_{\rm min}(v)$, which seems more natural. However, it would bring difficulty in proving Lemma~\ref{lem:afgood} if so, because Claim~\ref{cla:temp2} would no longer hold.}.
	Denote $\diff{\zeee}{v} = \sum_{\varnothing \neq I \subseteq \bar{J}} \expt_{\substack{\zeee':\, \zeee' \text{ \rm has support } \bar{I} \\ \zeee' \text{ \rm extends } \zeee}} [ \pot{I}{\zeee'}{v} ]$.
	The following lemma bounds the difference between $\pb{v}{\zee{}}$ and $\pb{v}{\zeee{}}$.
	
	\begin{lemma} \label{lem:diff}
		For any node $v$, $\displaystyle |\pb{v}{\zee{}} - \pb{v}{\zeee{}} |\leq \diff{\zeee}{v}$.
		
	\end{lemma}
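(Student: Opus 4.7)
The plan is to derive the bound from Lemma~\ref{lem:leqpot} via a Fourier expansion on the cube $\oi^{\bar{J}}$. Define $F : \oi^{\bar J} \to \mathbb{R}$ by $F(y) = \pb{v}{\zee{}'}$, where $\zee{}'$ is the assignment agreeing with $\zeee$ on $J$ and with $y$ on $\bar J$. By definition $F(\zee{\bar J}) = \pb{v}{\zee{}}$, and the uniform average $\expt_{y \in \oi^{\bar J}}[F(y)]$ equals $\pb{v}{\zeee{}}$, so bounding $|\pb{v}{\zee{}} - \pb{v}{\zeee{}}|$ is the same as bounding $|F(\zee{\bar J}) - \expt_y[F(y)]|$.

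Expanding $F$ in the parity basis $\chi_I(y) = (-1)^{\oplus y_I}$ indexed by $I \subseteq \bar J$ and dropping the constant $I = \varnothing$ term gives
\[
F(\zee{\bar J}) - \expt_y[F(y)] \;=\; \sum_{\varnothing \neq I \subseteq \bar J} \chi_I(\zee{\bar J})\,\hat F(I), \qquad \hat F(I) = \expt_{\zee{}' \text{ extends } \zeee}\!\bigl[(-1)^{\oplus \zee{}'_I}\,\pb{v}{\zee{}'}\bigr].
\]
Since $|\chi_I| \leq 1$, by the triangle inequality it suffices to show $\sum_{\varnothing \neq I \subseteq \bar J}|\hat F(I)| \leq \diff{\zeee}{v}$.

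For each non-empty $I \subseteq \bar J$, the inclusion $I \subseteq \bar J$ is equivalent to $J \subseteq \bar I$, so I would decompose $\hat F(I)$ as a two-step average: first over a partial assignment $\zeee''$ with support $\bar I$ extending $\zeee$, then over $\zee{}'$ extending $\zeee''$. The inner expectation is exactly the quantity controlled by Lemma~\ref{lem:leqpot}, which bounds it in absolute value by $\pot{I}{\zeee''}{v}$. One more triangle inequality then yields $|\hat F(I)| \leq \expt_{\zeee''}[\pot{I}{\zeee''}{v}]$, and summing over $\varnothing \neq I \subseteq \bar J$ reproduces exactly the definition of $\diff{\zeee}{v}$.

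I expect no real obstacle beyond bookkeeping: the multiplicative-adversary estimate is already packaged in Lemma~\ref{lem:leqpot}, so the remaining work is simply to keep track of which coordinates are pinned by $\zeee$ (those in $J$), which are averaged to form $\zeee''$ (those in $\bar I \setminus J$), and which are phase-weighted by $\chi_I$ (those in $I$). The identity $J \subseteq \bar I$, which holds precisely for the index sets $I$ that appear in the Fourier sum, is the pivot that allows the Fourier coefficients of Lemma~\ref{lem:diff} to be matched against the supports required by Lemma~\ref{lem:leqpot}.
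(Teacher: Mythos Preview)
Your proposal is correct and follows essentially the same argument as the paper: both write $\pb{v}{\zee{}}$ in its multilinear (Fourier) expansion over the coordinates in $\bar J$, bound $|\pb{v}{\zee{}}-\pb{v}{\zeee}|$ by the sum of the nontrivial Fourier coefficients, and then control each coefficient by first averaging over a partial assignment $\zeee'$ with support $\bar I$ extending $\zeee$ and applying Lemma~\ref{lem:leqpot} to the inner expectation. The only difference is notational---your $\hat F(I)$ is the paper's $c_I$, and your $\zeee''$ is the paper's $\zeee'$.
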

	\begin{proof}
		Let $J := J(v)$. Write $\pb{v}{\zee{}}$, for those $\zee{}$ that extends $\zeee$, in the multilinear (in $(-1)^{\zee{i}}$ for $i \in \bar{J}$) representation:
		$$
			\pb{v}{\zee{}} = \sum_{I \subseteq \bar{J}} c_I \prod_{i \in I} (-1)^{\zee{i}}, \qquad \text{where } c_I = \expt_{\zee{}:\, \zee{} \text{ extends } \zeee} \left[\pb{v}{\zee{}} \prod_{i \in I} (-1)^{\zee{i}}  \right].
		$$
		We have $|\pb{v}{\zee{}} - \pb{v}{\zeee{}} | \leq \sum_{\varnothing \neq I \subseteq \bar{J}} |c_I|$.
		Note that 
		$$
			c_I = \expt_{\zee{}:\, \zee{} \text{ extends } \zeee} [ (-1)^{\oplus \zee{I}} \pb{v}{\zee{}} ] = 
			\expt_{\substack{\zeee':\, \zeee' \text{ has support } \bar{I} \\ \zeee' \text{ extends } \zeee}} \left[ \expt_{\zee{}:\, \zee{} \text{ extends } \zeee'} [ (-1)^{\oplus \zee{I}} \pb{v}{\zee{}} ] \right].
		$$
		Therefore $|c_I| \leq \expt_{\substack{\zeee':\, \zeee' \text{ has support } \bar{I} \\ \zeee' \text{ extends } \zeee}} [ \pot{I}{\zeee'}{v} ]$ by Lemma~\ref{lem:leqpot}.
	\end{proof}
	
	\paragraph{Induced classical algorithm.}
	We now develop a classical algorithm, denoted as $\algo^{\star}$, based on $T$, which is hoped to compute $\func{BoolSimon}(\zee{})$.
	The algorithm runs as follows.
	Define $J^+(v) = J(u)$ for any internal node $v$ with any of its child $u$. 
	We maintain a ``current node'' $v$.
	Initially $v$ is the root of $T$.
	Each time we (i) query $\zee{i}$ (if we still do not know its value) for every index $i \in J^+(v)$, and then (ii) set $v \leftarrow u$ for a child $u$ of $v$, with probability $\pb{u}{\zeee}/\pb{v}{\zeee}$, where $\zeee$ is a partial assignment that contains all queried bits of $\zee{}$.
	The iteration stops in two cases:
	if $v$ is a leaf, then return its value;
	if the bits queried suffices to determine $\func{BoolSimon}(\zee{})$, i.e., they form a certificate, then return it.
	Recall that $h$ denotes the depth of $T$ and that
	we have $\sum_{1 \leq i \leq n} \inv{v}{i} = 1$ for any node $v$, which implies $|J_{\rm min}(v)| \leq h/\log n$, which implies $|J(v)| \leq \bigo(|J_{\rm min}(v)| \log n) \leq \bigo(h)$.
	Thus the query complexity of the algorithm is $\bigo(h)$.
	
	To analyze the behavior of $\algo^{\star}$ we introduce the following notion.
	For any node $v$, let $\zeee$ be a partial assignment over support $J(v)$.
	We say $v$ is \emph{\bad{\zeee}} if (i) $\diff{\zeee}{v} \geq \frac{1}{n^2} \pb{v}{\zeee}$ and (ii) $\zeee$ is not a certificate for $\func{BoolSimon}$ and (iii) none of $v$'s ancestor $u$ is \bad{\zeee'}, where $\zeee' := \res{\zeee}{J(u)}$.
	Intuitively, if on input $\zee{}$, the algorithm $\algo^{\star}$ has not reached any \bad{\zeee} node at the end with high probability (here the randomness comes from $\algo^{\star}$ itself), where $\zeee$ is defined in the description of $\algo^{\star}$, then it will return $\func{BoolSimon}(\zee{})$ with probability about $2/3$ because $T$ computes $\func{BoolSimon} \circ \func{Xor}$ with probability at least $2/3$.
	We formalize this intuition as follows.
	Recall from Section~\ref{subsec:or.proof} that a cut $S$ of $T$ is a set of nodes in $T$ such that any path connecting the root and a leaf passes through exactly one node in $S$.
	
	\begin{lemma} \label{lem:afprob}
		Let $\zee{}$ be a legal input of $\func{BoolSimon}$.
		$\algo^{\star}(\zee{})$ returns $\func{BoolSimon}(\zee{})$ with probability at least $2/3 - \sum_{v:\, \text{\normalfont \bad{\zeee}}} \pb{v}{\zee{}} - \bigo(1/n)$, where $\zeee := \res{\zee{}}{J(v)}$.
	\end{lemma}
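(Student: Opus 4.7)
The plan is to reason about the trajectory of $\algo^{\star}(\zee{})$, compared with the ideal walk induced by $T$ on a uniform random $\exs$ with $\map(\exs)=\zee{}$ (which reaches each leaf $\ell$ with probability $\pb{\ell}{\zee{}}$). Since every early termination of $\algo^{\star}$ at a certificate returns $\func{BoolSimon}(\zee{})$ (the queried bits form a certificate consistent with $\zee{}$), $\algo^{\star}$ can only fail by reaching a wrong-value leaf of $T$; this already restricts attention to paths on which no node $v$ has $\zeee_v:=\res{\zee{}}{J(v)}$ being a certificate. The key observation is that on such a path, if additionally no node is \bad{\zeee_v}, then condition~(i) in the definition of \bad{\zeee} must fail at every node $w$ on the path---otherwise that node would have (i),~(ii),~(iii) all satisfied and would itself be bad---so $\diff{\zeee_w}{w}\le(1/n^2)\pb{w}{\zeee_w}$ at every $w$ on the path.

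The heart of the argument is a one-step comparison of transition probabilities. Monotonicity of $J$ along root-to-leaf paths together with $J^{+}(v)=J(u)$ for every child $u$ of $v$ implies that the partial assignment used by $\algo^{\star}$ at the moment of transitioning from $v$ to a child $u$ is exactly $\zeee_u$. Assuming the path is ``good'' in the above sense, Lemma~\ref{lem:diff} gives $|\pb{u}{\zee{}}-\pb{u}{\zeee_u}|\le(1/n^2)\pb{u}{\zeee_u}$; for $v$, I reopen the Fourier expansion inside the proof of Lemma~\ref{lem:diff} and observe that $\pb{v}{\zeee_u}-\pb{v}{\zeee_v}=\sum_{\varnothing\neq I\subseteq J(u)\setminus J(v)}c_I\prod_{i\in I}(-1)^{(\zeee_u)_i}$ is a partial Fourier sum whose absolute value is still at most $\diff{\zeee_v}{v}\le(1/n^2)\pb{v}{\zeee_v}$; combined with $|\pb{v}{\zee{}}-\pb{v}{\zeee_v}|\le(1/n^2)\pb{v}{\zeee_v}$, this yields $|\pb{v}{\zee{}}-\pb{v}{\zeee_u}|=O(1/n^2)\pb{v}{\zee{}}$. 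Consequently the algorithm's ratio $\pb{u}{\zeee_u}/\pb{v}{\zeee_u}$ differs from the tree's ratio $\pb{u}{\zee{}}/\pb{v}{\zee{}}$ by a multiplicative factor $1\pm O(1/n^2)$, and telescoping along a good path of length at most $h\le\sqrt{n}/\log^{10}n$ gives $\Pr_{\algo^{\star}}[\text{reach }\ell]\in\pb{\ell}{\zee{}}\bigl(1\pm O(1/n^{1.5})\bigr)$ for every leaf $\ell$ reachable by $\algo^{\star}$ with no bad ancestor.

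Splitting $\Pr_{\algo^{\star}}[\text{wrong}]=\sum_{\ell\ \text{wrong}}\Pr_{\algo^{\star}}[\text{reach }\ell]$ into wrong leaves with no bad ancestor and wrong leaves with a bad ancestor: the first contributes at most $(1+O(1/n^{1.5}))\sum_{\ell\ \text{wrong}}\pb{\ell}{\zee{}}\le\tfrac13+O(1/n^{1.5})$ by correctness of $T$; the second is bounded by $\sum_{v\ \text{bad}}\Pr_{\algo^{\star}}[\text{reach }v]$, where condition~(iii) in the definition of \bad{\zeee_v} combined with $v$ being reached by $\algo^{\star}$ (which rules out cert ancestors on its path) means the path to $v$ is itself good, so the single-step argument gives $\Pr_{\algo^{\star}}[\text{reach }v]\le\pb{v}{\zee{}}(1+O(1/n^{1.5}))$. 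Adding the two pieces and noting $\sum_{v\ \text{bad}}\pb{v}{\zee{}}\le 1$ (since bad nodes form an antichain) yields $\Pr_{\algo^{\star}}[\text{wrong}]\le\tfrac13+\sum_{v\ \text{bad}}\pb{v}{\zee{}}+O(1/n^{1.5})$, which is the claimed bound since $O(1/n^{1.5})\subseteq O(1/n)$. The main obstacle is the Fourier step bounding $|\pb{v}{\zee{}}-\pb{v}{\zeee_u}|$: the non-badness of $v$ provides Lemma~\ref{lem:diff} only with respect to $\zeee_v$, and one must reopen the Fourier estimate $|c_I|\le\expt_{\zeee'}[\pot{I}{\zeee'}{v}]$ used inside its proof to recover a comparable bound for the refinement $\zeee_u$.
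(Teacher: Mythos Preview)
Your overall structure matches the paper's: compare the trajectory of $\algo^{\star}$ to the ideal walk, establish a one-step multiplicative comparison of transition ratios, and telescope along root-to-leaf paths. Two points deserve comment.

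First, the step you flag as the ``main obstacle'' is not one. To control $|\pb{v}{\zee{}}-\pb{v}{\zeee_u}|$ you reopen the Fourier expansion inside Lemma~\ref{lem:diff}, but the paper simply observes that $\pb{v}{\zeee_u}$ is an average of $\pb{v}{\zee{}'}$ over those $\zee{}'$ extending $\zeee_u$, each of which also extends $\zeee_v$; since Lemma~\ref{lem:diff} already gives $\pb{v}{\zee{}'}/\pb{v}{\zeee_v}\in\exp(\pm 1/n^2)$ for every such $\zee{}'$ whenever $v$ is good, the same bound holds for the average $\pb{v}{\zeee_u}$. Your Fourier computation is correct but unnecessary.

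Second, and more seriously, the bound $\Pr_{\algo^{\star}}[\text{reach }v]\le\pb{v}{\zee{}}(1+O(1/n^{1.5}))$ for a $\zeee_v$-bad node $v$ is not justified. Condition~(iii) ensures that all \emph{proper ancestors} of $v$ are good, so telescoping works up to the parent $w$ of $v$ and gives $p_w\approx\pb{w}{\zee{}}$; the final transition then yields $p_v\approx\pb{v}{\zeee_v}$. Your one-step comparison, however, needs the \emph{child} endpoint $u$ of each edge to satisfy $\pb{u}{\zeee_u}\approx\pb{u}{\zee{}}$, and when $u=v$ is bad this is exactly what fails: badness permits $\diff{\zeee_v}{v}$, and hence $|\pb{v}{\zee{}}-\pb{v}{\zeee_v}|$, to be of the same order as $\pb{v}{\zeee_v}$, so $\pb{v}{\zee{}}$ may be far smaller than $p_v$. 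There is no evident way to bound $\sum_{v\text{ bad}}\pb{v}{\zeee_v}$ by $\sum_{v\text{ bad}}\pb{v}{\zee{}}$ for a fixed $\zee{}$.

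The paper avoids this entirely by working from the other side: it lower-bounds the success probability, summing $p_v$ only over correct-label leaves and certificate-termination nodes, all of which lie in the pruned tree $T''$ and are therefore good (so $p_v\ge(1-O(1/n))\pb{v}{\zee{}}$ there). Since these nodes together with the bad nodes and the wrong-label leaves of $T''$ form a cut of $T$, one has
\[
\sum_{\substack{v\in S_{\rm leaf}\\ v\text{ good}}}\pb{v}{\zee{}}+\sum_{v\in S_{\rm exempt}}\pb{v}{\zee{}}
=1-\sum_{\substack{v\in S_{\rm leaf}\\ v\text{ not good}}}\pb{v}{\zee{}}-\sum_{v\in S_{\rm bad}}\pb{v}{\zee{}}
\ge \tfrac{2}{3}-\sum_{v\in S_{\rm bad}}\pb{v}{\zee{}},
\]
so the term $\sum_{v\text{ bad}}\pb{v}{\zee{}}$ enters through the cut identity rather than through any estimate on $p_v$ at bad nodes.
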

	\begin{proof}
		Let $T'$ be the tree induced by the root of $T$, after removing from $T$ nodes $S_{\rm bad}$, which consists of those nodes $v$ such that $v$ is \bad{\zeee} where $\zeee := \res{\zee{}}{J(v)}$.
		And let $T''$ be the tree induced by the root of $T'$, after removing from $T'$ descendants of nodes in $S_{\rm exempt}$, which consists of those internal nodes $v$ such that (i) $\zeee$ is a certificate for $f$ where $\zeee := \res{\zee{}}{J^+(v)}$ and (ii) none of $v$'s ancestor satisfies condition (i).
		Then, we have $\diff{\zeee}{v} \le \frac{1}{n^2} \pb{v}{\zeee}$ for each node $v$ in $T''$.
		Note that this implies $\pb{v}{\zee{}'}/\pb{v}{\zeee} \in \exp(\pm 1/n^2)$ for any $\zee{}'$, by Lemma~\ref{lem:diff}.
		Besides, $S_{\rm bad} \cup S_{\rm exempt} \cup S_{\rm leaf}$ is a cut of $T$, where $S_{\rm leaf}$ consists of those leaves of $T$ in $T''$.
		Define $p_v$ as the probability that $\algo^{\star}$ reaches node $v$ in $T$ on input $\zee{}$.
		
		\begin{claim}
			Let $v$ be any internal node in $T''$ of depth $h$.
			Then, $p_v/\pb{v}{\zee{}} \in \exp(\pm 3h/n^2)$.
		\end{claim}
		\begin{proof}
			Prove by induction.
			Let $v$ be any internal node in $T''$ and $u$ be any of its children.
			Denote by $h$ the depth of $v$.
			Assume $p_v/\pb{v}{\zee{}} \in \exp(\pm 3h/n^2)$. 
			Let $\zeee := \res{\zee{}}{J(v)}$ and $\zeee' := \res{\zee{}}{J(u)}=\res{\zee{}}{J^+(v)}$.
			Now that $\pb{v}{\zee{}'}/\pb{v}{\zeee} \in \exp(\pm 1/n^2)$ for any $\zee{}'$, we have $\pb{v}{\zeee'}/\pb{v}{\zeee} \in \exp(\pm 1/n^2)$ because $\pb{v}{\zeee'} = \expt_{\zee{}':\,\zee{}' \text{ extends } \zeee}[\pb{v}{\zee{}'}]$.
			Therefore,
			$$
				\frac{p_u}{\pb{u}{\zee{}}} = \frac{p_v (\pb{u}{\zeee'}/\pb{v}{\zeee'})}{\pb{u}{\zee{}}} = \frac{p_v}{\pb{v}{\zee{}}} \cdot \frac{\pb{v}{\zee{}}}{\pb{v}{\zeee}} \cdot \frac{\pb{v}{\zeee}}{\pb{v}{\zeee'}} \cdot \frac{\pb{u}{\zeee'}}{\pb{u}{\zee{}}} \in \exp(\pm (3h+3)/n^2).
			$$
		\end{proof}
		By assumption $h \leq n$, we thus obtain $p_v/\pb{v}{\zee{}} \in 1 \pm \bigo(1/n)$ for any node $v$ in $T''$.
		Call a leaf of $T$ \emph{good} if its value is $\func{BoolSimon}(\zee{})$.
		Since $T$ computes $\func{BoolSimon} \circ \func{Xor}$ with probability at least $2/3$, $\sum_{\substack{v:\, \text{leaf of $T$} \\ v \text{ is not good}}} \pb{v}{\zee{}} \leq 1/3$.
		$\algo^{\star}$ returns $\func{BoolSimon}(\zee{})$ when it finally reaches a good leaf or a node in $S_{\rm exempt}$.
		Therefore the success probability is at least 
		\begin{align*}
			\sum_{\substack{v \in S_{\rm leaf} \\ v \text{ is good}}} p_v + \sum_{v \in S_{\rm exempt}} p_v 
			\geq & \left(1 - \bigo\left( \frac1n \right)\right) \left( \sum_{\substack{v \in S_{\rm leaf} \\ v \text{ is good}}} \pb{v}{\zee{}} + \sum_{v \in S_{\rm exempt}} \pb{v}{\zee{}} \right) \\
			= & \left(1 - \bigo\left( \frac1n \right)\right) \left( 1- \sum_{\substack{v \in S_{\rm leaf} \\ v \text{ is not good}}} \pb{v}{\zee{}} - \sum_{v \in S_{\rm bad}} \pb{v}{\zee{}} \right) \\
			\geq &  \left(1 - \bigo\left( \frac1n \right)\right) \left( 1-\frac13 - \sum_{v \in S_{\rm bad}} \pb{v}{\zee{}} \right). 
		\end{align*}
	\end{proof}
	
	\paragraph{Correctness of $\algo^{\star}$.}
	Denote $\zeee = \res{\zee{}}{J(v)}$.
	By Lemma~\ref{lem:diff}, if we can show that $ \sum_{v:\,v \text{ \normalfont is \bad{\zeee}}} \pb{v}{\zee{}}$ is small for all legal input (of $\func{BoolSimon}$) $\zee{}$, then $\algo^{\star}$ computes $\func{BoolSimon}$ with good probability.
	Unfortunately this seems difficult.
	However, it suffices to prove that $\expt_{\zee{}:\, \text{legal}} [ \sum_{v:\, \text{\normalfont \bad{\zeee}}} \pb{v}{\zee{}}]$ is small because $\func{BoolSimon}$ is still hard to compute classically when the input is chosen from $\{\zee{} : \zee{} \text{ is legal}\}$ uniformly at random, i.e., $\clad{\func{BoolSimon}} = \tomega(\sqrt{n})$.
	
	\begin{lemma} \label{lem:afgood}
		$\algo^{\star}$ computes $\func{BoolSimon}$ with probability at least $2/3 - \bigo(1/n)$, when the input is chosen from all legal inputs uniformly at random.
	\end{lemma}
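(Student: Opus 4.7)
The idea is to take expectations in Lemma~\ref{lem:afprob} and show that $\expt_{\zee{}\text{ legal}}\big[\sum_{v:\,\text{\bad{\zeee}}}\pb{v}{\zee{}}\big]=\bigo(1/n)$ for $C$ sufficiently large, where as in Lemma~\ref{lem:afprob} we write $\zeee:=\res{\zee{}}{J(v)}$. For any \bad{\zeee} node $v$, condition (i) combined with Lemma~\ref{lem:diff} gives $\pb{v}{\zee{}}\leq \pb{v}{\zeee}+\diff{\zeee}{v}\leq (n^2+1)\diff{\zeee}{v}$. Because bad nodes form an antichain by (iii), their depth-$d$ slice is contained in the set of depth-$d$ nodes whose $\zeee$ is not a certificate, so
\[
\expt_{\zee{}}\Big[\sum_{v:\,\text{\bad{\zeee}}}\pb{v}{\zee{}}\Big]\leq (n^2+1)\sum_{d=1}^{h}\sum_{v\,\text{at depth }d}\expt_{\zee{}}\big[\one{\zeee\text{ not cert}}\,\diff{\zeee}{v}\big].
\]

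The inner sum is controlled by two ingredients. First, for a fixed nonempty $I$ and $\zeee'$, define the normalized potential $\widetilde\Psi_{I,\zeee'}(v):=\pot{I}{\zeee'}{v}\big/\prod_{u\text{ strict ancestor of }v}(1+3\sum_{i\in I}\inv{u}{i})$. Lemma~\ref{lem:grow} makes $\widetilde\Psi$ superharmonic, so for any cut $S$ of $T$, $\sum_{v\in S}\widetilde\Psi_{I,\zeee'}(v)\leq \pot{I}{\zeee'}{r}\leq n^{-\Omega(C|I|)}$. For $v$ with $I\subseteq \bar{J(v)}$ each $i\in I$ lies outside $J_{\min}(v)$, hence $\sum_u\inv{u}{i}<\log n$ and $\pot{I}{\zeee'}{v}\leq n^{3|I|}\widetilde\Psi_{I,\zeee'}(v)$; taking $C$ large and the cut $S_d:=\{v\text{ at depth }d\}\cup\{\text{leaves of depth}<d\}$ gives $\sum_{v\in S_d,\,I\subseteq \bar{J(v)}}\pot{I}{\zeee'}{v}\leq n^{-\Omega(C|I|)}$, uniformly in $\zeee'$. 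Second, since $|J(v)|\leq \bigo(h\log n)\ll \sqrt{nk}$, the hypothesis of Lemma~\ref{lem:simonin} holds and yields $\Pr_{\zee{}\text{ legal}}[\zee{}|_{J(v)}=\bar\zeee]\leq \poly(n)\cdot 2^{-|J(v)|}$ for every non-certificate $\bar\zeee$, i.e., at most $\poly(n)$ times the uniform probability. Expanding $\diff{\zeee}{v}=\sum_{I\neq\varnothing}\expt_{\zeee'}[\pot{I}{\zeee'}{v}]$ and noting that the random extension $\zeee'$ is uniform independent on $\bar I\setminus J(v)$, this gives
\[
\expt_{\zee{}}\big[\one{\zeee\text{ not cert}}\,\expt_{\zeee'}[\pot{I}{\zeee'}{v}]\big]\leq \poly(n)\,\expt_{\zeee'\text{ uniform on }\bar I}\big[\pot{I}{\zeee'}{v}\big].
\]

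Combining the two estimates, swapping the order of summation with the uniform expectation, and summing the geometric series $\sum_{I\neq\varnothing}n^{-\Omega(C|I|)}\leq n^{-\Omega(C)}$,
\[
\sum_{v\,\text{at depth }d}\expt_{\zee{}}\big[\one{\zeee\text{ not cert}}\,\diff{\zeee}{v}\big]\leq \poly(n)\,n^{-\Omega(C)}.
\]
Summing over $d=1,\dots,h\leq \sqrt{n}/\log^{10}n$ and multiplying by $n^2+1$ yields the promised $\bigo(1/n)$ bound once $C$ is taken sufficiently large, and plugging this into Lemma~\ref{lem:afprob} completes the proof. I expect the main obstacle to be the bookkeeping that keeps the three layers of randomness straight (the legal $\zee{}$, the random extension $\zeee'$, and the cut over $T$); one must apply the cut bound from Lemma~\ref{lem:grow} uniformly in $\zeee'$ \emph{before} averaging, and exploit that the non-certificate condition does double duty---enabling Lemma~\ref{lem:simonin} to flatten the legal distribution to uniform while simultaneously guaranteeing $|J(v)|$ is small enough for Lemma~\ref{lem:simonin}'s hypothesis to apply.
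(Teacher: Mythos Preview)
Your proof is correct and follows essentially the same route as the paper's: bound $\pb{v}{\zee{}}\le (n^2+1)\diff{\zeee}{v}$ on bad nodes, expand $\diff{\zeee}{v}$ over $(I,\zeee')$, use Lemma~\ref{lem:simonin} (the paper's Claim~\ref{cla:temp2}) to replace the legal distribution by uniform up to $\poly(n)$, and use the supermartingale form of Lemma~\ref{lem:grow} (the paper's Claim~\ref{cla:temp1}) for the cut bound. The one organizational difference is that the paper exploits condition~(iii) directly---for each fixed $(I,\zeee')$ the contributing bad nodes already form an antichain, so a \emph{single} application of the cut bound suffices---whereas you drop the badness constraint entirely (keeping only ``$\zeee$ not a certificate''), stratify by depth, and apply the cut bound once per level; this costs you a harmless extra factor of $h\le\sqrt{n}$ that is absorbed into $\poly(n)$. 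Two small slips worth cleaning up: your invocation of the antichain property is a red herring since your displayed inequality only uses condition~(ii), and $|J(v)|=\bigo(h)$ rather than $\bigo(h\log n)$ (the $\log n$ blow-up from $J_{\min}$ to $J$ cancels the $1/\log n$ in $|J_{\min}(v)|\le h/\log n$); neither affects the argument.
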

	\begin{proof}
		A node $v$ is $\zeee$-bad implies $n^2 \diff{\zeee}{v} \geq \pb{v}{\zeee}$, which implies $(n^2+1) \diff{\zeee}{v} \geq \pb{v}{\zee{}}$ for any $\zee{}$ due to Lemma~\ref{lem:diff}.
		Hence,
		\begin{align*}
			\expt_{\zee{}:\, \text{legal}} \left[ \sum_{v:\, \text{\normalfont  \bad{\zeee}}} \pb{v}{\zee{}} \right]
			& \leq  (n^2+1) \expt_{\zee{}:\, \text{legal}} \left[ \sum_{v:\, \text{\normalfont  \bad{\zeee}}} \diff{\zeee}{v} \right] \\
			& =   (n^2+1) \expt_{\zee{}:\, \text{legal}} \left[ \sum_{v:\, \text{\normalfont  \bad{\zeee}}} \sum_{\varnothing \neq I \subseteq \overline{J(v)}} \expt_{\substack{\zeee':\, \zeee' \text{ \rm has support } \bar{I} \\ \zeee' \text{ \rm extends } \zeee}} [ \pot{I}{\zeee'}{v} ] \right] \\
			& =   (n^2+1) \expt_{\zee{}:\, \text{legal}} \left[ \sum_{v:\, \text{\normalfont  \bad{\zeee}}} \sum_{\varnothing \neq I \subseteq \overline{J(v)}} \frac{1}{2^{n-|I|-|J(v)|}}\sum_{\substack{\zeee':\, \zeee' \text{ \rm has support } \bar{I} \\ \zeee' \text{ \rm extends } \zeee}} [ \pot{I}{\zeee'}{v} ] \right] \\
			& =   (n^2+1) \sum_{\substack{\varnothing \neq I \subseteq \{1,\dots,n\} \\ \zeee':\, \zeee' \text{ \rm has support } \bar{I} \\ v:\, \text{\normalfont  \bad{\zeee}, } \zeee := \res{\zeee'}{J(v)} \\  I \subseteq \overline{J(v)}} } \left( \pot{I}{\zeee'}{v}  \cdot \frac{1}{2^{n-|I|-|J(v)|}} \Pr_{\zee{}:\, \text{legal}}[\text{$z$ extends $\zeee$}] \right).
		\end{align*}
		The right-hand side of the formula above is a linear combination of $ \pot{I}{\zeee'}{v}$.
		$\pot{I}{\zeee'}{v}$ makes contribution only if $v$ is \bad{\zeee} with $\zeee := \res{\zeee'}{J(v)}$.
		By definition of \bad{\zeee}, those $v$, such that $\pot{I}{\zeee'}{v}$ makes contribution, form a subset of a cut of $T$, i.e., there do not exist two nodes such that one node is the ancestor of the other one.
		Furthermore, these $v$ satisfy that $\overline{J_{\min}(v)} \supseteq \overline{J(v)} \supseteq I$.
		Finally, the coefficient of term $\pot{I}{\zeee'}{v}$ is no more than $\poly(n)$ by Claim~\ref{cla:temp2}.
		Along with Claim~\ref{cla:temp1}, we get 
		$$ 
			\expt_{\zee{}:\, \text{legal}} \left[ \sum_{v:\, \text{\normalfont \bad{\zeee}}} \pb{v}{\zee{}} \right] \leq (n^2+1) \sum_{I,\zeee'} 1/n^{\Omega(C|I|)} \cdot \poly(n) \leq 1/n^{\Omega(C)},
		$$
		if $C$ is sufficiently large.
		As the result, we are done due to Lemma~\ref{lem:afprob}.
		
		\begin{claim} \label{cla:temp1}
			Let $S$ be a cut of $T$ such that $\overline{J_{\min}(v)} \supseteq I$ for all $v\in S$.
			Then, $\sum_{v \in S} \pot{I}{\zeee}{v} = 1/n^{\Omega(C|I|)}$.
		\end{claim}
		\begin{claim} \label{cla:temp2}
			Let $\zeee := \res{\zeee'}{J(v)}$.
			Then, $2^{|J(v)|} \Pr_{\zee{}:\, \text{\normalfont legal}}[\text{\normalfont $z$ extends $\zeee$}] \leq \poly(n)$.
		\end{claim}
		\begin{proof}[Proof of Claim~\ref{cla:temp1}]
			Let $T'$ be the tree induced by those nodes $v$ of $T$ such that $\overline{J(v)} \supseteq I$.
			Define $\pott{v} := \pot{I}{\zeee}{v} \exp \left( 3|I| \log n - 3 \sum_{\substack{i \in I \\ u:\, \text{ancestor of $v$}}} \inv{u}{i} \right)$.
			Let $v$ be any node in $T'$.
			$\pott{v} \geq \pot{I}{\zeee}{v}$ because $\overline{J_{\min}(v)} \supseteq I$.
			On the other hand, $\sum_{u:\, \text{child of $v$}} \pott{u} \leq \pott{v}$ by Lemma~\ref{lem:grow}.
			Therefore $\sum_{v \in S} \pot{I}{\zeee}{v} \leq \sum_{v \in S} \pott{v} \leq \pott{r} = \pot{I}{\zeee}{r} \exp(3|I| \log n) = 1/n^{\Omega(C|I|)}$, where $r$ is the root of $T$.
		\end{proof}
		\begin{proof}[Proof of Claim~\ref{cla:temp2}]
			$\func{BoolSimon}$ is the Boolean version of $\func{LSBSimon}$.
			Therefore, $\zeee$ corresponds to a partial assignment on the input of $\func{LSBSimon}$, because of the way we construct $J(v)$.
			Then the claim follows by Lemma~\ref{lem:simonin}.
		\end{proof}
	\end{proof}
	
	As discussed before, the complexity of $\algo^{\star}$ is $\bigo(h)$.
	Combining Lemma~\ref{lem:afgood} and Lemma~\ref{lem:bsimon} we have $\clad{\func{BoolSimon}} \in \bigo(h) \cap \tomega(\sqrt{n})$.
	Therefore $h = \tomega(\sqrt{n})$.
	
	\subsection{Discussion} \label{subsec:simon.disc}
	
	The proof in the previous section settles a special case of Conjecture~\ref{conj:main}, that $f = \func{BoolSimon}$ and $q = 1$.
	Here we discuss the possibility of adapting the proof to more general cases, and problems one may face if doing so.
	
	\paragraph{General $f$.}
	Let us see what would happen if one directly apply the proof for an arbitrary partial Boolean function $f$.
	First, $J:=J_{\rm min}(v)$ seems to be the choice when approximating $\pb{v}{\zee{}}$ in Lemma~\ref{lem:diff} because we no longer know any structure of $f$ a priori.
	Then, everything would go smoothly except Lemma~\ref{lem:afgood} or more exactly, Claim~\ref{cla:temp2}.
	The claim holds originally because $\func{BoolSimon}$ is the Boolean version of $\func{LSBSimon}$, which has a nice property that before a classical decision tree can determine $\func{LSBSimon}(y)$ with good probability, the known information about $y$ looks like if \emph{$y$ is just drawn from $\{0,\dots,K-1\}^K$ uniformly at random} (Lemma~\ref{lem:simonin}).\footnote{We suspect that $\func{For}$ or its variant may have this property as well. If so, the proof directly applies for the case $f = \func{For}$.}
	
	There do exist a way to adapt the proof so that we can handle more cases such as $f = \func{Or}$, but it seems still not enough for every $f$.
	The modification is that one first identify the hard distribution of $f$, in the sense that $f$ is still hard to compute classically even the input is sampled from this distribution, via Yao's minimax principle~\cite{DBLP:conf/focs/Yao77,DBLP:journals/tcs/Vereshchagin98}.
	\newcommand{\bz}{\boldsymbol{\zee{}}}
	Then, redefine $\pb{v}{\zeee} := \sum_{\zee{}} \pb{v}{z} \Pr_{\bz}[\bz = z \;|\; \bz \text{ extends } \zeee]$, where $\bz$ is a random variable sampled from the hard distribution.
	Lemma~\ref{lem:diff} should be also modified such that the term on the right-hand side should depend on $\zee{}$, i.e., we need a more fine-grained analysis that cares about each $\zee{}$ individually.
	The concept \bad{\zeee} should be upgraded to \bad{(\zeee,\zee{})}.
	And all other relevant details should be modified accordingly.
	
	\paragraph{Larger $q$.}
	Arguably we are more interested in this one, because in order to prove an oracle separation version of Conjecture~\ref{conj:ours}, it suffices to set $f = \func{BoolSimon}$ and $q = C \log^k n$ in Conjecture~\ref{conj:main}.
	Suppose one is going to directly generalize the proof for some $q>1$.
	The first issue is to redetermine $\inv{v}{i}$ since this task is no longer that trivial as we did in Section~\ref{subsec:simon.idea}.
	We believe this is doable for $q=\bigo(1)$, e.g., one may set $\inv{v}{i}$ to be some weighted average of $\sum_{1 \leq j \leq m} \qm{\algo_v}{i,j}{\exs}$ and use a more complicated analysis instead of Lemma~\ref{lem:grow}.
	
	However, there is a technical barrier for the current method when we study non-constant $q = q(n)$.
	The culprit is the term $o_q(m)$ in Hypothesis~\ref{hypo:main}: if it is actually, say, $o(m/2^q)$, then we cannot prove a bound better than
	$
		\hyb{f_n \circ \func{Xor}_{C q \log n}}{q} = \Omega(\cla{f_n} \cdot q \log n / 2^q).
	$
	As a result, the bound becomes completely trivial when $q \gg \log n$.
	
	Unfortunately, this is indeed the case.
	Let $q < \log n$.
	Suppose $T$ is a hybrid decision tree, in which each depth-$h$ node holds an algorithm that, on input $\exs$ (of size $nm$), (i) queries the first $q-1$ bits (in the $1$st block) to get a number $0 \leq i \leq 2^{q-1}-1$ and then (ii) queries $\ex{i+2}{h}$.
	It follows that $T$ will ``know'' of $\zee{i} = \bigoplus_{1 \leq j \leq m}\ex{i}{j}$ at depth $m$, for some $i$ dependent on the $1$st block.
	However, it seems that any ``reasonable'' assignment of $\inv{v}{i}$ should satisfy that $\inv{v}{2}=\inv{v}{3}=\dots=\inv{v}{2^{q-1}+1}=\bigo(1/2^q)$.
	Therefore, even an approximate version of Hypothesis~\ref{hypo:main} would be incorrect if we replace the term $o_q(m)$ by, say, $o(m/(1.9)^q)$.
	Hence at least a new technical ingredient is necessary.
	
	\section*{Acknowledgement}
	We thank Jiaqing Jiang, Yun Sun and Zhiyu Xia for helpful discussions.
		
	
	\bibliographystyle{alpha}
	\bibliography{reference}

\end{document}